	\def\mref#1#2{\hyperref[#2]{#1~\ref*{#2}}}
	\def\meqref#1#2{\hyperref[#2]{#1~(\ref*{#2})}}
	\def\mref#1#2{#1~\ref{#2}}
	\def\meqref#1#2{#1~\eqref{#2}}
\def\bigdissum{\sum}
\def\indep{\perp}
\def\alg{\mathcal}
\def\m{\mathfrak{m}}
\def\T#1#2{\mathcal{T}_{#1#2}} %{\mathbin{ltimes}\{#1,#2\}}
\def\symdiff{\vartriangle}
\def\tendsto{\longrightarrow}
\def\ssum_#1{\smashoperator{\sum_{#1}}}
\def\sbigdissum_#1{\smashoperator{\bigdissum_{#1}}}
\def\varkappa{\kappa}
\newcommand{\defem}[1]{\textbf{#1}}
\renewcommand{\emph}{\textit}
\newcommand{\dash}{\nobreakdash-\hspace{0pt}}
\DeclarePairedDelimiter\abs{\lvert}{\rvert}
\DeclarePairedDelimiter\norm{\lVert}{\rVert}
\newcommand{\clint}[2]{ \left[ #1, #2 \right] }
\newcommand{\opint}[2]{ \left( #1, #2 \right) }
\newcommand{\clopint}[2]{ \left[ #1, #2 \right) }
\def\bold{\mathds} 
\def\bld{\mathbf} 
\newcommand{\Z}{ \bold{Z} }
\renewcommand{\P}{ \bold{P} }
\newcommand{\Q}{ \bold{Q} }
\newcommand{\R}{ \bold{R} }
\newcommand{\tim}{\cdot}
\newtheorem{thm}{Theorem}
\newtheorem{rmk}{Remark}
\newtheorem{crr}{Corollary}
\newtheorem{con}{Conjecture}
\newtheorem{questions}{Questions}
\newtheorem{lem}{Lemma}
\newtheorem{df}{Definition}
\def\B{\mathcal{B}}
\def\bra#1{\left \langle #1 \right |}
\def\ket#1{\left | #1 \right \rangle}
\def\tr{\mathop{\mathrm{tr}}}
\def\proj{\mathop{\mathrm{Proj}}}
\def\bor{\mathop{\mathrm{Borel}}}
\def\mt{\widetilde \m}
\def\new#1{#1}
\begin{document}

	\begin{abstract}
		 We investigate the following generalisation of the entropy
		of quantum measurement. Let $H$ be an infinite-dimensional
		separable Hilbert space with a `density' operator $\rho$, $\tr \rho =1$.
		Let $I(\bold P)\in \R$ be defined for any partition
		$\bold P = (P_1,\ldots,P_m)$, $P_1+\ldots+P_m=1_H$, $P_i \in \proj H$
		and let $I(P_iQj, i\le m, j\le n) = I(\bold P) + I(\bold Q)$ for
		$\bold Q =(Q_1,\ldots, Q_n)$, $\sum Q_j = 1_H$ and
		$P_iQ_j  = Q_j P_i$, $\tr \rho P_iQ_j = \tr \rho P_i \tr \rho Q_j$
		($\bold P$, $\bold Q$ are physically independent). Assuming some
		continuity properties we give a general
		form of generalised information
		$I$, \mref{Theorem}{thm:main}, \meqref{formula}{eqn:thm:main}.
	\end{abstract}

	\title{On quantum information}
\author[A. Paszkiewicz]{Adam Paszkiewicz}
\email{ktpis@math.uni.lodz.pl}
		\address{
			Faculty of Mathematics and Computer Science\\
			University of Łódź\\
			ul. Banacha 22, 90-238 Łódź \\
			Poland\\}
\author[T. Sobieszek]{Tomasz Sobieszek}
\email{sobieszek@math.uni.lodz.pl}
\urladdr{http://sobieszek.co.cc}

\thanks{This paper is partially supported by Grant nr N N201 605840.}
\keywords{Quantum information, generalised entropy, quantum measurement, quantum state, Gleason theorem}
\subjclass{Primary 81P45; Secondary 60A10,47B65,94A17,46C07}

\maketitle
	\section{Preliminaries and main results}	\label{sec:prelim}

Throughout the paper we shall use the following notations. $H$ is an infinite dimensional seperable
Hilbert space. By $\rho$ we denote a fixed positive trace-class operator with $\tr \rho = 1$. It is convenient
to identify $\rho$ with the functional $B(H)\ni A\mapsto \tr \rho A$ i.e. write $\rho(A)$
for $\tr\rho A$. We will also consider
\[
	\alg P = \{P\in \proj H : P,P^\perp\text{ are both infinite dimensional}\} \cup \{0,1_H\}
\]
We shall always write $\P$ for a sequence $(P_1,\ldots, P_m)$ of orthogonal
projections with $P_1 + \cdots + P_m = 1_H, P_i \in \alg P$. Thus $P_i$ are mutually orthogonal.
Every such $\P$ will be callled a \defem{partition} of~$1_H$. Let $\Q = (Q_1, \ldots, Q_n)$ be another
partition of~$1_H$. We write $\P \indep \Q$ when $\P$ and
$\Q$ are \defem{physically independent} i.e. when $P_iQ_j=Q_jP_i$ and
$\rho(P_iQ_j)=\rho(P_i)\rho(Q_j)$. In such case we shall write
$\P\tim\Q = (P_iQ_j;i=1,\ldots, m, j=1,\ldots n)$.
	\par
The paper is devoted to the investigation of the following general notion.

		\begin{df}
	We say that a real function $I$, defined on partitions of~$1_H$ is \new{an \defem{additive quantum  infrmation} (or an \defem{information} for short)} if  does not depend on the order of $P_1,\ldots, P_m$ and if
	\[
		I(\P\tim\Q) = I(\P) + I(\Q)
	\]
	for any $\P\indep \Q$. \end{df}

%Obviously, we tacitly assume that $I(\P)$ 

		\begin{df}		\label{df:Icontinous}
	Information $I$ is \defem{continous} if for any mutually commuting
	$P,P_1,P_2,\ldots \in \alg P$ such that \new{$0<\rho(P)<1$, and} $\rho\big(\abs{P-P_n}\big)\tendsto 0$ we have
	\[
		I(P_n,P_n^\perp) \tendsto I(P,P^\perp).
	\]\end{df}

	\begin{df}		\label{df:Ibounded}
	Information $I$ is \defem{bounded} if for any $0 < \varepsilon < 1$ the set of values
	\[
		\{I(P,P^\perp): P\in \alg P, \rho(P)=\varepsilon\}
	\]
	is bounded. \end{df}

		\begin{df}
	A \new{real} function $I_s$ on probability distributions that is sequences
	$\bld p = (p_1,\ldots, p_m)$, $p_i \ge 0$, $\sum p_i=1$ is called a \defem{symmetric information} if
	it does not depend on the order of elements $p_i$ and if for
	any probability distributions $\bld p = (p_1,\ldots, p_m)$, $\bld q = (q_1,\ldots, q_n)$ we have
	\[
%	\begin{align*}
		I_s( \bld p \otimes \bld q)  = I_s(\bld p) + I_s(\bld q).	%\quad\text{and}\\
%		&I_s(p_{(\sigma(1)},\ldots,p_{\sigma(m)}) = I_s(p_1,\ldots,p_m)
%	\end{align*}
	\]
	By $\bld p \otimes \bld q$ we mean here a
	sequence $(p_i q_j: 1\le i \le m, 1\le j \le n)$.
	\end{df}

Given any function $I_s$ on probability distributions we shall write $(I_s\rho)(\P)$ for
$I_s(\rho(P_1), \ldots \rho(P_m))$.
	\par
Even though the notions of continuity and of boundedness of our information $I$ were introduced using
only $2$\dash element partitions $(P,P^\perp)$ they suffice to prove the following general
result.

		\begin{thm}	\label{thm:main}
	Let $I$ be any bounded, countinous information. There exists a self-adjoint trace-class
	operator $\mu$ with $\tr \mu = 0$ and a symmetric infromation $I_s$ such that
	\begin{equation}			\label{eqn:thm:main}
		I(\P) = (I_s\rho)(\P) + \ssum_{i=1}^m \tr \mu P_i \log \rho(P_i).
	\end{equation}
	\end{thm}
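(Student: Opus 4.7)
The plan is to extract the operator $\mu$ and the symmetric information $I_s$ from the restriction of $I$ to two-element partitions, then verify the full formula by a direct additivity computation. I set $\phi(P) := I(P, P^\perp)$ for $P\in \alg P$ with $0<\rho(P)<1$; the symmetry hypothesis gives $\phi(P)=\phi(P^\perp)$, and \mref{Definition}{df:Ibounded} together with \mref{Definition}{df:Icontinous} ensures that $\phi$ is bounded and continuous on each level set $\{P:\rho(P)=p\}$. The essential structural input is the two-by-two instance of additivity: whenever $P,Q\in\alg P$ commute with $\rho(PQ)=\rho(P)\rho(Q)$,
\[
	I(PQ,PQ^\perp,P^\perp Q,P^\perp Q^\perp) = \phi(P)+\phi(Q).
\]
Assuming the desired conclusion, this identity forces
\[
	\phi(P) = S(\rho(P)) + (\tr \mu P)\,\log\frac{\rho(P)}{1-\rho(P)}
\]
with $S(p):=I_s(p,1-p)$, using $\tr\mu=0$, so the problem reduces to recovering the linear functional $P\mapsto\tr\mu P$ from $\phi$ while the leftover $\rho(P)$-dependent piece will serve as the input for $I_s$.

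To extract $\mu$, I would fix $p\in(0,1)\setminus\{1/2\}$ and compare $\phi(P_1)$ with $\phi(P_2)$ for projections $P_1, P_2\in\alg P$ with $\rho(P_i)=p$. Applying the two-by-two identity with auxiliary $Q\in\alg P$ chosen to be simultaneously commuting with and physically independent of $P_1$ and $P_2$ (such $Q$ are abundant in the infinite-dimensional setting) yields additivity-type relations for the functional $P\mapsto \phi(P)-\phi(P_0)$ on the level set $\{\rho(P)=p\}$, where $P_0$ is a fixed reference. An appeal to a signed-charge version of Gleason's theorem on $\proj H$ (Bunce--Wright, or Dorofeev in the countably additive case) then gives that this functional extends to $P\mapsto (\tr\mu P)\log\frac{p}{1-p}$ for a self-adjoint bounded operator $\mu$; boundedness of $I$ ensures $\mu$ is trace-class and continuity ensures $\sigma$-additivity. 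The normalization $\tr\mu=0$ is arranged by absorbing a constant into $S$.

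With $\mu$ constructed, set
\[
	J(\P) := I(\P) - \ssum_{i=1}^m \tr\mu P_i\,\log\rho(P_i).
\]
A direct computation using $P_iQ_j=Q_jP_i$, the independence $\rho(P_iQ_j)=\rho(P_i)\rho(Q_j)$, and $\sum_j Q_j=1_H$ shows that $J(\P\tim\Q)=J(\P)+J(\Q)$, so $J$ is itself an information; moreover $J(P,P^\perp)=S(\rho(P))$ depends only on $\rho(P)$ by construction. It remains to show that $J(\P)$ depends on $\P$ only through the profile $(\rho(P_1),\dots,\rho(P_m))$ for arbitrary $m$; this is proved by connecting any two partitions with the same $\rho$-profile by chains of physically independent refinements and passing to limits via continuity. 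Setting $I_s(p_1,\dots,p_m):=J(\P)$ for any $\P$ realizing that profile then yields a symmetric information in the sense of the paper, completing the proof of \meqref{formula}{eqn:thm:main}.

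The main obstacle will be the Gleason-type step: unlike the classical Gleason setting, one does not begin with a countably additive measure on $\proj H$ but with the genuinely nonlinear function $\phi$, and the two-by-two identity must be exploited with care to manufacture an affine structure on each level set before these structures can be patched across values of $p$ to produce a single global trace-class $\mu$. Boundedness of $I$ is what guarantees $\mu$ is trace-class rather than merely bounded, and continuity is what upgrades the resulting charge to the $\sigma$-additive object to which Gleason applies.
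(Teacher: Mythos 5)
There is a genuine gap, and it sits exactly where you have placed your ``main obstacle'' disclaimer: the extraction of $\mu$ from $\phi(P)=I(P,P^\perp)$ cannot be carried out from the two-by-two additivity identity alone. Any Gleason-type theorem (classical, or the signed Bunce--Wright/Dorofeev versions you invoke) requires as input a function that is finitely additive on \emph{orthogonal} families of projections. But physical independence never relates $(P',P'^\perp)$ and $(P'',P''^\perp)$ for orthogonal $P',P''$ (there $\rho(P'P'')=0\neq\rho(P')\rho(P'')$), so the hypothesis on $I$ gives you no identity connecting $\phi(P'+P'')$ with $\phi(P')$ and $\phi(P'')$. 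Your ``additivity-type relations on the level set $\{\rho(P)=p\}$'' only compare values of $I$ on four-element partitions with sums $\phi(P_i)+\phi(Q)$; a level set is not closed under addition and carries no orthoadditive structure to which a Gleason theorem could apply. Moreover your derivation of the shape $\phi(P)=S(\rho(P))+(\tr\mu P)\log\tfrac{\rho(P)}{1-\rho(P)}$ is obtained by ``assuming the desired conclusion,'' i.e.\ it is circular. The paper manufactures the missing additive structure differently: it restricts $I$ along boolean structures $B:\bor\clopint01\to\alg P$, applies the commutative classification theorem (\mref{Theorem}{thm:commutative}, the main result of \cite{PS}) to each $I\circ B$ to obtain a genuine signed measure $\m_B$, and only then proves (\mref{Lemma}{lem:mBsame}, \mref{Lemma}{lem:misbounded}, \mref{Lemma}{lem:mextension}) that these measures glue into a single bounded countably additive $\m$ on $\alg P$, to which the classical \mref{Theorem}{thm:Gleason} applies after adding $M\dim$. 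Your outline uses neither of the two nontrivial inputs the paper is built on.

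The second gap is your final step. Showing that $J(\P)$ depends only on the profile $(\rho(P_1),\dots,\rho(P_m))$ --- equivalently that $I_s^B$ does not depend on the boolean structure $B$, and that $\m_B$ and $\m_{B_1}$ agree on a common projection --- is precisely the content of \mref{Theorem}{thm:connected}, which occupies all of \mref{section}{sec:connectedness} and is proved by a delicate two-stage construction (a subsequence extraction via weak convergence in \mref{Lemma}{lem:bigvee}--\mref{Lemma}{lem:firststage}, and a dilatation argument in \mref{Lemma}{lem:secondstage}--\mref{Lemma}{lem:thirdstage}). Dismissing it as ``connecting partitions by chains of physically independent refinements and passing to limits via continuity'' is not a proof; in particular the continuity hypothesis of \mref{Definition}{df:Icontinous} concerns only two-element partitions along commuting sequences, so you have no license to pass to limits of $m$-element partitions, and the comparison of two arbitrary partitions with the same $\rho$-profile is exactly the hard combinatorial/geometric content you would need to supply.
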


Here as throughout the paper we use base $2$ logarithms. The notation $(I_s \rho) (\bold P)$
denotes $I_s(\rho(P_1), \ldots, \rho(P_m))$. In the same way as for $\rho$
we shall denote $\tr\mu P$ by $\mu(P)$.
	\par
The proof is long and essentially depends on the following two non-trivial results. The first one
is the celebrated Gleason theorem. (c.f~\cite{V}, theorem 7.23). In our
notation the crucial part of this theorem can be formulated as follows

		\begin{thm}[Gleason]	\label{thm:Gleason}
	Let $p$ be any function $p:\proj H \mapsto \clint01$ satisfying
	$p\big(\sum_{k\ge 1} P_k\big) = \sum_{k\ge 1} p (P_k)$, for any sequence of mutually orthogonal projections
	$P_1,P_2, \ldots \in \proj H$, and $p(1_H)=1$. Then there exists a unique state $\rho$
	(i.e. a positive operator with $\tr \rho=1$)  satisfying $p(P) = \tr \rho P$. \end{thm}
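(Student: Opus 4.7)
The plan is to reduce the problem first from the separable infinite-dimensional $H$ down to real $3$-dimensional subspaces, solve the $3$-dimensional case by means of Gleason's classical lemma on \emph{frame functions} on the $2$-sphere, and then re-assemble the local quadratic forms into a single density operator. For each unit vector $x\in H$ let $P_x$ denote the orthogonal projection onto $\C x$, and define $f(x):=p(P_x)$. The $\sigma$-additivity hypothesis together with $p(1_H)=1$ gives $\sum_i f(e_i)=1$ for every orthonormal basis $(e_i)$ of $H$, i.e.\ $f$ is a frame function of weight $1$. Once I show that $f(x)=\langle x,\rho x\rangle$ for a positive operator $\rho$, every finite-rank projection $P=\sum_k P_{x_k}$ will automatically satisfy $p(P)=\tr\rho P$, and $\sigma$-additivity then extends the identity to arbitrary $P\in\proj H$.

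The heart of the argument is the Gleason lemma: every nonnegative frame function on $S^2$ is the restriction of a quadratic form on $\R^3$. I would follow Gleason's original route. First, exploit the identity $f(x)+f(y)=f(x')+f(y')$, valid whenever $(x,y)$ and $(x',y')$ are orthonormal pairs spanning the same plane (both extend to the same basis by the third vector orthogonal to that plane); chained along great-circle rotations this controls the oscillation of $f$ and forces continuity on $S^2$. Second, expand the now-continuous $f$ in spherical harmonics and use the frame-function identity on varying bases to annihilate every harmonic component of degree $>2$, leaving $f(x)=\langle x,A\,x\rangle$ for some self-adjoint $A$ on $\R^3$. Applied to an arbitrary closed real $3$-dimensional subspace $V\subset H$, on which $f$ restricts to a frame function of weight $p(P_V)$, this produces a self-adjoint operator $A_V$ on $V$ with $f(x)=\langle x,A_V x\rangle$ for every unit $x\in V$. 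The continuity step on the sphere is the genuine obstacle; the harmonic-expansion step and everything downstream is essentially bookkeeping.

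Patching is then routine. For two real $3$-dimensional subspaces meeting in a plane, uniqueness of the quadratic form on the overlap forces $A_V$ and $A_{V'}$ to agree there; separability provides an exhausting chain of finite-dimensional subspaces, so the local $A_V$ assemble (after complexification) into a single bounded self-adjoint operator $\rho$ on $H$ with $f(x)=\langle x,\rho x\rangle$ for every unit $x\in H$. Positivity $\rho\ge 0$ follows from $f\ge 0$; applying $\sum_i\langle e_i,\rho e_i\rangle=\sum_i f(e_i)=1$ to any orthonormal basis both shows that $\rho$ is trace-class and pins its trace to $1$. Finally, uniqueness is immediate: if $\tr\rho P_x=\tr\rho' P_x$ for every unit $x$, then $\langle x,(\rho-\rho')x\rangle=0$ on $S(H)$, and polarisation gives $\rho=\rho'$.
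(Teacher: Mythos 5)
The paper does not prove this statement at all: it is quoted as a known result, with a citation to Varadarajan's \emph{Geometry of Quantum Theory} (Theorem 7.23), and is then used as a black box in the proof of Theorem~\ref{thm:main}. So there is no ``paper proof'' to compare against; your proposal has to stand on its own as a proof of Gleason's theorem.

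As such, it is a faithful roadmap of Gleason's original argument (frame functions, reduction to real three-dimensional subspaces, regularity of nonnegative frame functions on $S^2$, patching of the local quadratic forms), but it is not a proof, because the one step that carries essentially all of the difficulty is named and then deferred. You write that chaining the identity $f(x)+f(y)=f(x')+f(y')$ along great-circle rotations ``controls the oscillation of $f$ and forces continuity on $S^2$,'' and you acknowledge this is ``the genuine obstacle'' --- but that obstacle is the theorem. A frame function is a priori an arbitrary function on the sphere subject only to a combinatorial constraint over orthonormal triples; extracting continuity from this requires Gleason's delicate geometric argument (the analysis of $\liminf$ and $\limsup$ near a point via the ``descending great-circle'' construction, or the Piron/Cooke--Keane--Moran elementary variant), none of which is indicated. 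Two further points are glossed over: (i) the passage from real three-dimensional subspaces of a \emph{complex} $H$ to a single complex-linear self-adjoint operator requires using $f(\lambda x)=f(x)$ for $\abs{\lambda}=1$ and is where Gleason's treatment of the complex case does additional work beyond ``uniqueness on overlaps''; (ii) the claim that $\sigma$-additivity extends the identity from finite-rank to arbitrary projections implicitly needs $p(0)=0$ and weak-operator convergence of partial sums, which is easy but should be said. In short: the architecture is right and matches the standard proof, but the central lemma is asserted rather than proved, so the proposal does not establish the statement.
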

	
The sum $\sum P_K$ above relates to strong (or equivalently weak) operator topology.
The next result is new and is contained in~\cite{PS}.

	\par
\new{With some exceptions, we will use the notation and  terminology introduced in~\cite{PS}. Nonetheless,
we present here all the denotations necessary for the statement of that result.} In particular,
we write $\bold A=(A_1,  \ldots, A_m),$ $\bold B = (B_1,\ldots, B_n)$
for any \new{finite} partitions of $\clopint01$ into borel  sets. Moreover we write
\[
	\bold A \indep \bold B \quad\text{if}\quad \lambda(A_i\cap B_j) =
		\lambda(A_i)\lambda(B_j),	\quad 1\le i \le m, 1\le j \le n,
\]
and
\[	\bold A \tim \bold B = (A_i\cap B_j: 1\le i \le m, 1 \le j \le n) \text{ for }
							\bold A \indep \bold B.
\]

		\begin{df}
	We say that \new{a real function $I_{\bor\clopint01}$ on finite partitions} is an \defem{information on a boolean structure}\footnotemark if
	\[
		I_{\bor\clopint01}(\bold A \tim \bold B) = I_{\bor\clopint01}(\bold A)  + I_{\bor\clopint01}(\bold B)
				\text{ for any }\bold A \indep \bold B.
	\] \end{df}
\footnotetext{\new{In paper~\cite{PS} an information on a boolean structure is called an \defem{additive partition entropy.}}}

		\begin{df}		\label{df:Icont}
	Information $I_{\bor\clopint01}$ on a boolean structure is \defem{continuous} if for any
	sequence $A, A_1, A_2, \ldots$ of borel subsets of $\clopint01$ \new{such that $0< \lambda(A)<1$},
	and $\lambda(A_n\symdiff A)\tendsto 0$ we have
	\[
		I_{\bor\clopint01}(A_n,A^c_n)\tendsto I_{\bor\clopint01}(A,A^c).
	\]	\end{df}

Despite the fact that continuity of $I_{\bor\clopint01}$ is defined using only $2$\dash element
partitions $(A, A^c)$, we have the following general result.

		\begin{thm}[\cite{PS}, Theorem \new{2}]	\label{thm:commutative}
	For any continuous information $I_{\bor\clopint01}$ on $\bor\clopint01$ there exists a unique
	signed measure $\m:\new{\bor\clopint01} \to \R$ and a symmetric information $I_s$ such that
	\begin{align*}
		&\m\big( \clopint01 \big) = 0, \\
		&I(\bold A) = (I_s\lambda)(\bold A) + \sum_{i=1}^m \m(A_i) \log\lambda(A_i)
	\end{align*}
	for any partition $\bold A = (A_1,\ldots, A_m)$. 
	Moreover $\m$ is absolutely continuous with respect to $\lambda$. 
	(The notation $(I_s\lambda)(\bold A)$  denotes
	$I_s(\lambda(A_1), \ldots, \lambda(A_m)).$)
	\end{thm}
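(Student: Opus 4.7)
The plan is to reduce everything to 2-element partitions (which by additivity carry all the information), extract the signed measure $\mt$ from the function $\nu(A) := I_{\bor\clopint01}(A, A^c)$, and then bootstrap back up to general $m$-element partitions. With $\psi(t) := \nu\bigl(\clopint0t\bigr)$ and $L(t) := \log \tfrac{t}{1-t}$, the goal is to realise
\[
  \mt(A) := \frac{\nu(A) - \psi(\lambda(A))}{L(\lambda(A))}, \qquad \lambda(A) \in (0,1)\setminus\{1/2\},
\]
as the restriction of a genuine signed measure, and then identify the residue with $I_s$.

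The first step is to anchor the construction at $\lambda(A) = 1/2$. For independent $A, B$ with $\lambda(A) = \lambda(B) = 1/2$, the 4-partition $(A, A^c) \tim (B, B^c)$ also factorises as $(A, A^c) \tim (D, D^c)$ with $D := (A \cap B) \cup (A^c \cap B^c)$; the independence $\lambda(A \cap D) = 1/4 = \lambda(A)\lambda(D)$ is a direct check, and the two products give the same 4-element partition. Matching the two instances of additivity yields $\nu(B) = \nu(A \symdiff B)$, and two such XOR transitions connect any measure-$1/2$ Borel set to $\clopint0{1/2}$, so $\nu$ is constant on $\{\lambda = 1/2\}$; this resolves the indeterminacy at $L(1/2) = 0$. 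For a generic $t$, I refine $(A, A^c)$ by an auxiliary independent measure-$1/2$ partition and play the different product decompositions of the resulting 4-partition against each other, using the measure-$1/2$ constancy just established, to express $\nu(A)$ in terms of $\psi(\lambda(A))$ plus a set function that is finitely additive on disjoint pieces. Iterating with a sequence of mutually independent ``Rademacher'' partitions of $\clopint01$ supplies enough relations to force $\mt$ to be finitely additive on all disjoint Borel unions. Continuity (\mref{Definition}{df:Icont}) then upgrades $\mt$ to a $\sigma$-additive signed measure absolutely continuous with respect to $\lambda$, and the normalisation $\mt(\clopint01) = 0$ follows from the symmetry $\nu(A) = \nu(A^c)$ together with $L(t) = -L(1-t)$.

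Once $\mt$ is constructed I set $I_s(\lambda(A_1), \ldots, \lambda(A_m)) := I_{\bor\clopint01}(\bold A) - \sum_{i} \mt(A_i) \log \lambda(A_i)$ and verify that the right-hand side depends only on the measures $(\lambda(A_i))_i$: for two partitions $\bold A, \bold A'$ with the same measure vector, a common independent refinement $\bold B$ gives $I_{\bor\clopint01}(\bold A \tim \bold B) - I_{\bor\clopint01}(\bold A' \tim \bold B) = I_{\bor\clopint01}(\bold A) - I_{\bor\clopint01}(\bold A')$, and the $\mt$-contributions on both sides cancel by finite additivity. The additivity $I_s(\bld p \otimes \bld q) = I_s(\bld p) + I_s(\bld q)$ is then immediate from the independent-product additivity of $I_{\bor\clopint01}$, and uniqueness is standard since any competing decomposition would produce a signed measure constant on $\{\lambda = t\}$ for every $t$, forcing it to be a multiple of $\lambda$ which is killed by $\mt(\clopint01) = 0$. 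The principal obstacle is the finite additivity of $\mt$ at generic measures: the symmetry exploited at $\lambda = 1/2$ disappears, so one must orchestrate several independent auxiliary partitions and invoke continuity carefully to generate the required relations.
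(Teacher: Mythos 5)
First, a point of orientation: this statement is not proved in the present paper at all --- it is imported from~\cite{PS} (Theorem~2 there) and used as a black box --- so there is no in-paper argument to measure your proposal against; I judge it on its own terms. Your reduction to the two-element function $\nu(A)=I_{\bor\clopint01}(A,A^c)$ and the XOR argument showing that $\nu$ is constant on sets of measure $\tfrac12$ are sound. But the central construction fails: with $\psi(t)=\nu\big(\clopint0t\big)$ and $L(t)=\log\tfrac t{1-t}$, the quantity $\mt(A)=\big(\nu(A)-\psi(\lambda(A))\big)/L(\lambda(A))$ is \emph{not} finitely additive, so no amount of bootstrapping can realise it as the restriction of a signed measure. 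Indeed, if the theorem holds with measure $\m$ and symmetric part $I_s$, then $\nu(A)=I_s(t,1-t)+\m(A)L(t)$ and $\psi(t)=I_s(t,1-t)+\m\big(\clopint0t\big)L(t)$ for $t=\lambda(A)$, whence your $\mt(A)=\m(A)-\m\big(\clopint0{\lambda(A)}\big)$; this is additive only when $t\mapsto\m\big(\clopint0t\big)$ is linear. A concrete counterexample inside the class you are treating: take $I(\bold A)=\sum_i\m(A_i)\log\lambda(A_i)$ with $d\m=(x-\tfrac12)\,dx$, which is a continuous information with $\m\big(\clopint01\big)=0$; for disjoint $A,B$ of measure $\tfrac14$ one computes $\mt(A)+\mt(B)-\mt(A\cup B)=\tfrac1{16}\neq0$. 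The error is in identifying the symmetric part of $\nu$ with $\psi$: the reference sets $\clopint0t$ carry their own, generally nonlinear, $\m$-mass.

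What your manipulations genuinely produce is the correct \emph{intermediate} object, namely the difference $\m(V)-\m(W)$ for equal-measure $V,W$ --- exactly the quantity this paper extracts in \meqref{equation}{eqn:IImm} by swapping $V$ and $W$ between two cells whose measures have ratio~$2$. The missing step is the passage from these differences to $\m$ itself: fix a fine equipartition of $\clopint01$ into cells of measure $\lambda(V)$, average $\m(V)-\m(W)$ over all cells $W$, and invoke the normalisation $\m\big(\clopint01\big)=0$; this is the averaging device used in \mref{Lemma}{lem:rational} and \mref{Lemma}{lem:misbounded}. Division by $L(t)$ is never needed and is best avoided, since it also poisons the continuity argument near $\lambda(A)=0,1$ where $L$ blows up. The remaining architecture of your proposal (well-definedness of $I_s$ via a common independent refinement, countable additivity and absolute continuity from \mref{Definition}{df:Icont}, uniqueness from the normalisation) is reasonable once $\m$ has been correctly defined.
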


The following definition makes it possible to transfer the above result into the
Hilbert space setting.

		\begin{df}
	By a \defem{boolean structure} we shall mean a lattice homomorphism
	$B:\bor\clopint01\mapsto \alg P$ such that
	\begin{align}
		B\Big(\bigcup A_j\Big) &= \sum B(A_j),	\quad\text{for disjoint }A_1, A_2, \ldots \in \bor\clopint01, \\
		\lambda(A) &= \rho(B(A)),\quad\text{for }A\in \bor\clopint01.
	\end{align}
		\end{df}

We shall denote the space of all boolean structures by $\B$.

The proof of \mref {Theorem} {thm:main} makes use of a certain connectedness property
of the family of all boolean structures $\B$. This property is shown in \mref {section} {sec:connectedness}.
The remaining part of the proof, which applies \mref {Theorems} {thm:Gleason}
and\mref{}{thm:commutative} is given is \mref{section} {sec:description}. Some remarks on
the assumptions of \mref{Theorem} {thm:main} and several natural conjectures are gathered
in \mref{sections} {sec:remarks}.1, \ref{sec:remarks}.2, \ref{sec:remarks}.3.
Given the proposed conjectures, it seems that the connectedness described in
\mref {section} {sec:connectedness}, \mref {Theorem} {thm:connected}, can have a few further applications.

\mref {Sections} {sec:remarks}.4 and \ref{sec:remarks}.5 gather some basic results and
show the role and interpretation \mref {Theorem} {thm:main} plays.

	\section{Connectedness of the space of boolean structures} \label{sec:connectedness}

In this section we show that we can pass from one boolean structure to another in small steps
by means of the following definition.

		\begin{df}
	Fix $k \ge 1$. We  say that $B, B_1 \in \B$ are \defem{$k$\dash equivalent}
	$(B \sim_k B_1 )$ if there are $B=B^0,\ldots, B^N=B_1 $ in $\B$ and sets
	$A_1,\ldots ,A_N\in \bor\clopint01$ such that for $1\le n\le N$ we have
	\[\begin{aligned}
		\lambda(A_n)	&\le  \tfrac1k,\\
		B^{n-1}(A)	&= B^n(A)	\quad\text{for }A\cap A_n=\emptyset .
	\end{aligned}\]
	We say that $B,B_1 $ are \defem{equivalent} $(B \sim B_1)$ if $B \sim_k B_1 $ for all $k\ge 1$.
	\end{df}

		\begin{thm} \label{thm:connected}
	For any $B,B_1 \in \B$ we have $B\sim B_1 $.
	\end{thm}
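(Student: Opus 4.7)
My plan is to show $B \sim_k B_1$ for every fixed $k \ge 1$, which by definition gives $B \sim B_1$. Fix $k$, set $m = 2k$, and partition $\clopint01$ into intervals $C_1, \ldots, C_m$ of equal measure $1/(2k)$, so that every union $C_i \cup C_j$ has measure at most $1/k$.

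The argument proceeds in two stages. In Stage~A, I construct an intermediate $B' \in \B$ with $B'(C_j) = B_1(C_j)$ for every $j$, satisfying $B \sim_k B'$. In Stage~B, I pass from $B'$ to $B_1$ by modifying within each individual $C_j$ in turn (each of measure $1/(2k) \le 1/k$); since $B'(C_j) = B_1(C_j)$ is preserved, one may replace the internal boolean structure of $B'$ on $C_j$ by that of $B_1$, one $j$ at a time. Stage~B is routine once Stage~A is settled, and by transitivity $B \sim_k B_1$.

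Stage~A is the technical core. Write $P_j = B(C_j)$ and $Q_j = B_1(C_j)$: two partitions of $1_H$ into projections of $\alg P$, each of $\rho$\dash trace $1/m$. The basic move is a \emph{pair swap} on indices $(i, j)$: modifying $B$ on $A_n = C_i \cup C_j$ leaves $B$ unchanged outside $A_n$ and hence preserves $B(C_i) + B(C_j)$, while freely replacing $(B(C_i), B(C_j))$ by any partition $(P'_i, P'_j)$ of $B(C_i) + B(C_j)$ into projections in $\alg P$ with the original traces. Stage~A thus reduces to the algebraic statement: any two partitions $(P_j)$ and $(Q_j)$ of $1_H$ by projections in $\alg P$ with matching traces are linked by finitely many pair swaps. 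The natural strategy is to bring position~1 to $Q_1$ via the sequence of swaps $(1,2), (1,3), \ldots, (1,m)$, then fix position~2 to $Q_2$ via $(2,3), \ldots, (2,m)$, and so on inductively. Each such sub-problem rests on a \emph{reachability lemma}: given mutually orthogonal projections $E_1, \ldots, E_l \in \alg P$ of a common $\rho$\dash trace summing to $E$, and a target $Q \le E$ in $\alg P$ of the same trace, one must exhibit a chain $E_1 = P^{(0)}, P^{(1)}, \ldots, P^{(l-1)} = Q$ in $\alg P$, each of the common trace, with $P^{(s)} \le P^{(s-1)} + E_{s+1}$ for every $s$.

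\textbf{Main obstacle.} The reachability lemma is the key difficulty. The naive attempt---take $P^{(s-1)}$ to be the orthogonal projection of the subspace $P^{(s)} H$ onto $F_{s-1} := E_1 + \cdots + E_{s-1}$, then rescale---fails because the $\rho$\dash trace of the image of a subspace under an orthogonal projection is not controlled by the $\rho$\dash trace of the source, so the image may carry either too little or too much $\rho$\dash mass relative to the target $1/m$. The resolution exploits the infinite dimensionality of $H$ and the richness of $\alg P$: when the projected image has too little $\rho$\dash mass one extends it by an auxiliary infinite-dimensional subspace of $F_{s-1} H$ orthogonal to the image, of the correct residual $\rho$\dash trace; when it carries too much, one interposes additional swaps that redistribute $\rho$\dash mass among positions already visited, so the actual chain may be longer than the linear sequence $(1, 2), \ldots, (1, m)$ alone. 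One must also check that every intermediate projection remains in $\alg P$ (i.e., is infinite-dimensional with infinite-dimensional complement), which is automatic given the abundance of infinite-dimensional subspaces. Once the reachability lemma is in hand, the remainder of the proof is a bookkeeping induction.
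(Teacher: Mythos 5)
Your reduction of Stage~A to the ``reachability lemma'' is exactly where the proof fails, and you have in effect conceded this yourself. The lemma as you state it is false: the last link requires $Q\le P^{(l-2)}+E_l$ with $P^{(l-2)}\le E_1+\cdots+E_{l-1}=E-E_l$ of the common trace $c$, and the minimal admissible $P^{(l-2)}$ is the projection onto $\overline{(E-E_l)QH}$; taking $QH$ to be the graph of a suitable injective operator from $E_lH$ into $(E-E_l)H$ with dense range, this minimal projection is $E-E_l$ itself, of trace $(l-1)c$, so no chain of the prescribed shape exists. Your proposed repairs do not close the gap: padding by an auxiliary subspace handles only the harmless ``too little mass'' case, while ``interposing additional swaps that redistribute $\rho$\dash mass among positions already visited'' is not an argument in the ``too much mass'' case --- the obstruction is not that mass sits in the wrong visited cell, but that $\rho$ gives no control whatsoever over compressions, meets and joins of non\dash commuting projections (already $\rho(A\wedge B)\ge\rho(A)+\rho(B)-1$ fails for projections), so you have neither a mechanism for choosing the extra swaps nor a termination argument. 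This trace\dash control problem is the entire content of the theorem, and the proposal leaves it unresolved.

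The paper's proof supplies precisely the two devices that are missing here. First (\mref{Lemma}{lem:bigvee}, \mref{Lemma}{lem:firststage}, \mref{Corollary}{crr:firststage}), working in the eigenbasis of $\rho$ and using a weak\dash convergence subsequence argument, both structures are deformed through $\sim_k$ steps until they agree on an initial interval, the common value being an infinite\dash dimensional projection $Q$ that \emph{commutes with} $\rho$ and satisfies $\rho(Q)<\tfrac1{4k}$. Second (\mref{Lemma}{lem:secondstage}), a dilatation construction produces a partition $P_1+\cdots+P_{4k}=1_H$ with $PP_lP=\tfrac1{4k}P$ for $P=Q^\perp$; the identity $\rho(P_l)=\rho(PP_lP)+\rho(QP_lQ)\le\tfrac1{4k}\rho(P)+\rho(Q)\le\tfrac1{2k}$, which is exactly where commutation with $\rho$ enters, yields the uniform trace bound that your pair swaps cannot guarantee. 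Both structures are then steered onto this single evenly spread partition and matched via \mref{Lemma}{lem:partition}. Unless you can formulate and prove a corrected reachability statement with an explicit trace\dash control mechanism of this kind, you should adopt that route.
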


The proof is done by elementary reasoning and is devided into two steps:
	\par
1. For any boolean structures $B$, $B_1$ and $k\ge 1$ there exist boolean structures
$B'$ and $B'_1$ such that $B\sim_k B'$, $B_1 \sim_k B'_1$ and also $Q\in \alg P$ such that
\[\begin{aligned}
	&\rho Q = Q \rho, \quad \rho(Q) < \tfrac1k,\\
	&B'\clint0\varepsilon = B'_1\clint0\varepsilon=Q\quad\text{for some }
							\varepsilon \ge 0\text{ (\mref {Corollary} {crr:firststage}).}
\end{aligned}\]
	\par
2. Then for any $k\ge 1$ and any projection $Q$ with $\rho Q = Q \rho$,
$\rho(Q) < \frac1k$ we construct a partition $P_1 + \ldots + P_k = 1_H$ satisfying
$Q^\perp P_l Q^\perp = \frac1k Q^\perp$, $\rho(P_l) \le \frac2k$.
This is an example of dilatation method. This result is used to show that
$B' \sim_k B'_1$ for the boolean structures constructed in  step~1. (\mref {Lemma} {lem:thirdstage}.)
	\par
\mref {Theorem} {thm:connected} is a straightforward consequence of these  steps.

We begin with some auxillary properties of boolean structures.

		\begin{lem}\label{lem:Bexists}
	There exists $B\in \B$.  \end{lem}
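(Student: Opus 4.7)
The plan is to realise $B$ as a pullback of multiplication operators on $L^2(\clopint01,\lambda)$ via a carefully chosen unitary. I would first write the spectral decomposition $\rho = \sum_n r_n\ket{e_n}\bra{e_n}$ with $r_n > 0$ and $\sum_n r_n = 1$, and complete $\{e_n\}$ to an orthonormal basis of $H$ by adjoining an orthonormal basis $\{f_k\}$ of $\ker\rho$. In $L^2(\clopint01,\lambda)$ I would then pick an orthonormal system $\{\phi_n\}$ of \emph{unimodular} functions, $\abs{\phi_n(x)}^2 = 1$ a.e., for instance drawn from the characters $x \mapsto e^{2\pi i n x}$; the supply of such functions is large enough to match cardinalities, so one may send $e_n \mapsto \phi_n$ and map $\{f_k\}$ onto any completion of $\{\phi_n\}$ to an orthonormal basis of $L^2(\clopint01,\lambda)$. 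This defines a unitary $U\colon H \to L^2(\clopint01,\lambda)$, and I would set
\[
    B(A) := U^* M_{\chi_A} U, \qquad A \in \bor\clopint01.
\]

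The assignment $A \mapsto M_{\chi_A}$ is plainly a $\sigma$\dash additive lattice homomorphism from $\bor\clopint01$ into $\proj L^2(\clopint01,\lambda)$ (since $\chi_{A \cap C} = \chi_A\chi_C$ and $\chi_{\bigsqcup_j A_j} = \sum_j \chi_{A_j}$ in the strong operator topology), so $B$ inherits both properties. For the measure-matching condition I would compute
\[
    \rho\bigl(B(A)\bigr) = \tr\bigl(U\rho U^* M_{\chi_A}\bigr) = \sum_n r_n \int_A \abs{\phi_n}^2\, d\lambda = \lambda(A) \sum_n r_n = \lambda(A),
\]
where the unimodularity of the $\phi_n$ is precisely what collapses the sum to Lebesgue measure rather than to an integral against a nonconstant density.

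For the containment $B(A) \in \alg P$, the boundary cases $\lambda(A) \in \{0,1\}$ give $B(A) \in \{0, 1_H\}$; and when $0 < \lambda(A) < 1$ both $A$ and $A^c$ have positive Lebesgue measure and are non-atomic, so the range $L^2(A,\lambda)$ and kernel $L^2(A^c,\lambda)$ of $M_{\chi_A}$ are each infinite-dimensional. I do not foresee any serious obstacle: the only design choice that really matters is the selection of a unimodular orthonormal system, which is the device forcing the functional $A \mapsto \rho(B(A))$ to coincide with Lebesgue measure on the nose rather than with an absolutely continuous deformation of it.
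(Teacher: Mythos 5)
Your proof is correct and is essentially the paper's own construction: both diagonalise $\rho$, send its eigenbasis to unimodular characters $e^{2\pi i k x}$ via a unitary $U\colon H\to L^2\clopint01$, and set $B(A)=U^*1_A(\cdot)U$, with unimodularity forcing $\rho(B(A))=\lambda(A)$. The only cosmetic difference is that the paper maps a full eigenbasis (allowing zero eigenvalues) onto the full character basis at once, which sidesteps the cardinality-matching step you handle separately for $\ker\rho$.
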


		\begin{proof}
	Our state $\rho$ has a representation $\rho = \sum_{k\in \Z} \rho_k  \ket{e_k} \bra{e_k}$, where
	$\rho_k\ge 0$, in some orthonormal basis $(e_k)$ of the space~$H$.
	Consider the unitary operator $U:H \to L_2[0,1]$, given by
	\[
		(Ue_k)(x)	= e^{2\pi k i x},
	\]
	and the boolean structure $B$ where $B(A)=U^*1_A(\cdot)U$.
	Then
	\[
		\rho(B(A)) = \sum_{k\in \Z} \rho_k \norm{1_A(\cdot)e^{2\pi k i \cdot} }^2 = \lambda(A).
	\]
	\end{proof}

For such $B$ we obviously have $B(\{0\})=0$.

		\begin{lem}	\label{lem:BthroughPexists}
	Given $P_1 + \cdots + P_n = 1_H$, $P_i \in  \alg P$, $\rho(P_i)  > 0$, there exists $B\in \B$ with
	$B(\clopint {\alpha_{i-1}}{\alpha_i})=P_i$, for $\alpha_i = \rho(P_1 + \cdots + P_i)$, $0\le i\le n$.
	The same is true for a countable family of projections
	$\sum P_i = 1_H$, $\rho(P_i)>0$. \end{lem}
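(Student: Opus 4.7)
The plan is to assemble $B$ from separate boolean structures built on each range $P_iH$. For each $i$ with $1\le i\le n$, set $H_i = P_iH$. Since $P_i \in \alg P$ and $\rho(P_i) > 0$, either $P_i = 1_H$ (in which case $n=1$ and we are done by \mref{Lemma}{lem:Bexists}) or $H_i$ is infinite-dimensional and separable; I would rescale to obtain the density operator $\rho_i = \rho\restr{H_i}/\rho(P_i)$ on $H_i$ and apply \mref{Lemma}{lem:Bexists} to the pair $(H_i,\rho_i)$ to obtain a boolean structure $B_i:\bor\clopint01 \to \proj H_i$ satisfying $\rho_i(B_i(A)) = \lambda(A)$.

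Next, I would transport each $B_i$ back to the appropriate subinterval via the affine bijection $f_i:\clopint{\alpha_{i-1}}{\alpha_i}\to\clopint01$, $f_i(x) = (x-\alpha_{i-1})/\rho(P_i)$, and define
\[
	B(A) = \sum_{i=1}^n B_i\bigl( f_i(A\cap \clopint{\alpha_{i-1}}{\alpha_i}) \bigr),
	\quad A\in\bor\clopint01.
\]
Since the $H_i$ are mutually orthogonal, the right-hand side is an orthogonal sum of projections, so $B(A)$ is a projection on $H$. Checking $B(\clopint{\alpha_{i-1}}{\alpha_i}) = B_i(\clopint01) = P_i$, countable additivity on disjoint Borel unions, and the identity
\[
	\rho(B(A)) = \sum_{i=1}^n \rho(P_i)\,\lambda\bigl(f_i(A\cap \clopint{\alpha_{i-1}}{\alpha_i})\bigr) = \lambda(A)
\]
is then a routine calculation using only the corresponding properties of each $B_i$ and the orthogonality of the subspaces $H_i$.

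The one point that actually needs an argument is the requirement that $B(A)\in \alg P$, that is, that $B(A)$ and $B(A)^\perp$ are both infinite-dimensional whenever $B(A)\notin\{0,1_H\}$. Here I would invoke the explicit realisation from the proof of \mref{Lemma}{lem:Bexists}: each $B_i$ is built from a unitary $U_i:H_i\to L_2\clopint01$ so that $B_i(X)$ is unitarily equivalent to multiplication by $\charf{X}$; its range is infinite-dimensional exactly when $\lambda(X)>0$, and is trivial otherwise. Consequently, when $0<\lambda(A)<1$, both $B(A)$ and $B(A)^\perp$ contain at least one infinite-dimensional summand and hence lie in $\alg P$. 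The countable case $\sum_{i\in\N} P_i = 1_H$ with $\rho(P_i)>0$ is handled by the same construction: the orthogonality of the $H_i$ together with $\sum\rho(P_i)=1$ guarantees strong convergence of the (now infinite) sum defining $B(A)$, and every verification above goes through without change.
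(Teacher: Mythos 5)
Your proposal is correct and follows essentially the same route as the paper: the paper also restricts $\rho$ to each $P_iH$ (normalised by $\rho(P_i)=\alpha_i-\alpha_{i-1}$), invokes \mref{Lemma}{lem:Bexists} on each piece, and glues the resulting boolean structures together along the affine rescalings of the intervals $\clopint{\alpha_{i-1}}{\alpha_i}$. Your additional verifications (membership in $\alg P$ via the explicit multiplication-operator model, and the countable case) are details the paper leaves implicit, and they check out.
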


		\begin{proof}
	Set $\rho^i(\cdot)=\tfrac1{\alpha_i-\alpha_{i-1}}\rho(P_i\cdot P_i)$.
	Let  $B^i$ be a boolean structure in $P_i H, \rho^i$ in place of
	$H,\rho$ (\mref {Lemma} {lem:Bexists}). Now, it suffices to set
	\[
		B(A) = B^i \left(
			(A-\alpha_{i-1})\tfrac1{\alpha_i - \alpha_{i-1}} \right)
		\quad\text{for } A\in \clopint {\alpha_{i-1}} {\alpha_i}.
	\]
	\end{proof}

We shall take on the convention that $\clint \alpha\alpha = \{\alpha\}$ for $\alpha\in\R$.

		\begin{crr}	\label{crr:BthroughPexists}
	Given $P_1 + \cdots + P_n = 1_H$, $P_i \in  \alg P$, ($\rho(P_i)  = 0$ is possible now),
	there exists $B\in \B$ with $P_i \le B(\clint {\alpha_{i-1}}{\alpha_i})$, for
	$\alpha_i = \rho(P_1 + \cdots + P_i)$, $0\le i\le n$. \end{crr}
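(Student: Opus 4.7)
The essential new point compared with \mref{Lemma}{lem:BthroughPexists} is that for $i$ with $\rho(P_i) = 0$ the interval $\clint{\alpha_{i-1}}{\alpha_i}$ collapses to the single point $\{\alpha_i\}$, so the desired inclusion $P_i \le B(\{\alpha_i\})$ forces $B$ to put a non-trivial atom there; since the $L^2$-based construction used in \mref{Lemma}{lem:Bexists} kills every singleton, those atoms will have to be supplied separately.

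I will split $1_H$ into its $\rho$-supporting part and its complement, apply \mref{Lemma}{lem:BthroughPexists} on the first, and place the zero-weight projections as atoms on the second. Write $I^+ = \{i : \rho(P_i) > 0\}$, $I^0 = \{i : \rho(P_i) = 0\}$ and $P^+ = \sum_{i \in I^+} P_i$. Then $\rho(P^+) = 1$, hence $\rho(1_H - P^+) = 0$ and every $P_i$ with $i \in I^0$ lies below $1_H - P^+$. Applying \mref{Lemma}{lem:BthroughPexists} inside the Hilbert space $P^+ H$ (on which $\rho$ restricts to a state of trace one) to the family $\{P_i\}_{i\in I^+}$ yields a boolean structure $B_1$ of $P^+ H$ with $B_1(\clopint{\alpha_{i-1}}{\alpha_i}) = P_i$ for all $i\in I^+$, and by sticking to the explicit $L^2$ construction I can arrange that $B_1(\{\alpha\}) = 0$ for every singleton.

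I will then set
\[
  B(A) \;=\; B_1(A) \;+\; \sum_{i\in I^0,\ \alpha_i\in A} P_i.
\]
The two summands are orthogonal---one lies below $P^+$, the other below $1_H - P^+$---and the second has $\rho$-measure zero, so it will be routine to verify that $B$ is a countably additive lattice homomorphism from $\bor\clopint{0}{1}$ into $\proj H$ satisfying $\rho(B(A)) = \lambda(A)$. Granting this, the required containment is immediate: for $i \in I^+$ from $P_i = B_1(\clopint{\alpha_{i-1}}{\alpha_i}) \le B(\clint{\alpha_{i-1}}{\alpha_i})$, and for $i \in I^0$ from $\alpha_i \in \clint{\alpha_{i-1}}{\alpha_i}$, so that $P_i$ appears as one of the summands of $B(\{\alpha_i\})$.

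The main obstacle I expect is to check that $B(A)$ actually lands in $\alg P$, i.e.\ that $B(A)$ and its orthogonal complement are either trivial or both infinite-dimensional. This comes down to a short case analysis depending on whether $B_1(A)$ equals $0$, $P^+$, or sits strictly between these in $\proj(P^+ H)$; the key observation is that in the only non-trivial situation (some $P_i$ with $i \in I^0$ is non-zero) $1_H - P^+$ is itself infinite-dimensional, so neither the atomic summand nor its complement can be a finite-dimensional obstruction.
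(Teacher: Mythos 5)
The paper states this corollary without proof, and your construction --- run \mref{Lemma}{lem:BthroughPexists} on $P^+=\sum_{\rho(P_i)>0}P_i$ and plant the $\rho$\dash null projections as atoms at the degenerate points $\alpha_i$ --- is correct and is surely the argument the authors had in mind; the case analysis for membership in $\alg P$ and the verification $\rho(B(A))=\lambda(A)$ go through exactly as you sketch. One boundary point worth noting: if the trailing projections are $\rho$\dash null, then $\alpha_{i-1}=\alpha_i=1$ and $\{1\}\notin\bor\clopint01$, so $B(\clint{\alpha_{i-1}}{\alpha_i})$ is not even defined and your formula would leave those $P_i$ out of $B(\clopint01)$ altogether; this is a defect of the statement rather than of your proof (and is harmless in the corollary's only application, \mref{Lemma}{lem:thirdstage}, where the relevant right endpoint is $\alpha<1$), but to keep $B\in\B$ you should attach any such leftover projections to some point of $\clopint01$, e.g.\ to $\alpha_{i-1}$ when that is $<1$ or to $0$ otherwise.
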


We also have

		\begin{lem}	\label{lem:BthroughD}
	Let $D\in\bor\clopint01$, with $\lambda(D)>0$. Given a projection $P\in\alg P$ with
	$\rho(P)=\lambda(D)$ there exists a projection-valued measure
	$B_D^P:\bor D\to \proj H$ such that
	\begin{equation}
		B_D^P(D)=P\quad\text{and}\quad\rho(B_D^P(A))=\lambda(A)\text{ for }A\in \bor D.
	\end{equation}	\end{lem}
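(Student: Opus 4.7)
The plan is to reduce to \mref{Lemma}{lem:Bexists} applied to the subspace $PH$ and then transport the resulting projection-valued measure from $\clopint01$ onto $D$ via a measure-algebra isomorphism.

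Since $\rho(P)=\lambda(D)>0$ we have $P\ne 0$, and because $P\in\alg P$ this forces $PH$ to be infinite-dimensional and separable. I would consider the induced state $\rho^P:=P\rho P/\rho(P)$ on $PH$. Applying \mref{Lemma}{lem:Bexists} to the pair $(PH,\rho^P)$ yields a boolean structure
\[
	\tilde B:\bor\clopint01\to\proj(PH)\subset\proj H
\]
with $\tilde B(\clopint01)=P$ and $\rho^P(\tilde B(E))=\lambda(E)$ for every $E\in\bor\clopint01$; equivalently $\rho(\tilde B(E))=\rho(P)\lambda(E)$. In particular $\tilde B$ sends every Lebesgue null set to the zero projection, so it descends to a homomorphism on the measure algebra of $\clopint01$.

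Next I would invoke the classical fact that every non-atomic standard Borel probability space has measure algebra isomorphic to that of $(\clopint01,\lambda)$. Applied to $(D,\bor D,\lambda|_D/\lambda(D))$ this produces a measure-algebra isomorphism $\Phi:\bor D\to\bor\clopint01$ (understood modulo null sets) with $\lambda(\Phi(A))=\lambda(A)/\lambda(D)$ and $\Phi(D)=\clopint01$. A concrete representative can be read off from the monotone map $F(x)=\lambda(D\cap\clopint0x)/\lambda(D)$, which pushes $\lambda|_D/\lambda(D)$ forward to Lebesgue measure on $\clopint01$.

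Setting $B_D^P(A):=\tilde B(\Phi(A))$ then furnishes the desired projection-valued measure: $\sigma$\dash additivity is automatic because the composition of a projection-valued measure with a $\sigma$\dash algebra homomorphism is again a projection-valued measure, $B_D^P(D)=\tilde B(\clopint01)=P$, and
\[
	\rho(B_D^P(A))=\rho(P)\,\rho^P(\tilde B(\Phi(A)))=\rho(P)\,\lambda(\Phi(A))=\lambda(A).
\]
The one slightly delicate point is the existence of $\Phi$, but this is a standard fact about non-atomic measure algebras, so no serious obstacle is expected.
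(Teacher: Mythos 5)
Your proposal is correct and follows essentially the same route as the paper: both restrict to $PH$ with the induced state $P\rho P/\rho(P)$ and both exploit the distribution function $F(x)=\lambda\big(D\cap\clopint0x\big)/\lambda(D)$, the only cosmetic difference being that you compose the boolean structure from \mref{Lemma}{lem:Bexists} with the measure-algebra isomorphism induced by $F$, whereas the paper uses $F$ to build the unitary $V:PH\to L_2(D)$ directly and sets $B_D^P(A)=V^*1_AV$. The one point you should make explicit (and which your concrete construction does handle) is that $\tilde B$ annihilates null sets because it is realized by multiplication operators, not merely because $\rho^P$ assigns them measure zero, since $\rho^P$ need not be faithful on $PH$.
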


		\begin{proof}
	Choose an arbitrary orthonormal basis $(e_k)$, $k\in\Z$ in $PH$ that satisfies
	$P\rho P = \sum_{k\in\Z} \rho^P_k \ket{e_k}\bra{e_k}$ for some $\rho^P_k \ge 0$.
	Now, consider a unitary operator $V:PH\to L_2(D)$ given by
	\[
		V(e_k)=\exp\left(	2\pi i k \tfrac{\lambda\left (D\cap\clopint0x \right) }{\lambda(D)}
					\right)
	\]
	It suffices to set $B_D^P(A)=V^*1_A(\cdot)V$ (c.f. \mref {Lemma} {lem:Bexists}). \end{proof}

	\begin{rmk}	\label{rmk:permute}
	For any boolean structure $B\in \B$, $k\ge 1$ and any permutation $\sigma$ of
	$\{1,\dots,k\}$ we have
	\[
		B\sim_k B^\sigma
	\]
	where
	\[
		B^\sigma(A)= B \left(	A-\tfrac l {2k} + \tfrac{\sigma(l)}{2k}	\right)
			\quad\text{for }
		A\subset\left[	\tfrac {l-1}{2k}\tfrac {l} {2k}	\right), \quad 1\le l \le 2k.
	\] \end{rmk}

Moreover:

		\begin{lem}	\label{lem:partition}
	For any partitions $A_1\cup \ldots \cup A_n= C_1\cup \ldots \cup C_n = \clopint01$, with
	$\lambda (A_l) = \lambda (C_l) \le \tfrac 1{2k}$, and $B\in \B$ there exists $B_1\in \B$ such that
	\begin{align*}
		B(A_l)	&= B_1(C_l)		\quad\text{for }	1\le l \le n,	\\
		B 		&\sim_k B_1
	\end{align*}
	\end{lem}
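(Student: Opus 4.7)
The plan is to construct an explicit $B_1\in\B$ satisfying $B_1(C_l)=B(A_l)$ for every $l$, and then to connect it to $B$ by a finite chain of $\sim_k$-moves.

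For the construction, for each index $l$ with $\lambda(C_l)>0$ set $P_l:=B(A_l)$; since $\rho(P_l)=\lambda(A_l)=\lambda(C_l)$, \mref{Lemma}{lem:BthroughD} applied with $D=C_l$ and $P=P_l$ supplies a projection-valued measure $B_{C_l}^{P_l}\colon\bor(C_l)\to\proj H$ with total projection $P_l$ and $\rho\circ B_{C_l}^{P_l}=\lambda|_{C_l}$. Gluing these together, the formula $B_1(E):=\sum_l B_{C_l}^{P_l}(E\cap C_l)$ defines a boolean structure with $B_1(C_l)=P_l=B(A_l)$.

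The chain from $B$ to $B_1$ rests on the following atomic move: if $S\subset\clopint01$ satisfies $\lambda(S)\le 1/k$, then one may replace the restriction of the current boolean structure to $\bor(S)$ by any other projection-valued measure on $S$ with the same total projection and the same scalar measure $\lambda|_S$, and this is a single $\sim_k$-step; existence of such replacements comes from a further application of \mref{Lemma}{lem:BthroughD}. I would build the chain inductively in $n$ stages, maintaining after stage $l$ the invariant that the current structure $B^l$ satisfies $B^l(C_j)=B(A_j)$ for all $j\le l$. At stage $l$ I would modify $B^{l-1}$ only on
\[
    S_l:=C_l\cup\bigl(A_l\setminus\textstyle\bigcup_{j<l}C_j\bigr),
\]
which is disjoint from $C_j$ for $j<l$, so prior values are preserved, and satisfies $\lambda(S_l)\le\lambda(A_l)+\lambda(C_l)\le 1/k$, so the modification counts as one step of $\sim_k$.

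The hard part will be verifying that $P_l=B(A_l)$ indeed sits as a sub-projection of the available total projection $B^{l-1}(S_l)$, so that the intended redistribution via \mref{Lemma}{lem:BthroughD} is feasible. I would address this by arranging the earlier-stage modifications to coincide with $B$ on the ``fresh'' portion $A_l\setminus\bigcup_{j<l}C_j$ for as long as possible, and by tracking through the $n^2$ atoms $D_{l,l'}:=A_l\cap C_{l'}$ how much of each $P_l$ has already been displaced out of $S_l$ at each stage. Once the invariant is properly formulated, each stage becomes a direct application of \mref{Lemma}{lem:BthroughD}, and the required $B_1$ is the outcome $B^n$ of the last stage.
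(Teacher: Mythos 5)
Your construction of the target $B_1$ by gluing copies of \mref{Lemma}{lem:BthroughD} over the $C_l$ is fine, and your notion of an atomic $\sim_k$-move is the right one. But the step you yourself flag as ``the hard part'' is not a technicality to be tracked away --- it is the crux, and with your choice of $S_l$ it genuinely fails. The obstruction: a valid move on $S_l$ must preserve the total projection $B^{l-1}(S_l)$, so you need $B(A_l)\le B^{l-1}(S_l)$; but the piece $B(A_l\cap C_j)$ for $j<l$ was displaced at stage $j$ into $A_j\setminus\bigcup_{i\le j}C_i$, and nothing forces it to land inside your $S_l$. Concretely, take $k=1$, $n=3$, $A_i$ the three thirds of $\clopint01$ and $C_1=A_2$, $C_2=A_3$, $C_3=A_1$ (a cyclic shift). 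Stage $1$ modifies on $S_1=A_1\cup A_2$ and forces $B^1(A_1)=B(A_2)$, $B^1(A_2)=B(A_1)$. Stage $2$ has $S_2=C_2\cup(A_2\setminus C_1)=A_3$, on which the available total is $B(A_3)$, orthogonal to the required $P_2=B(A_2)$. No bookkeeping over the atoms $A_l\cap C_{l'}$ rescues this, and enlarging $S_l$ to include the displaced material breaks the bound $\lambda(S_l)\le 1/k$.

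The paper avoids this by an extra idea you are missing: interpose a partition $(E_l)$ with $\lambda(E_l)=\lambda(A_l)$ that is \emph{independent} of both $(A_l)$ and $(C_l)$ in the sense $\lambda(A_i\cap E_j)=\lambda(A_i)\lambda(E_j)$, and reduce to transporting $B(A_l)$ onto $E_l$ (then from $E_l$ onto $C_l$ by the same argument, using transitivity of $\sim_k$). The point is that independence makes the two atoms $A_i\cap E_j$ and $A_j\cap E_i$ have equal measure $\lambda(A_i)\lambda(A_j)$, so exchanging the values of $B$ on this pair (via \mref{Lemma}{lem:BthroughD}) is always feasible --- it is a pure swap, so the total projection on the union is automatically preserved --- and the union has measure $2\lambda(A_i)\lambda(A_j)\le 1/k$, so each swap is one $\sim_k$-step. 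After all $\binom n2$ swaps one gets $B_1(E_i)=\sum_j B(A_i\cap E_j)=B(A_i)$. Your direct $A\to C$ transport has no analogue of this feasibility guarantee, which is exactly where the argument breaks.
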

		\begin{proof}
	Consider a partition $E_1\cup \ldots \cup E_n$ which is independen both to $A_i$ and $C_i$
	(i.e. $\lambda(A_i\cap E_j) =\lambda(A_i)\lambda(E_j)$ and 
	$\lambda(C_i\cap E_j) =\lambda(C_i)\lambda(E_j)$). It suffices to prove the lemma
	with $E_i$ substituted for $C_i$.
		\par
	Consider a linear ordering
	\[
		(D_l)_{l=1}^{l=n(n-1)/2}	\text{ of the system } (A_i\cap E_j; 1\le i < j \le n),
	\]
	and denote $D'_l :=A_j \cap E_i$ for $D_l= A_i \cap E_j$.
		\par
	The sequence $B=B^0, B^1, \ldots, B^{n(n-1)/2}=B_1$ can be defined as follows
	\[
		B^{l+1}(A) = B^l(A)	\quad\text{for } A \cap (D_l \cup D'_l) = \emptyset,
	\]
	and
	\begin{align*}
		B^{l+1}(D_l) &= B^l(D'_l),	\\
		B^{l+1}(D'_l) &= B^l(D_l),
	\end{align*}
	which can be done in view of \mref {Lemma} {lem:BthroughD}.
	\end{proof}

We now come over to the first step in the proof of \mref {theorem} {thm:connected}.
The main objective  is  \mref {Corollary} {crr:firststage} below.
	\par
Given any nonzero vector $x\in H$ by $\hat{x}$ we shall denote
the projection $\tfrac {\ket x\bra x}{\norm x^2}$.

		\begin{lem}	\label{lem:bigvee}
	Suppose that $f_i\in H$, $\norm{f_i} \ge \varepsilon > 0$, $f_i \tendsto 0$ weakly.
	Then for any $\eta >0$ there exists $(g_i) \subset (f_i)$ with
	\[
		\rho\left(\bigvee _{j\ge k} \widehat{g_j}\right) \le \eta .
	\]
	\end{lem}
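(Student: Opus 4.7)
The strategy is to pass to a subsequence $(g_j)\subset (f_i)$ whose members are mutually almost orthogonal and have negligible component in a fixed finite-rank spectral subspace of $\rho$ that already contains almost all of its mass. Fix a spectral decomposition $\rho = \sum_k \rho_k\ket{e_k}\bra{e_k}$ and an $N$ with $\sum_{k>N}\rho_k < \eta/2$; set $E := \sum_{k\le N}\ket{e_k}\bra{e_k}$, so $E$ has finite rank and $\rho(E^\perp)<\eta/2$. From $f_i\tendsto 0$ weakly we obtain $\norm{Ef_i}\tendsto 0$ (since $E$ has finite rank) and $\abs{\langle h, f_i\rangle}\tendsto 0$ for every fixed $h\in H$.

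Next, fix a rapidly decreasing positive sequence $(\delta_k)$ --- for instance a geometric sequence with small enough ratio depending on $\varepsilon$ and $\eta$ --- satisfying
\[
	(k-1)\delta_k + \sum_{j>k}\delta_j \le \tfrac{\varepsilon^2}{2}\quad\text{for every }k,\qquad \sum_j \delta_j^2 \le \tfrac{\varepsilon^2\eta}{4},
\]
and select $g_k := f_{i_k}$ inductively so that $\norm{Eg_k}<\delta_k$ and $\abs{\langle g_j, g_k\rangle}<\delta_k$ for all $j<k$. Each step imposes finitely many scalar conditions of the form $\abs{\langle h, f_i\rangle}<\delta$, satisfiable for $i$ large by weak convergence.

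To bound $\rho(P)$ with $P := \bigvee_{j\ge 1}\widehat{g_j}$, use the spectral identity $\rho(P) = \sum_k \rho_k\norm{Pe_k}^2$. Since $\norm{Pe_k}\le 1$, the tail $k>N$ contributes at most $\rho(E^\perp)<\eta/2$. For $k\le N$, $e_k\in EH$, so for any finite $v = \sum_j c_jg_j$,
\[
	\abs{\langle e_k, v\rangle}\le \sum_j \abs{c_j}\norm{Eg_j}\le \Bigl(\sum_j \abs{c_j}^2\Bigr)^{1/2}\Bigl(\sum_j \delta_j^2\Bigr)^{1/2}
\]
by Cauchy--Schwarz, while expanding $\norm v^2$ with $\norm{g_j}^2\ge\varepsilon^2$ and the near-orthogonality gives
\[
	\norm v^2 \ge \sum_k \abs{c_k}^2\Bigl(\varepsilon^2 - (k-1)\delta_k - \sum_{j>k}\delta_j\Bigr)\ge \tfrac{\varepsilon^2}{2}\sum_j \abs{c_j}^2.
\]
Combining and taking the supremum over unit vectors in $\overline{\Span}(g_j)$ yields $\norm{Pe_k}^2 \le (2/\varepsilon^2)\sum_j\delta_j^2 \le \eta/2$, so $\sum_{k\le N}\rho_k\norm{Pe_k}^2 \le \eta/2$ and $\rho(P)\le\eta$.

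The only delicate step is the calibration of $(\delta_k)$: a single sequence must make $\sum\delta_j^2$ small enough to control the $E$-component of vectors in $\overline{\Span}(g_j)$, while simultaneously keeping all cross-term errors in $\norm v^2$ below $\varepsilon^2/2$, so that the diagonal lower bound $\norm{g_j}\ge\varepsilon$ survives. Once the two constraints are reconciled, weak convergence automatically supplies the inductive subsequence, and the spectral formula $\rho(P)=\sum_k\rho_k\norm{Pe_k}^2$ closes the estimate.
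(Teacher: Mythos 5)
Your proof is correct, but it takes a genuinely different route from the paper's. The paper argues by a short induction on the join itself: writing $P_k=\widehat{f_{i(1)}}\vee\ldots\vee\widehat{f_{i(k-1)}}$, it uses the identity $\rho\big(P_k\vee \widehat{f_i}\big)=\rho(P_k)+\rho\big(\widehat{P_k^\perp f_i}\big)$ together with the fact that $\rho\big(\widehat{h_i}\big)\tendsto 0$ whenever $h_i\tendsto 0$ weakly with $\norm{h_i}$ bounded below (compactness of the trace-class $\rho$), and then chooses $i(k)$ so that the $k$\dash th increment is below $\eta/2^{k+1}$. You instead control the whole projection $P=\bigvee_j\widehat{g_j}$ at once: you split $\rho$ into a finite-rank spectral part $E$ carrying all but $\eta/2$ of the mass, extract an almost-orthogonal subsequence with $\norm{Eg_j}<\delta_j$, show that $(g_j)$ is a Riesz sequence with lower bound $\varepsilon^2/2$, and deduce $\norm{Pe_k}^2\le (2/\varepsilon^2)\sum_j\delta_j^2$ for $k\le N$ before summing $\rho(P)=\sum_k\rho_k\norm{Pe_k}^2$. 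Both arguments rest on the same two pillars --- subsequence extraction via weak convergence and the trace-class nature of $\rho$ --- but the paper's increment identity makes the estimate essentially bookkeeping-free, whereas your version requires the careful calibration of $(\delta_k)$ that you flag yourself; in exchange, your argument is more quantitative (it exhibits an explicit Riesz-basis constant for the chosen subsequence and an explicit bound on each $\norm{Pe_k}$) and avoids any appeal to the lattice identity for joins of a projection with a one-dimensional one. Two minor points to make explicit: you should require $i_k>i_{k-1}$ in the inductive selection so that $(g_k)$ is genuinely a subsequence, and you should note that the supremum of $\abs{\langle e_k,v\rangle}/\norm{v}$ over finite combinations $v=\sum_j c_j g_j$ does compute $\norm{Pe_k}$ because such $v$ are dense in the range of $P$. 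Neither affects correctness.
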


		\begin{proof}
	Observe that $\tfrac{f_i}{\norm{f_i}} \tendsto 0$ weakly. This gives $\widehat{f_i}\tendsto 0$
	weakly. For any finitely-dimensional projection $P$ we have
	$\norm{P^{\perp}f_i} \ge  \varepsilon/2$ for $i$ large enough and $P^\perp f_i \tendsto 0$
	weakly. Then $\rho \left( \widehat{P^\perp f_i} \right) \tendsto 0$, in particular
	\[
		\rho(P\vee \widehat {f_i}) = \rho (P) + \rho (\widehat{P^\perp f_i}) \tendsto \rho(P).
	\]
	By induction we can find a sequence $i(1) \le i(2) \le \ldots $ with
	\[
		\rho \left(	\bigvee_{1\le s \le t} \widehat{f_{i(s)}}	\right)
		\le \left(	1-\tfrac 1{2^k}	\right) \eta.
	\]
	Indeed, $\rho(P_{k+1} - P_k) = \rho\left( \widehat {P_k^\perp f_{i(k)}} \right) < \tfrac \eta {2^{k+1}}$ for
	$P_0 = 0$, $P_k= \widehat{f_{i(1)}} \vee\ldots \vee \widehat{f_{i(k-1)}}$ and for $i(k)$ large enough.
	%We have
	%\[
	%	\rho (P_{k+1} - P_k) = \tfrac 1{2^{k+1}} \eta, k \ge 1.
	%\]
	\end{proof}

		\begin{lem}	\label{lem:firststage}
	For any orthonormal system $(e_n), n\ge 1$, $k\ge 1$ and a boolean structure $B\in\B$
	there exists a subsequence $(e_{n(i)})$ and $B' \in \B$ such that
	\[	\begin{aligned}
		B	&\sim_k	B', \\
		\widehat{e_{n(i)}} &\le B' \clopint0{\tfrac1k}.
	\end{aligned}	\]
	\end{lem}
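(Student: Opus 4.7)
Orthonormality of $(e_n)$ gives $\norm{e_n}=1$ and $e_n\tendsto 0$ weakly, so \mref{Lemma}{lem:bigvee} applies with $\varepsilon=1$. Choosing $\eta<1/(2k)$ and passing to a tail of the resulting subsequence, I obtain $(e_{n(i)})$ such that $P:=\bigvee_i\widehat{e_{n(i)}}$ satisfies $\rho(P)<1/(2k)$; orthogonality additionally yields $P=\sum_i\widehat{e_{n(i)}}$. Thinning further if necessary ensures that $P^\perp$ stays infinite-dimensional, so $P\in\alg P$.

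\textbf{Stage 2 (target structure).} Set $\varepsilon=\rho(P)$. Apply \mref{Lemma}{lem:BthroughD} on $D=\clopint{0}{\varepsilon}$ with projection $P$, and on $D=\clopint{\varepsilon}{1}$ with projection $P^\perp$ (noting $\rho(P^\perp)=1-\varepsilon=\lambda(\clopint{\varepsilon}{1})$), and glue the two projection-valued measures into a $B'\in\B$ with $B'(\clopint{0}{\varepsilon})=P$. Since $\varepsilon<1/(2k)<1/k$, each $\widehat{e_{n(i)}}\le P\le B'\clopint{0}{1/k}$, which is the domination required by the statement.

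\textbf{Stage 3 (verifying $B\sim_k B'$).} It remains to construct a finite chain $B=B^0,B^1,\ldots,B^N=B'$ in $\B$ with consecutive pairs differing on Borel sets of $\lambda$-measure $\le 1/k$. The toolkit consists of \mref{Lemma}{lem:partition} (rearranging $B$-images along paired partitions with pieces $\le 1/(2k)$), \mref{Remark}{rmk:permute} (cyclic permutations of the $2k$-fold equipartition), and \mref{Lemma}{lem:BthroughD} (free redistribution of a projection inside a Borel set, realised as a single $\sim_k$-step). The intended flow is: first use \mref{Lemma}{lem:partition} to permute $B$'s images on a small-block partition so that some block's $B$-image dominates $P$; then, within that block, apply \mref{Lemma}{lem:BthroughD} to rearrange the PVM so that the image of $\clopint{0}{\varepsilon}$ equals $P$; finally, match the PVM on $\clopint{\varepsilon}{1}$ to $B'$'s PVM there by further within-partition swaps. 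The main obstacle is the alignment in the first move, since for a generic $B$ the projection $P$ is not dominated by $B(A)$ for any small Borel $A$; the hypothesis $\rho(P)<1/(2k)$ is exactly the Lebesgue slack needed to place $P$ inside one legal step support alongside a ``companion block'' on which the swap can act.
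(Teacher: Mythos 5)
Your Stages 1 and 2 are fine as far as they go, but Stage 3 contains the entire difficulty of the lemma and is not actually proved; worse, the specific plan you sketch for it cannot work. The obstruction you yourself flag is fatal: because you fix the subsequence in Stage 1 using only $\rho\big(\bigvee_i \widehat{e_{n(i)}}\big)$, the projection $P=\sum_i\widehat{e_{n(i)}}$ has no relation whatsoever to the spectral structure of $B$. For a generic $B$ one can have $\norm{B(A)e_{n(i)}}^2=\lambda(A)$ for every Borel $A$ (take the $e_{n(i)}$ ``uniformly spread'' in the Fourier model of \mref{Lemma}{lem:Bexists}); then $P\le B(A)$ forces $\lambda(A)=1$, so no small block of $B$ dominates $P$, and no application of \mref{Lemma}{lem:partition} can create such domination, since that lemma only relabels which small set carries which $B$\dash image (if $P\le B_1(C)$ for $C$ a union of blocks, then $P\le B(A)$ for the corresponding union of preimage blocks). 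Likewise \mref{Lemma}{lem:BthroughD} only redistributes \emph{inside} a fixed $B(D)$ and cannot import the part of $P$ living outside it. The smallness of $\rho(P)$ gives no ``Lebesgue slack'' here. What you are really asking for in Stage 3 is that $B\sim_k B'$ for two essentially arbitrary boolean structures --- i.e.\ a special case of \mref{Theorem}{thm:connected}, the very statement this lemma is a step towards; invoking it would be circular.

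The paper avoids this by choosing the subsequence \emph{adaptively, interleaved with the modifications of $B$}, and never insisting that $B'\clopint{0}{1/k}$ equal a prechosen $P$ --- only that it dominate the $\widehat{e_{n(i)}}$ for the final, much-thinned subsequence. Concretely: split $\clopint01$ into $2k$ blocks of length $\tfrac1{2k}$; by pigeonhole some block satisfies $\norm{B\clopint{\frac{l-1}{2k}}{\frac{l}{2k}}e_{n_0(i)}}^2\ge\tfrac1{2k}$ along a subsequence, and \mref{Remark}{rmk:permute} moves that block to position $\clopint{0}{\frac1{2k}}$. Then, one block $\clopint{\frac{l}{2k}}{\frac{l+1}{2k}}$ at a time, one applies \mref{Lemma}{lem:bigvee} not to the $e_n$ but to the compressed vectors $f_i=B^l\clopint{0}{\frac1k}e_{n_l(i)}$ (weakly null, with norms bounded below by the pigeonhole estimate) to find a further subsequence whose span has $\rho$\dash measure $<\tfrac1{2k}$; this leaves enough room inside $B^l\clopint0{\frac1k}$ to rearrange the PVM on the measure-$\tfrac1k$ set $\clopint{0}{\frac1{2k}}\cup\clopint{\frac{l}{2k}}{\frac{l+1}{2k}}$ (a single legal $\sim_k$ step) so that the new image of $\clopint{\frac{l}{2k}}{\frac{l+1}{2k}}$ is orthogonal to the surviving $e$'s. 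After $2k-1$ such steps the surviving $e_{n(i)}$ are orthogonal to $B'\clopint{\frac1{2k}}{1}$, hence $\widehat{e_{n(i)}}\le B'\clopint{0}{\frac1{2k}}$. The adaptive thinning and the application of \mref{Lemma}{lem:bigvee} to the $B$\dash dependent vectors $f_i$ are the ideas missing from your argument.
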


		\begin{proof}
	Choose $1\le l \le 2k$ and a subsequence of indices $n_0(i)$ in such a way that
	\[
		\left\lVert
			B \clopint{
				\tfrac{l-1}{2k}
			}{
				\tfrac l{2k}
			}
			e_{n_0 (i)}
		\right\rVert^2 \ge \frac 1{2k} .
	\]
	There exists (c.f. \mref {Remark} {rmk:permute}) $B^0 \sim_k B$ that satisfies
	\begin{align}
			&B^0(A) = B(A)
				\qquad\text{for }
				A \cap	\left(
							\clopint 0{\tfrac 1{2k} }
							\cap
							\clopint {\tfrac{l-1}{2k}}	{\tfrac l{2k}}
						\right)
						= \emptyset,	\label{eqn:lem:firststage:Z}\\
			&B^0\clopint 0 {\tfrac 1{2k}} =
				B \clopint {\tfrac{l-1}{2k}}	{\tfrac l{2k}}, \notag \\
			&B^0\clopint {\tfrac{l-1}{2k}}	{\tfrac l{2k}} =
				B\clopint 0 {\tfrac 1{2k}}.	\notag 
	\end{align}
	In particular,
	\[
		\left\lVert
			B^0 \clopint{
				0
			}{
				\tfrac 1{2k}
			}
			e_{n_0 (i)}
		\right\rVert^2 \ge \frac 1{2k} .
	\]
	There exists a  second subsequence $n_1(i) \subset  n_0 (i)$ and a boolean structure
	$B^1\in \B$ satisfying
	\begin{align}
			& B^0(A) = B^1(A)
				\qquad\text{for }
				A\cap \clopint 0{\tfrac 1k} = \emptyset,	\label{eqn:lem:firststage:A}\\
			& \widehat {e_{n_1 (i)}}
				\perp
				B^1\clopint {\tfrac 1{2k}} {\tfrac 1k},	\label{eqn:lem:firststage:B}\\
			&\left\lVert
				B^1 \clopint{
					0
				}{
					\tfrac 1{2k}
				}
				e_{n_1 (i)}			\label{eqn:lem:firststage:C}
			\right\rVert^2 \ge \frac 1{2k},
			\qquad i \ge 1 .
	\end{align}
	In fact, $\norm{f_i} \ge \tfrac 1 {2k}$ for $f_i = B^0 \clopint 0 {\tfrac 1k} e_{n_0 (i)}$.
	Therefore there exists a subsequence $g_i = B^0 \clopint 0 {\tfrac 1k} e_{n_1 (i)}$ which
	satisfies $\rho(\bigvee_{i\ge 1} \widehat {g_i}) < \tfrac 1 {2k}$ (\mref {Lemma}{lem:bigvee}).
	Thus we can choose $B^1$ so that we would not only have
	\meqref {equation} {eqn:lem:firststage:A} but also
	\[\begin{aligned}
		&
			B^0 \clopint 0 {\tfrac 1k}
			=
			B^1 \clopint 0 {\tfrac 1k}, \\
		&
			B^1 \clopint {\tfrac 1 {2k}} {\tfrac 1k}
			\perp
			g_i,
	\end{aligned}\]
	that is \meqref {equation} {eqn:lem:firststage:B} and next\meqref{}{eqn:lem:firststage:C}.
		\par
	Continuing in this way we shall find sequences of indices
	$n_0(i) \supset  n_1 (i) \supset  \ldots \supset  n_{2k-1} (i)$ and structures
	$B^0, B^1, \ldots, B ^{2k-1}$ such that
	\begin{align}
			& B^{l-1}(A) = B^l(A)
				\qquad\text{for }
				A\cap \bigg( \clopint 0{\tfrac 1{2k}} \cap
				\clopint {\tfrac l{2k}} {\tfrac {l+1}{2k}} \bigg) = \emptyset,	\label{eqn:lem:firststage:D}\\
			& \widehat {e_{n_l (i)}}
				\perp
				B^1\clopint {\tfrac l{2k}} {\tfrac {l+1}{2k}},	\label{eqn:lem:firststage:E}\\
			&\left\lVert
				B^l \clopint{
					0
				}{
					\tfrac 1{2k}
				}
				e_{n_l (i)}
			\right\rVert^2 \ge \frac 1{2k},
			\qquad i \ge 1 ,\notag 
	\end{align}
	for $1 \le l \le 2k-1$. For $B' = B^{2k-1}$ we have
	$B \sim_k B'$ (by\meqref{}{eqn:lem:firststage:D} and\meqref{}{eqn:lem:firststage:Z}), and
	\[
		\widehat {e_{n_{2k-1} (i)}}
		\perp
		B_1\clopint {\tfrac 1{2k}} 1,
	\]
	by\meqref{}{eqn:lem:firststage:D} and\meqref{}{eqn:lem:firststage:E}.
	\end{proof}

Recall the convention that $\clint 0\alpha = \{0\}$ for $\alpha = 0$.

		\begin{crr}	\label{crr:firststage}
	For $B, B_1 \in \B$  and $k\ge 1$ there exist boolean structures
	$B'$ and $B'_1$ such that
	\[
		B\sim_k B' ,\quad B_1 \sim_k B'_1
	\]
	and also for some $0 \le \alpha \le \tfrac 1k$
	\begin{equation}	\label{eqn:crr:firststage}
		B'\clint0\alpha = B'_1\clint0\alpha=:Q,
	\end{equation}
	with $\dim Q = \infty $, $\rho Q = Q \rho$.
	\end{crr}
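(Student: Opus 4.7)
The strategy is to combine two applications of \mref{Lemma}{lem:firststage} with the rearrangement tool \mref{Lemma}{lem:BthroughD}. The plan is to push a single infinite orthonormal system of $\rho$-eigenvectors into the initial interval $\clopint0{\tfrac 1k}$ of both boolean structures, and then to reshuffle each structure within $\clopint0{\tfrac 1k}$ so that the span of these eigenvectors becomes precisely $B'\clint0\alpha$ (and $B'_1\clint0\alpha$) for one common~$\alpha$.

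First, fix an orthonormal basis $(e_n)$ of $H$ consisting of eigenvectors of $\rho$, available because $\rho$ is positive trace-class. Applying \mref{Lemma}{lem:firststage} to $B$ with this basis yields a subsequence $(e_{n(i)})$ and $B^{(1)}\in\B$ with $B\sim_k B^{(1)}$ and $\widehat{e_{n(i)}}\le B^{(1)}\clopint0{\tfrac 1k}$. Applying the same lemma to $B_1$ with the orthonormal system $(e_{n(i)})$ produces a further subsequence $(f_j)$ of $(e_{n(i)})$ together with $B^{(2)}_1\in\B$ satisfying $B_1\sim_k B^{(2)}_1$ and $\widehat{f_j}\le B^{(2)}_1\clopint0{\tfrac 1k}$. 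Since $(f_j)$ remains a subsequence of $(e_{n(i)})$, we automatically have $\widehat{f_j}\le B^{(1)}\clopint0{\tfrac 1k}$.

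Now split the indices into two disjoint infinite sets $J_1\sqcup J_2$ and set
\[
	Q := \bigvee_{j\in J_1}\widehat{f_j},\qquad \alpha := \rho(Q)\le \tfrac 1k.
\]
Each $f_j$ being a $\rho$-eigenvector, $\rho Q=Q\rho$; clearly $\dim Q=\infty$ and $Q\le B^{(1)}\clopint0{\tfrac 1k}$, $Q\le B^{(2)}_1\clopint0{\tfrac 1k}$. The auxiliary projection $Q':=\bigvee_{j\in J_2}\widehat{f_j}$ is infinite-dimensional, orthogonal to $Q$, and also contained in both $B^{(1)}\clopint0{\tfrac 1k}$ and $B^{(2)}_1\clopint0{\tfrac 1k}$. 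Consequently the differences $B^{(1)}\clopint0{\tfrac 1k}-Q$ and $B^{(2)}_1\clopint0{\tfrac 1k}-Q$ are infinite-dimensional (each dominates $Q'$), and their orthogonal complements in $H$ contain $B^{(1)}\clopint{\tfrac 1k}{1}$ resp.\ $B^{(2)}_1\clopint{\tfrac 1k}{1}$, hence are infinite-dimensional too. Thus every projection we will manipulate in the last step belongs to $\alg P$; securing this membership is the main technical obstacle, and it is exactly what the splitting $J_1\sqcup J_2$ is designed to guarantee.

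Finally, modify $B^{(1)}$ only inside $\clopint0{\tfrac 1k}$. When $\alpha>0$, apply \mref{Lemma}{lem:BthroughD} to $D=\clint0\alpha$ with projection $Q$ and to $D=\opint\alpha{\tfrac 1k}$ with projection $B^{(1)}\clopint0{\tfrac 1k}-Q$, and glue the two projection-valued measures together; when $\alpha=0$ assign $B'(\{0\}):=Q$ (legitimate, as $\rho(Q)=0$ and $Q\in\alg P$) and apply \mref{Lemma}{lem:BthroughD} on $\opint0{\tfrac 1k}$ with projection $B^{(1)}\clopint0{\tfrac 1k}-Q$. Outside $\clopint0{\tfrac 1k}$ set $B':=B^{(1)}$. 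Then $B'\in\B$ and $B^{(1)}\sim_k B'$ in a single step via $A=\clopint0{\tfrac 1k}$ of $\lambda$-measure $\tfrac 1k$. The analogous recipe gives $B'_1\in\B$ with $B^{(2)}_1\sim_k B'_1$ and $B'_1\clint0\alpha=Q$. By transitivity of $\sim_k$ (concatenation of the intermediate chains) we obtain $B\sim_k B'$ and $B_1\sim_k B'_1$ with the common projection $Q$ at $\clint0\alpha$, completing the proof.
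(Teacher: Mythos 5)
Your proof is correct and follows essentially the same route as the paper: two successive applications of Lemma~\ref{lem:firststage} (the second on the subsequence produced by the first) to trap a common family of $\rho$\dash eigenvectors inside the initial interval of both structures, followed by a single rearrangement step within that interval to realise $Q$ as $B'\clint0\alpha$. The only substantive addition is your splitting $J_1\sqcup J_2$ to keep $B^{(1)}\clopint0{\tfrac1k}-Q$ infinite-dimensional, a point the paper leaves implicit but which is a legitimate (and welcome) precaution for membership in $\alg P$.
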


		\begin{proof}
	Let $\rho= \sum_{i \ge 1} \rho_n \widehat{e_n}$, where $(e_n)$ is a given
	orhonormal system, $\rho_n \ge 0$, $n \ge 1$. Using
	\mref {Lemma} {lem:firststage} we can find  subsequences $(e_n) \supset  (e_{n(i)}) \supset  (e_{m(i)})$
	and structures $B''$, $B''_1$ that satisfy
	\begin{align*}
		&B'' \sim_k B, \quad B''_1 \sim_k B_1, \\
		&
		\widehat {e_{n(i)}}
		\perp
		B''\opint {\tfrac 1{2k}} 1, \quad
		\widehat {e_{m(i)}}
		\perp
		B''_1\opint {\tfrac 1{2k}} 1,\quad i \ge 1.
	\end{align*}
	Then $Q= \sum \widehat { e_{m(i)} }$ satisfies
	\[
		Q \le B'' \clopint 0 {\tfrac 1{2k}} \wedge B''_1 \clopint 0 {\tfrac 1{2k}}
	\]
	i.e. \meqref{equation} {eqn:crr:firststage} for some
	$B' \sim_{2k} B''$, $B'_1 \sim_{2k} B''_1$.
	\end{proof}

We now come over to the second stage in the proof of \mref {Theorem} {thm:connected}.

		\begin{lem}	\label{lem:secondstage}
	A. Let $P,Q$, $P\perp Q$, be infinite-dimensional projections. For any $k\ge 1$ there
	exist mutually orthogonal projections $P_1, \ldots, P_k$ such that
	\begin{align}
		& P_1+\cdots + P_k = P+Q, 	\label{eqn:lem:secondstage:A} \\
		& PP_lP= \tfrac 1k P.			\label{eqn:lem:secondstage:B}
	\end{align}
		\par
	B. Whenever $\rho P = P\rho$ conditions \eqref{eqn:lem:secondstage:A}
	and \eqref{eqn:lem:secondstage:B} imply
	\begin{equation}
		\rho(P_l) \le \tfrac 1k \rho(P) + \rho(Q)
	\end{equation}
		\par
	C. For any partition $P=P^1 + \ldots + P^r$, conditions \eqref{eqn:lem:secondstage:A},
	\eqref{eqn:lem:secondstage:B} imply the existence of partitions
	\begin{align*}
		P_l &= P_l^1 + \ldots + P_l^r, \quad	0 \le l \le k, \\
		Q   &= Q^1 + \ldots + Q^r
	\end{align*}
	satisfying
	\begin{align}	\label{eqn:lem:secondstage:F}
		P^s + Q^s		&= \ssum_{1\le l \le k} P^s_l, \\
		P^sP_l^sP^s	&= \tfrac 1k P^s,	\notag
		\quad 1 \le s \le r,\ 1\le l \le k.
	\end{align}
	\end{lem}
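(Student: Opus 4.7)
My plan is to prove Part A by identifying $(P+Q)H$ with $PH \otimes \C^k$ via an explicit unitary and then constructing the $P_l$ from a Fourier-type change of basis on the $\C^k$ factor; Parts B and C will follow almost immediately.

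For Part A, since $QH$ is separable and infinite-dimensional I split it as $QH = Q_1H \oplus \cdots \oplus Q_{k-1}H$ with each $Q_jH$ infinite-dimensional, and fix unitaries $U_j\colon PH \to Q_jH$. The formulas $\Phi(x\otimes e_1)=x$ and $\Phi(x\otimes e_{j+1}) = U_jx$ extend to a unitary $\Phi\colon PH \otimes \C^k \to (P+Q)H$, under which $P$ corresponds to $I_{PH}\otimes \widehat{e_1}$. I then take the Fourier basis $f_l = \tfrac1{\sqrt k}\sum_{j=1}^k \omega^{lj}e_j$ of $\C^k$ (with $\omega = e^{2\pi i/k}$), so that $\abs{\langle e_1, f_l\rangle}^2 = 1/k$ for every $l$, and set $P_l = \Phi(I_{PH}\otimes \widehat{f_l})\Phi^*$. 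These are mutually orthogonal projections with $\sum_l P_l = \Phi(I\otimes I)\Phi^* = P+Q$, and a direct computation gives
\[
PP_lP = \Phi\bigl(I\otimes \widehat{e_1}\widehat{f_l}\widehat{e_1}\bigr)\Phi^* = \abs{\langle e_1, f_l\rangle}^2\, P = \tfrac1k P,
\]
as required.

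For Part B, the commutation $\rho P = P\rho$ gives the decomposition $\rho = P\rho P + P^\perp\rho P^\perp$, whence $\rho(P_l) = \tr(\rho PP_lP) + \tr(\rho P^\perp P_lP^\perp)$. The first summand equals $\tfrac1k \rho(P)$ by \eqref{eqn:lem:secondstage:B}, and the second is at most $\rho(Q)$ because $P^\perp P_lP^\perp \le P^\perp (P+Q)P^\perp = Q$ (using $Q \le P^\perp$).

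For Part C, I repeat the Part A construction block-by-block in the refinement $\{P^s\}$. With the same $\Phi$, $U_j$, and $f_l$, I set $Q^s = \sum_{j=1}^{k-1} U_jP^sU_j^*$ and $P_l^s = \Phi(P^s \otimes \widehat{f_l})\Phi^*$. Summation in $s$, using $\sum_s P^s = P$, yields $\sum_s Q^s = Q$ and $\sum_s P_l^s = P_l$; summation in $l$, using $\sum_l \widehat{f_l} = I$, yields $\sum_l P_l^s = \Phi(P^s \otimes I)\Phi^* = P^s + Q^s$; and $P^sP_l^s P^s = \tfrac1k P^s$ by the Part A computation with $P^s$ in place of $I_{PH}$. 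The real obstacle is Part A---producing projections whose $P$-compression is exactly $\tfrac1k P$; the tensor/Fourier model accomplishes this transparently, after which Parts B and C reduce to routine bookkeeping.
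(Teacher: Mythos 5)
Your proof is correct and is essentially the paper's own argument in different clothing: the paper builds the model $P'_l=\sum_n\widehat{e_{l,n}}$, $P'=\sum_n\widehat{e_{1,n}+\cdots+e_{k,n}}$ on an auxiliary space and transports it to $(P+Q)H$ by a partial isometry $U$ with $U^*P'U=P$, and this is precisely your $PH\otimes\C^k$ picture after a discrete Fourier transform on the $\C^k$ factor (the diagonal vector becomes a standard basis vector and the standard basis becomes your $f_l$). The one genuine difference is in Part C: the paper re-chooses $U$ (via a regrouping of the orthonormal system) \emph{after} the partition $P=\sum_s P^s$ is given, so its $P_l=U^*P'_lU$ a priori depend on that partition, whereas your fixed unitary $\Phi$ produces refinements $P_l^s=\Phi(P^s\otimes\widehat{f_l})\Phi^*$ of one and the same family $P_1,\dots,P_k$ for \emph{every} partition of $P$ simultaneously. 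That stronger form is actually what \mref{Lemma}{lem:thirdstage} uses, since there the same $P_l$ must be refined along two different partitions of $P$ (one coming from $B$, one from $B_1$), so your formulation is, if anything, better adapted to the application. The only cosmetic caveat is $k=1$, where splitting $QH$ into $k-1$ infinite-dimensional pieces is vacuous; that case is trivial since $P_1=P+Q$ already satisfies $PP_1P=P$.
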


		\begin{proof}
	For an arbitrary orthonormal system $(e_{l,n})^{l\le k}_{n\ge 1}$, consider projcections
	$P'_l = \sum_{n\ge1} \widehat{e_{l,n}}$, and
	$P' = \sum_{n\ge1} \widehat{(e_{1,n} + \cdots + e_{k,n})}$.
	Then $P'P'_lP' = \frac 1k P'$. We can find a partial isometry $U$ such that
	$UU^*= \sum_{1\le l \le k} P'_l$, $U^*U=P+Q$ and $U^*P'U=P$. This
	gives\meqref{}{eqn:lem:secondstage:A} and\meqref{}{eqn:lem:secondstage:B}
	for $P_l = U^*P'_lU$.
		\par
	If we also have $\rho P= P \rho$ then
	\begin{align*}
	\tr \rho P_l	&= \tr (P+P^\perp) \rho P_l  
				= \tr P \rho PP_l +  \tr P^\perp \rho P^\perp P_l  \\
				&= \tr \rho PP_lP + \tr  \rho P^\perp P_l P^\perp 
				= \tr \rho PP_lP  + \tr \rho Q P_l Q \\
				& \le \frac 1k \tr \rho P + \tr \rho Q.
	\end{align*}
		\par
	Moreover, for any partition $P = P^1 + \ldots + P^r$, the operator $U$ can be taken
	in such a way that 
	\[
		U^*P'^sU=P^s,	\quad 1\le s \le r,
	\]
	where $P'^s	= \sum_{n\ge 1} \widehat{ e^s_{1n} + \ldots + e^s_{kn} }$, for some grouping of the sequence
	$(e_{l,n})_{n\ge 1}$ into subsequences
	\[
		\big( e^1_{l,n} \big)_{n\ge 1}, \ldots, \big( e^r_{l,n} \big)_{n\ge 1}.
	\]
	It suffices to take
	\begin{align*}
		P'^s_l	&= \sum_{n\ge 1} \widehat{e^s_{ln} },	\\
		Q'^s &= \sum_{n \ge 1} \big( \widehat {e^s_{1n}} + \ldots + \widehat {e^s_{kn}} \big)
					- P'^s,
	\end{align*}
	and $P_l^s = U^*P_l'^s U$, $Q^s = U^*Q'^sU$, for $1\le s \le r$,
	$1\le l \le k$.  \end{proof}

		\begin{lem}	\label{lem:thirdstage}
	Consider a pair of boolean structures $B$, $B_1$ such that
	\[
		B\clint0\alpha = B_1\clint 0\alpha = Q,
	\]
	where $\dim Q = \infty $, $\alpha = \rho(Q) < \frac 1{4k}$, and $\rho Q = Q \rho$.  Then
	\[
		B\sim_k B_1.
	\]
	\end{lem}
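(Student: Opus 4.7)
The plan is to produce a common reference boolean structure $\tilde B$ built from the dilation partition of \mref{Lemma}{lem:secondstage}, and to show $B\sim_k\tilde B$ and $B_1\sim_k\tilde B$; the conclusion then follows from the (obvious) transitivity of $\sim_k$.

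First, apply \mref{Lemma}{lem:secondstage}, part~A, with $P:=Q^\perp$ (which is infinite-dimensional since $\alpha<1$) to obtain $P_1+\cdots+P_k=1_H$ with $Q^\perp P_l Q^\perp=\tfrac1k Q^\perp$. Using $\rho Q=Q\rho$, part~B then yields $\rho(P_l)\le\tfrac{1-\alpha}{k}+\alpha$, which may still exceed the $k$-equivalence threshold $\tfrac1k$. To obtain pieces strictly below $\tfrac1k$ I would fix a partition $Q^\perp=R^1+R^2$ with $R^s\rho=\rho R^s$ and $\rho(R^s)\le\tfrac{1-\alpha}{2}$ (which exists by splitting spectral subspaces of $\rho\restr{Q^\perp H}$) and refine via \mref{Lemma}{lem:secondstage}, part~C, to $P_l=P_l^1+P_l^2$, $Q=Q^1+Q^2$ with $R^s P_l^s R^s=\tfrac1k R^s$. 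The commutation $[\rho,R^s]=0$ makes the cross terms in the analogue of the part~B computation vanish, giving $\rho(P_l^s)\le\tfrac{1-\alpha}{2k}+\rho(Q^s)\le\tfrac{1-\alpha}{2k}+\alpha<\tfrac3{4k}<\tfrac1k$, and this is exactly where the hypothesis $\alpha<\tfrac1{4k}$ is used decisively. Using \mref{Corollary}{crr:BthroughPexists}, choose $\tilde B\in\B$ with $\tilde B\clint 0\alpha=Q$ that sends an enumeration of consecutive half-open subintervals (of lengths $\rho(P_l^s)<\tfrac1k$) onto the refined projections $P_l^s$.

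To show $B\sim_k\tilde B$, observe that $B\clint 0\alpha=\tilde B\clint 0\alpha=Q$; applying \mref{Lemma}{lem:BthroughD} inside $QH$ produces a single step of $k$-equivalence localised on $\clint 0\alpha$ (of $\lambda$-measure $\alpha<\tfrac1{4k}<\tfrac1k$) that replaces $B$ on $\clint 0\alpha$ by $\tilde B$ on $\clint 0\alpha$. On $\clopint\alpha1$ the same pattern is iterated: for each refined subinterval $I$ of length $\rho(P_l^s)<\tfrac1k$, one step of $k$-equivalence supported on $I$ and built via \mref{Lemma}{lem:BthroughD} replaces the current image of $B$ on $I$ by the prescribed $P_l^s$. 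The dilation identity $Q^\perp P_l Q^\perp=\tfrac1k Q^\perp$ is what guarantees that each intermediate assignment is still a projection-valued measure compatible with $\rho$, so that every intermediate object remains in $\B$. The same argument applied to $B_1$ in place of $B$ (with the same $\tilde B$) gives $B_1\sim_k\tilde B$, and hence $B\sim_k B_1$.

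The heart of the argument, and its main obstacle, is the bookkeeping in this last step: one must explicitly exhibit a chain $B=B^0,B^1,\ldots,B^N=\tilde B$ in which every $B^n$ lies in $\B$ and every modification $B^{n-1}\to B^n$ is confined to a single $\lambda$-set of measure $\le\tfrac1k$. The strengthened hypothesis $\alpha<\tfrac1{4k}$ is what provides precisely the required slack: it keeps both the refined piece sizes $\rho(P_l^s)$ and the single adjustment on $\clint 0\alpha$ safely below $\tfrac1k$.
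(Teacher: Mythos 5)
Your overall strategy (dilate into a uniform partition $(P_l)$ with small $\rho(P_l)$ and connect both $B$ and $B_1$ to a structure realizing it) is the same as the paper's, but the execution breaks down at exactly the point you label ``bookkeeping''. If $B^{n-1}$ and $B^n$ differ only on subsets of a set $A_n$, then $B^n(A_n^c)=B^{n-1}(A_n^c)$ forces $B^n(A_n)=1_H-B^{n-1}(A_n^c)=B^{n-1}(A_n)$: a single step of $k$\dash equivalence supported on $I$ can only rearrange the projection-valued measure \emph{inside} $B(I)$, never change $B(I)$ itself. So your step ``replace the current image of $B$ on $I$ by the prescribed $P_l^s$'' is impossible unless $B(I)=P_l^s$ already, and the dilation identity $Q^\perp P_lQ^\perp=\tfrac1k Q^\perp$ gives nothing of the sort (indeed $P_l^s$ is not even dominated by $Q^\perp$, since $\tsum_l P_l^s$ contains a piece of $Q$). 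This obstruction is precisely what the paper's proof is built around: it first relocates the pieces $Q^s$ of $Q$ over subintervals $\clopint{\alpha^{s-1}}{\alpha^s}$ of $\clint0\alpha$, and then performs the $s$\dash th modification on the set $A^s\cup\clopint{\alpha^{s-1}}{\alpha^s}$, using the identity $\tsum_l P^s_l=B(A^s)+Q^s$ to guarantee that the total image of the modified set is preserved; the hypothesis $\alpha<\tfrac1{4k}$ is what keeps $\lambda(A^s)+\rho(Q^s)<\tfrac1{2k}$. Relatedly, your target $\tilde B$ is inconsistent: since $\tsum_l P_l=1_H$, the refined pieces satisfy $\tsum_{l,s}\rho(P_l^s)=1$, so you cannot both set $\tilde B\clint0\alpha=Q$ and fit disjoint intervals of lengths $\rho(P_l^s)$ into $\clopint\alpha1$. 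In the paper the final structures $B'$, $B'_1$ send partitions of all of $\clopint01$ onto $(P_l)$; these partitions differ between the two sides, so one still needs \mref{Lemma}{lem:partition} to conclude, a step your plan omits.

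Two smaller points. The splitting $Q^\perp=R^1+R^2$ with $R^s\rho=\rho R^s$ and $\rho(R^s)\le\tfrac{1-\alpha}2$ need not exist: if $\rho\restr{Q^\perp H}$ has an eigenvalue larger than $\tfrac{1-\alpha}2$, any $\rho$\dash commuting splitting puts the corresponding eigenvector wholly on one side. The paper avoids any commutation requirement by refining along $B$ itself, taking $P=\tsum_s B(A^s)$ with $\lambda(A^s)\le\tfrac1{4k}$, and it uses $4k$ pieces $P_1,\dots,P_{4k}$ so that part B of \mref{Lemma}{lem:secondstage} gives $\rho(P_l)\le\tfrac1{4k}+\rho(Q)<\tfrac1{2k}$ directly, which is the bound \mref{Lemma}{lem:partition} requires.
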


		\begin{proof}
	Denote $Q^\perp = P = B\opint\alpha1$. Using \mref{Lemma} {lem:secondstage}
	we can find projections $P_1 + \cdots + P_{4k} = 1_H$, such that
	\begin{equation}	\label{eqn:PPlP}
		PP_lP = \tfrac 1{4k} P
	\end{equation}
	We shall focus our attention on $B$ for a while.
	Consider any partition into disjoint sets $A^1 \cup \ldots \cup A^{4k}= \opint \alpha 1$,
	with $0 < \lambda(A^s) \le \frac 1{4k}$.
		\par
	This partition generates (\mref {Lemma} {lem:secondstage} A., C.)
	\begin{align*}
		P		&= B(A^1) + \cdots + B(A^{4k}),	\\
		P_{l}		&= P^1_{l} + \cdots +  P^{4k}_{l},	\\
		Q		&= Q^1 + \cdots + Q^{4k},
	\end{align*}
	so that
	\[
		\ssum_{1\le l \le 4k} P^s_{l} = B(A^s) + Q^s, 
			\quad P^s_{l} = \tfrac 1{2k} P_{l} B(A^s) P_{l}
			\quad 1\le l,s \le 4k.
	\]
	We construct $B^0, B^1, \ldots, B^{2k} \in \B$ in the following way. Let
	\begin{align*}
		B^0(A) &= B(A)	\quad \text{for } A\cap \clint 0\alpha = \emptyset \\
		Q^s &\le B^0 \clint {\alpha^{s-1}}{\alpha^s}, \quad \text{for }
		\alpha^s = \rho(Q^1 + \cdots + Q^s), \text{ (then $\alpha^{4k}=\alpha$)},
	\end{align*}
	($\alpha^{s-1} = \alpha^s$ is possible, then
	$\clint {\alpha^{s-1}}{\alpha^s} =\{\alpha^s\}$ c.f. \mref{Corollary}{crr:BthroughPexists}).
	Obviously $B^0 \sim_{4k} B$. We now define
	\begin{align*}	
		B^s(A) &= B^{l-1}(A)\quad \text {for }
			A\cap \left( \clopint {\alpha^{s-1}}{\alpha^s} \cup A^s \right) = \emptyset
	\intertext{and}
		B^s(A^s_{l}) &= P^s_{l}
	\end{align*}
	for some partition $A^s_1 \cup \ldots \cup A^s_{2k} =
					A^s \cup \clopint{\alpha^{l-1}}{\alpha^s}$,
	(then the value of $B^{s}\{\alpha^s\}$ is also uniquely defined).
	Obviously $B\sim_{2k} B^{\alpha +1}$.
	Finally, as is easy to check
	\[
		B^{4k}(A^1_{l} \cup \ldots \cup A^{4k}_{l}) = P_{l}.
	\]
	We have obtained $B'=B^{4k}$ which satisfies
	\[
		B' \sim_{2k} B, \quad B'(A_{l}) = P_{l}
	\]
	for $A_{l} = A^1_{l} \cup \ldots \cup A^{4k}_{l}$.
		\par
	In a similar way we can build $B'_1$ such that
	\[
		B'_1 \sim_{2k} B_1, \quad B'_1(A_{1,l}) = P_{l}.
	\]
	for some partition $A_{1,1}\cup \ldots \cup A_{1,{4k}} = \clopint  01$.
	Moreover, we have
	\[
		\lambda(A_{l}) = \lambda(A_{1,l}) = \rho(P_{l})
	\]
	and $\rho(P_{l}) < \frac 1{2k}$ (by \eqref{eqn:PPlP}, the assumption
	$\rho(Q) < \tfrac 1{4k}$ and~\mref {Lemma} {lem:secondstage} B).
	\mref {Lemma} {lem:partition} leads to $B'\sim_k B'_1$. \end{proof}

The proof of \mref {Theorem} {thm:connected} follows directly from
\mref {Corollary} {crr:firststage} and \mref {Lemma} {lem:thirdstage}.

	\section{Description of quantum information} \label{sec:description}

Before we can combine  \mref {Theorem} {thm:connected} with
\mref {Theorem} {thm:commutative} which
describes the information on a single boolean structure we still need the following lemma.

		\begin{lem}	\label{lem:rational}
	Consider two continuous informations on boolean structures $I$ and $I_1$
	% on $(\bor \clopint 01, \lambda)$
	and let $\m$, $\m_1$ denote the measures of their corresponding nonsymmetric parts,
	(c.f. \mref {Theorem} {thm:commutative}).
	For $l\ge 1$ the condition
	\begin{equation}	\label{eqn:lem:rational:A}
		\clopint 0 {\tfrac 1l}\subset A_1
			\implies I(\bold A) = I_1(\bold A) ,
	\end{equation}
	for any partition $\bold A = (A_1, \ldots, A_n)$ of $\clopint 01$, implies
	\begin{equation}	\label{eqn:lem:rational:B}
		\m \left(\clopint 0 {\tfrac kl} \right) =
				\m_1\left(\clopint 0 {\tfrac kl} \right),	\quad\text{for all }1\le k \le l.
	\end{equation}
	\end{lem}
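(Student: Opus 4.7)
The plan is to apply \mref{Theorem}{thm:commutative} to both $I$ and $I_1$ and work with the differences $\Delta\m = \m - \m_1$ and $\Delta I_s = I_s - I_{1,s}$. Here $\Delta\m$ is a signed measure with $\Delta\m(\clopint01)=0$, absolutely continuous with respect to $\lambda$, while $\Delta I_s$ is again symmetric and additive under products. Hypothesis~\eqref{eqn:lem:rational:A} then becomes
\[
	\Delta I_s\bigl(\lambda(A_1),\ldots,\lambda(A_n)\bigr) + \sum_{i=1}^{n}\Delta\m(A_i)\log\lambda(A_i) = 0
\]
for every partition $\bold A$ with $\clopint0{1/l}\subset A_1$, and the task is to extract $\Delta\m(\clopint0{k/l})=0$ from this family of identities.

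First I would eliminate the opaque symmetric term by exploiting the fact that $\Delta I_s$ depends only on the multiset of measures. Fixing $A_1=\clopint0{1/l}$ and considering a three\dash element partition $(A_1,A_2,A_3)$ with $A_2,A_3\subset\clopint{1/l}{1}$ of prescribed but distinct measures, varying $A_2,A_3$ while keeping their measures fixed leaves both $\Delta I_s$ and the $A_1$ contribution unchanged, so $\Delta\m(A_2)\log\lambda(A_2)+\Delta\m(A_3)\log\lambda(A_3)$ must be constant. Combined with $\Delta\m(A_2)+\Delta\m(A_3)=\Delta\m(\clopint{1/l}1)$, this pins down $\Delta\m(A_2)$ as a function of $\lambda(A_2)$ alone; a routine refinement handles the degenerate case $\lambda(A_2)=\lambda(A_3)$. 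Absolute continuity then delivers a constant $c\in\R$ with $\Delta\m\restr{\clopint{1/l}{1}}=c\lambda\restr{\clopint{1/l}{1}}$.

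Writing $F(\alpha):=\Delta\m(\clopint0\alpha)$, the preceding paragraph yields $F(\alpha)=F(1/l)+c(\alpha-1/l)$ for every $\alpha\ge 1/l$, while $F(1)=0$ forces $F(1/l)=-c(l-1)/l$. To determine $c$, I would specialise the main identity to the two\dash piece partition $(\clopint0\alpha,\clopint\alpha1)$, obtaining
\[
	\Delta I_s(\alpha,1-\alpha) = -F(\alpha)\bigl[\log\alpha-\log(1-\alpha)\bigr], \qquad \alpha\ge 1/l.
\]
By the symmetry $\Delta I_s(\alpha,1-\alpha)=\Delta I_s(1-\alpha,\alpha)$ and the same formula with $\alpha$ replaced by $1-\alpha$ (valid whenever $1-\alpha\ge 1/l$), for any $\alpha\in\clint{1/l}{1-1/l}$ with $\alpha\ne 1/2$ one obtains $F(\alpha)+F(1-\alpha)=0$. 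Substituting the affine expression for $F$ gives $2F(1/l)+c(1-2/l)=0$, and together with $F(1/l)=-c(l-1)/l$ this forces $c=0$; consequently $F(k/l)=0$ for every $1\le k\le l$. The corner cases $l=1,2$, where the admissible range of $\alpha$ collapses, are treated directly: $l=1$ is trivial, while $l=2$ requires a separate inspection of low\dash dimensional partitions.

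The main obstacle is the rigidity step $\Delta\m\restr{\clopint{1/l}1}=c\lambda$, which rests on carefully exploiting the combinatorial freedom in the choice of $A_2,A_3$ to cancel the inscrutable symmetric part; thereafter the symmetry of $\Delta I_s$ together with the boundary condition $\Delta\m(\clopint01)=0$ forces $c=0$ by a short linear computation.
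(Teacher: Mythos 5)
Your argument reaches the correct conclusion, but for $l\ge 3$ only, and by a genuinely different route from the paper's. The paper never isolates $\Delta\m$ on $\clopint{1/l}{1}$ and never touches the two\dash cell partitions: it works entirely inside the finite $\sigma$\dash field $\alpha$ generated by the $l$ intervals $\clopint{i/l}{(i+1)/l}$, notes that the hypothesis together with the permutation invariance of the canonical form gives $I=I_1$ on every $\alpha$\dash measurable partition, and then, for any two disjoint atoms $V,W$ of measure $1/l$, chooses an $\alpha$\dash measurable $\bold A$ with $V\subset A_1$, $W\subset A_2$ and $\lambda(A_2)=2\lambda(A_1)$. Swapping $V$ and $W$ between $A_1$ and $A_2$ preserves the distribution of measures, so the symmetric parts cancel and the difference of the two values of $I$ collapses to $\big[\m(V)-\m(W)\big]\log 2=\m(V)-\m(W)$; hence $\m(V)-\m(W)=\m_1(V)-\m_1(W)$ for all atoms, and averaging over $W$ with $\m\big(\clopint01\big)=\m_1\big(\clopint01\big)=0$ gives $\m=\m_1$ on every atom --- including $\clopint0{1/l}$ --- in one stroke. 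Your version instead first pins down $\Delta\m$ on $\bor\clopint{1/l}1$ as $c\lambda$ (the three\dash cell variation plus absolute continuity) and then kills $c$ by comparing $\Delta I_s(\alpha,1-\alpha)$ with $\Delta I_s(1-\alpha,\alpha)$ and using $F(1)=0$. That is sound for $l\ge3$: it costs you an explicit appeal to the symmetry of $I_s$ and to absolute continuity, and it buys you a marginally stronger conclusion, namely $\Delta\m\big(\clopint0\alpha\big)=0$ for every $\alpha\ge 1/l$ rather than only at the points $k/l$.

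The one genuine gap is $l=2$, which you defer to ``a separate inspection of low\dash dimensional partitions.'' That inspection cannot succeed, because the statement is false for $l=2$. Take $\Delta\m$ with density $-c$ on $\clopint0{1/2}$ and $+c$ on $\clopint{1/2}1$, and $\Delta I_s(\pmb p)=-c\sum_i p_i\log p_i+c\log\max_i p_i$; both summands are symmetric and additive under $\otimes$, so $\Delta I_s$ is a symmetric information. Whenever $\clopint0{1/2}\subset A_1$ one has $\lambda(A_1)=\max_i\lambda(A_i)$ and a direct computation gives $\sum_i\Delta\m(A_i)\log\lambda(A_i)=-\Delta I_s\big(\lambda(A_1),\ldots,\lambda(A_n)\big)$; hence $I:=I_1+\Delta I_s\circ\lambda+\sum\Delta\m(\cdot)\log\lambda(\cdot)$ is a continuous information agreeing with $I_1$ on every partition with $\clopint0{1/2}\subset A_1$, while by the uniqueness clause of \mref{Theorem}{thm:commutative} its measure differs from $\m_1$ on $\clopint0{1/2}$ by $-c/2\neq0$. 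The paper itself quietly restricts to $l\ge3$ (and only ever invokes the lemma with $l$ large, in \mref{Lemma}{lem:mBsame}), so you should simply do the same --- state the lemma for $l\ge3$, note that $l=1$ is vacuous --- rather than promise a repair of the $l=2$ case.
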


		\begin{proof}
	%We shall use some notation from the paper~\cite{S}.
	We can assume that $l \ge 3$. Let $\varepsilon \le \tfrac 1l$.
	Set
	\[
		\alpha = \sigma \left(\left\{\clopint {\tfrac il} {\tfrac {i+1} l}: \quad i= 0, \ldots, l-1 \right\}\right ),
	\]
	(the $\sigma${\dash}field generated by a partition).
		\par
	Observe that \meqref {condition} {eqn:lem:rational:A} implies
	\begin{equation}	\label{eqn:lem:rational:Aprimed}
		I(\bold A)= I_1(\bold A) \quad\text{for }\sigma(\bold A) \subset  \alpha
	\end{equation}
	Given arbitrary disjoint boolean sets $V,W$ and a partition $\bold A = (A_1, A_2,\ldots, A_n)$ such that
	$V \subset  A_1$, $W \subset  A_2$, by $\T VW \bold A$ we shall denote the partition
	\[
		(A_1\symdiff V \symdiff W, A_2\symdiff V \symdiff W,\ldots, A_n).
	\]
	Since $l \ge 3$, for any disjoint $V, W \in \alpha$ with $\mu(V) = \mu(W) = \tfrac 1l$
	there is a partition with $\sigma(\bold A) \subset  \alpha$, $V \subset  A_1$, $W \subset  A_2$ and such that
	$\lambda (A_2) = 2 \lambda (A_1)$.
	Then $\sigma(\T VW \bold A) \subset  \alpha$ and \mref {Theorem} {thm:commutative} gives
	\begin{align}
		I(\T VW \bold A) - I(\bold A)
			&=[\,\m(A_2\symdiff V \symdiff W)\log(2\lambda(A_1))  +  
				\m(A_1\symdiff V \symdiff W)\log(\lambda(A_1))\,]		\notag\\
			&\phantom{=} - [\,\m(A_2)\log(2\lambda(A_1)) +
				\m(A_1)\log(\lambda(A_1))\,] 							\notag\\
			&= [\,\m(A_2\symdiff V \symdiff W) - \m(A_2)\,] \log 2 			\label{eqn:IImm}\\
			&=\m(V)-\m(W).										\notag
	\end{align}
	In the same way
	\[
					I_1(\T VW \bold A) - I_1(\bold A)
					=\m_1(V)-\m_1(W).
	\]
	We have obtained that
	\[
					\m(V)-\m(W)=\m_1(V)-\m_1(W).
	\]
	This is satisfied also when $W=V$. Avaraging this equality over all considered $W$ and using
	$\m(\clopint 01) = \m_1(\clopint 01) = 0$ we get
	\[
		\m(V)=\m_1(V)
	\]
	i.e. \meqref {equation} {eqn:lem:rational:B}
	\end{proof}

From now on we fix a continous information $I: \bold P \mapsto I(\bold P) \in \R$.
Given a boolean structure $B \in \B$ by $\m_B$ we shall denote the measure
corresponding to the non-symmetric part of the continous 
information on a boolean structure $I \circ B:\bold A \mapsto (I \circ B)(\bold A)$, where
$(I \circ B)(A_1,\ldots, A_m) = I(B(A_1),\ldots, B(A_m))$.

		\begin{lem}		\label{lem:mBsame}
	For $B\clopint 0\alpha = P = B_1\clopint 0\alpha$ we have
	$\m_B(\clopint 0\alpha) = \m_{B_1}(\clopint 0\alpha)$. \end{lem}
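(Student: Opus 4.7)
The plan is to deduce the lemma from \mref{Lemma}{lem:rational} applied to the pair of continuous informations $I\circ B$ and $I\circ B_1$ on $\bor\clopint 01$, after rational approximation of $\alpha$.

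First I would treat the case $\alpha = k/l$ rational. \mref{Lemma}{lem:rational} would then deliver $\m_{B}(\clopint 0{k/l}) = \m_{B_1}(\clopint 0{k/l})$ as soon as $(I\circ B)(\bold A) = (I\circ B_1)(\bold A)$ for every partition $\bold A$ of $\clopint 01$ with $\clopint 0{1/l} \subset A_1$. Since the bare hypothesis $B\clopint 0\alpha = P = B_1\clopint 0\alpha$ is too weak for this, my first step would be to replace $B, B_1$ by auxiliary structures $B', B_1'$ that share a common fine refinement: using \mref{Lemma}{lem:BthroughD} in each of $PH$ and $P^\perp H$, I fix projections $Q^{(1)} + \cdots + Q^{(l)} = 1_H$ with $\rho(Q^{(j)}) = 1/l$ and $Q^{(1)} + \cdots + Q^{(k)} = P$, and construct $B', B_1'$ that both send $\clopint{(j-1)/l}{j/l}$ to $Q^{(j)}$ for each $j$.

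It then remains to show (i) that this replacement does not change $\m_B(\clopint 0\alpha)$ or $\m_{B_1}(\clopint 0\alpha)$, and (ii) that $(I\circ B')$ and $(I\circ B_1')$ agree on all partitions $\bold A$ with $\clopint 0{1/l} \subset A_1$. For (ii) I would appeal to \mref{Lemma}{lem:partition}, which allows one to transport any such partition into a form subordinate to the common $l$\dash subdivision via $k$\dash equivalences that do not alter the values of the informations on the transported partitions. Part (i) I expect to be the main obstacle: it amounts to showing that $\m_B(\clopint 0\alpha)$ depends only on the projection $P = B\clopint 0\alpha$ and not on how $B$ refines $P$. I would establish (i) inductively along the small step chains provided by \mref{Theorem}{thm:connected}, arranging at each step that the fixed projection $P$ is preserved and tracking the effect on $\m_B$ via the formula of \mref{Theorem}{thm:commutative} applied to the two\dash element partition $(\clopint 0\alpha, \clopint \alpha 1)$.

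For irrational $\alpha$, I would pass to the limit using the absolute continuity of $\m_B$ and $\m_{B_1}$ guaranteed by \mref{Theorem}{thm:commutative}. Approximating $\alpha$ by rationals $k_n/l_n \to \alpha$ and invoking the same independence of $\m_B(\clopint 0{k_n/l_n})$ from the internal refinement of $B$ (applied to nested projections $B\clopint 0{k_n/l_n}$ tending strongly to $P$), the rational case then yields $\m_B(\clopint 0\alpha) = \m_{B_1}(\clopint 0\alpha)$ in the limit.
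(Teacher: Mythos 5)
Your step (i) is where the argument collapses into circularity: the assertion that $\m_B(\clopint 0\alpha)$ depends only on the projection $P=B\clopint 0\alpha$ and not on how $B$ refines it \emph{is} precisely the statement of the lemma, so it cannot be deferred to an auxiliary ``replacement'' step. Your proposed way of establishing it --- tracking $\m_B(\clopint 0\alpha)$ along the chains of \mref{Theorem}{thm:connected} via \mref{Theorem}{thm:commutative} applied to the two\dash element partition $(\clopint 0\alpha,\clopint\alpha1)$ --- cannot work at this stage, because that formula only pins down the combination $I_s^B(\alpha,1-\alpha)+\m_B(\clopint 0\alpha)\log\tfrac{\alpha}{1-\alpha}$, and the symmetric parts $I_s^B$, $I_s^{B_1}$ are not yet known to coincide (that is \mref{Lemma}{lem:Isame}, which is proved \emph{after}, and using, the present lemma). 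Step (ii) is also problematic: making $B'$ and $B_1'$ agree on the $\sigma$\dash field generated by the $l$\dash subdivision does not give $(I\circ B')(\bold A)=(I\circ B_1')(\bold A)$ for \emph{every} partition with $\clopint 0{1/l}\subset A_1$, since the remaining cells may be arbitrary Borel subsets of $\clopint{1/l}{1}$ on which $B'$ and $B_1'$ still differ; \mref{Lemma}{lem:partition} only produces a $k$\dash equivalent structure with relabelled values, it does not equate the informations of two given structures on such partitions.

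The paper avoids both problems with a single gluing: introduce $B_2$ equal to $B$ on $\bor\clopint 0\alpha$ and to $B_1$ on $\bor\opint\alpha1$, and compare $B$ with $B_2$ and $B_2$ with $B_1$ separately. In each comparison the two structures agree on an entire half of the interval, so for every rational cut point on the appropriate side of $\alpha$ the hypothesis of \mref{Lemma}{lem:rational} (or its mirror image) holds verbatim: all cells but one of the relevant partitions lie in the region of agreement, hence the image partitions under the two structures are literally equal and so are their informations. This gives equality of $\m_B$ and $\m_{B_2}$ (resp.\ $\m_{B_2}$ and $\m_{B_1}$) at rational points approaching $\alpha$, and the absolute continuity of these measures with respect to $\lambda$ (from \mref{Theorem}{thm:commutative}) passes to $\alpha$ itself. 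In particular no separate treatment of rational versus irrational $\alpha$, and no prior knowledge that $\m_B(\clopint 0\alpha)$ is independent of the refinement, is required.
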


		\begin{proof}
	Suppose first that $B(A) = B_1(A)$ for any  $A\in \bor \opint \alpha1$. Using
	\mref {Theorem} {thm:connected} and \mref{Lemma} {lem:rational} we can easly show
	that $\m_B =\m_{B_1}$ on $\clopint 0 {\frac l k}$ for any $\frac lk > \alpha$. By
	absolute continuity of $\m_B$, $\m_{B_1}$ with respect to $\lambda$ we get
	\[
		\m_B \clopint 0 \alpha  =  \m_{B_1} \clopint 0 \alpha
	\]
		\par
	The same conclusion will hold if $B(A) = B_1(A)$ for any  $A\in \bor \clopint 0\alpha$.
	For a general $B_1$ we only need to introduce $B_2$ so that
	\[
		B_2(A) = 	\left\{\begin{array}{ll}
				B(A);		\quad  &A\in\bor \clopint 0\alpha	\\
				B_1(A);	\quad  &A\in\bor \opint \alpha1
				\end{array}\right . .
	\]
	\end{proof}

Given last result we can define one
$\m:\alg P \to \R$ by setting
\[
	\m(P)=\m_B(P)
\]
whenever $B$ is a boolean structure with $B\clopint 0\alpha = P$
(\mref {Lemma} {lem:BthroughPexists}).

		\begin{lem}	\label{lem:misbounded}
	If a countinous information $I$ is bounded (see
	\mref {Definition} {df:Ibounded}) then the function $\m:\alg P \to \R$ that it
	generates is bounded and countably additive. \end{lem}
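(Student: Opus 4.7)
The plan is to derive both assertions from a single strengthening of \mref{Lemma}{lem:mBsame}, namely: for every $B \in \B$ and every $A \in \bor\clopint01$,
\[
	\m_B(A) = \m(B(A)).
\]
To establish this, I fix a measure-preserving Borel automorphism $\sigma$ of $\clopint01$ that carries $A$ onto $\clopint0{\lambda(A)}$ (such a $\sigma$ exists by the standard Borel isomorphism theorem) and define $B^*(E) := B(\sigma^{-1}(E))$; then $B^*$ is again a boolean structure. The functorial identity $(I\circ B^*)(\bold E) = (I\circ B)(\sigma^{-1}\bold E)$, combined with the uniqueness clause of \mref{Theorem}{thm:commutative}, forces $I_{s,B^*} = I_{s,B}$ and $\m_{B^*}(E) = \m_B(\sigma^{-1}(E))$. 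Since $B^*\clopint0{\lambda(A)} = B(A)$, \mref{Lemma}{lem:mBsame} applied to $B^*$ at $\clopint0{\lambda(A)}$ yields $\m_{B^*}\clopint0{\lambda(A)} = \m(B(A))$, while $\m_{B^*}\clopint0{\lambda(A)} = \m_B(A)$ by the pushforward relation; the identity follows.

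Granted this extended identity, countable additivity is immediate. Given orthogonal $P_1, P_2, \ldots \in \alg P$ with $P := \sum_k P_k \in \alg P$, put $\alpha_k := \rho(P_1 + \cdots + P_k)$ and invoke \mref{Lemma}{lem:BthroughPexists} to obtain $B \in \B$ with $B\clopint{\alpha_{k-1}}{\alpha_k} = P_k$; in particular $B\clopint0{\rho(P)} = P$. Since $\m_B$ is an absolutely continuous signed measure on $\bor\clopint01$, its $\sigma$-additivity together with the extended identity gives
\[
	\m(P) = \m_B\clopint0{\rho(P)} = \sum_k \m_B\clopint{\alpha_{k-1}}{\alpha_k} = \sum_k \m(P_k).
\]

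For boundedness, \mref{Theorem}{thm:commutative} applied to the two-element partition $(\clopint0\alpha, \clopint\alpha1)$ with $\alpha = \rho(P) \ne \tfrac12$ and $B\clopint0\alpha = P$ gives
\[
	\m(P) \log \tfrac{\alpha}{1-\alpha} = I(P, P^\perp) - I_{s,B}(\alpha, 1-\alpha).
\]
\mref{Definition}{df:Ibounded} bounds $I(P, P^\perp)$ uniformly on $\{P : \rho(P) = \alpha\}$. The main obstacle is to bound $I_{s,B}(\alpha, 1-\alpha)$ uniformly in $P$; note that this quantity depends on $B$ only through $B\clopint0\alpha = P$, by the same uniqueness argument as in the first paragraph. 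My approach is to combine the tensor-additivity of $I_s$ with the additivity $I(\bold P \tim \bold Q) = I(\bold P) + I(\bold Q)$ for physically independent partitions: taking $\bold Q = (Q, Q^\perp)$ with $\rho(Q) = \tfrac12$ and $\bold Q \indep (P, P^\perp)$, applying \mref{Theorem}{thm:commutative} to the four-element partition $(P, P^\perp) \tim \bold Q$ and using the additivity produces a linear relation that expresses $I_{s,B}(\alpha, 1-\alpha)$ as a finite sum of $I$-values on two-element partitions (at $\rho$-levels $\alpha$, $\tfrac12$, and $\tfrac{\alpha}{2}, \tfrac{1-\alpha}{2}$), each bounded by hypothesis. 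This yields $|\m(P)| \le C_\alpha$ for each $\alpha \ne \tfrac12$, and countable additivity applied to $P = P_1 + P_2$ (with $\rho(P_i)$ conveniently placed away from $\tfrac12$) then propagates the estimate to a uniform bound over all of $\alg P$. This uniform control of the symmetric piece $I_{s,B}$ is the core technical difficulty of the proof.
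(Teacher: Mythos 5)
Your countable-additivity argument is essentially the paper's, and your explicit justification of the identity $\m_B(A)=\m(B(A))$ for arbitrary Borel $A$ (via a measure-preserving automorphism pushing $A$ to $\clopint0{\lambda(A)}$ and the uniqueness clause of \mref{Theorem}{thm:commutative}) is a legitimate way to fill a step the paper uses silently when it writes $\m_B(A_n)=\m(P_n)$ for sets $A_n$ that are not initial intervals. That half is fine.

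The boundedness half has a genuine gap, and it is not merely a technical one. Your identity $\m(P)\log\tfrac{\alpha}{1-\alpha}=I(P,P^\perp)-I_{s,B}(\alpha,1-\alpha)$ does force you to bound $I_{s,B}(\alpha,1-\alpha)$ uniformly in $P$, and the tensoring device you propose cannot deliver this. Writing out the relation $I\big((P,P^\perp)\tim(Q,Q^\perp)\big)=I(P,P^\perp)+I(Q,Q^\perp)$ with $\rho(Q)=\tfrac12$, decomposing the four-element partition by \mref{Theorem}{thm:commutative} and using $I_s(\pmb p\otimes\pmb q)=I_s(\pmb p)+I_s(\pmb q)$, every symmetric term cancels against its counterpart on the right-hand side, and what survives is the identity $\sum_{ij}\m(P_iQ_j)\log\rho(P_iQ_j)=\m(P)\log\tfrac{\alpha}{1-\alpha}$, which is an automatic consequence of the additivity of $\m$ and carries no bound. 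This is structural: $I_s$ is a free parameter of the decomposition (adding any symmetric information to $I$ preserves additivity and leaves $\m$ unchanged), so no combination of additivity relations among $I$-values of product partitions can control it. The paper's proof never bounds $I_s$ at all; instead it compares $I(\T{V'}{W'}\bold A)$ with $I(\bold A)$ for the transposed partition of \eqref{eqn:IImm}, i.e.\ for two partitions with the \emph{same} $\lambda$-profile $(\tfrac13,\tfrac23)$, so that the symmetric part cancels exactly and the difference equals $\m_B(V)-\m_B(W)$, while each of the two $I$-values is bounded by $M$ directly from \mref{Definition}{df:Ibounded} at level $\tfrac13$. Averaging over $W$ in a finite uniform algebra and using $\m_B(\clopint01)=0$ then isolates $\abs{\m_B(V)}\le 2M$, and absolute continuity extends this to all Borel sets. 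Some such cancellation of the symmetric part is indispensable; without it your boundedness claim does not close.
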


		\begin{proof}
	Suppose that for every partition $\bold P=(P,P^\perp)$  with $\rho(P)=1/3$ we
	have $\abs{I(\bold P)} \le M$. We will show that $\abs{\m(P)}\le 2M$ for any
	projection $P$, such that $\rho(P) \le 1/3$.
	Consider any $P \in \alg P$ as stated, and a boolean structure $B$ through $P$,
	i.e. $B(A) = P$, for some $A \in \bor \clopint 01$. Let $\m_B$ denote
	the measure corresponding to the non-symmetric part of $I_B$. Fix
	any set $V\in \bor \clopint 01$ with rational Lebesgue measure which is
	not greater that $1/3$. For any set $W\in \bor \clopint 01$ such that
	$\lambda(V) = \lambda(W)$ we can easily show that
	\[
		- 2M \le \m_B(V) -\m_B(W) \le 2M.
	\]
	In fact, take $V'=V\setminus W$, $W'=W\setminus V$, a partition
	$\bold A = (A_1,A_2)$, $V' \subset  A_1$, $W' \subset  A_2$ with
	$\lambda (A_2) = 2 \lambda (A_1) = 2/3$ and use a version of \eqref{eqn:IImm}.

	For some $l \ge 1$ we can avarage this inequality over all sets
	\[
		W \in \sigma(\{\clopint {i/l}{(i+1)/l}\}: 0 \le i < l\}), \quad\text{with }\lambda(W)=\lambda(V).
	\]
	Now using the fact that
	$\m_B(\clopint01) = 0$ we obtain
	$- 2M \le \m_B(V)  \le 2M$. Since $V$ was arbitrary
	with rational $\lambda(V) \le 1/3$ and since $\m_B$ is continous with respect
	to~$\lambda$ it follows that $- 4M \le \m_B(C)  \le 4M$ for any
	$C \in \bor \clopint 01$. This proves the boundedness
	of $\m$.
	
	To prove countable-additivity consider a partition
	$\sum_{n\ge 1} Q_n = Q$ of a projection $Q$ with $Q,Q_n \in \alg P$.
	Let $(P_n) = (Q_n; n\ge 1, \rho(Q_n) > 0)$ and $P= \sum P_n$.
	Then there exists a boolean structure~$B$ and a partition $A = \sum A_n$ into
	disjoint sets such that $B(A_n)=P_n$, $B(A) = P$
	(see proof of \mref {Lemma} {lem:BthroughPexists}). Then
	\[
		\m(Q) = \m(P) = \m_B(A) = \sum \m_B(A_n) = \sum \m(P_n) =\sum \m(Q_n)
	\]
	\end{proof}
	
In order to be able to use Gleason's theorem (in its classical form  given by \mref{Theorem} {thm:Gleason}) we need to extend $\m$ to the familly of all projections.

		\begin{lem}	\label{lem:mextension}
	Each function $\m:\alg P \to \R$ countably-additive on orthogonal projections
	has a unique extension to a countably-additive function $\mt:\proj H \to \R$.
	If $\m$ is bounded so is $\mt$.
	\end{lem}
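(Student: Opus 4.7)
The plan is to handle $\proj H \setminus \alg P$, which consists exactly of the nonzero finite-rank projections $F$ together with the projections $P\ne 1_H$ of finite corank. It suffices to define $\mt$ on finite-rank $F$ and then set $\mt(P):=\m(1_H)-\mt(P^\perp)$ for $P$ of finite corank. For a finite-rank $F\ne 0$, I would pick any orthogonal decomposition $F^\perp = Q+S$ into two infinite-dimensional projections (available because $F^\perp$ is separable and infinite-dimensional); then $Q,S\in \alg P$, and $F+Q\in \alg P$ because its complement is $S$. Forced additivity dictates
\[
	\mt(F) := \m(F+Q) - \m(Q),
\]
and uniqueness drops out as soon as the value is shown to be independent of $Q$.

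The core step is establishing this independence. First I would record a monotonicity: whenever $Q'\in \alg P$ is orthogonal to $Q$, contained in $F^\perp$, and $F^\perp - Q - Q'\in \alg P$, all four projections $F+Q$, $F+Q'$, $Q+Q'$, $F+Q+Q'$ lie in $\alg P$, and two applications of countable additivity of $\m$ within $\alg P$ give
\[
	\m(F+Q+Q') = \m(F+Q) + \m(Q') = \m(F+Q') + \m(Q), \qquad \m(Q+Q') = \m(Q) + \m(Q'),
\]
whence $\m(F+Q)-\m(Q) = \m(F+Q+Q')-\m(Q+Q')$. Given two valid choices $Q_1,Q_2$ together with an auxiliary $Q_3\in \alg P$ orthogonal to both, such that $Q_i+Q_3$ is again valid for $i=1,2$, applying the monotonicity twice yields
\[
	\m(F+Q_1)-\m(Q_1) = \m(F+Q_3)-\m(Q_3) = \m(F+Q_2)-\m(Q_2).
\]

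Such $Q_3$ can be extracted directly from the orthocomplement of $Q_1\vee Q_2$ inside $F^\perp$ whenever that orthocomplement is infinite-dimensional. The main obstacle I anticipate is the pathological configuration when this orthocomplement is finite-dimensional (in particular when $Q_1\vee Q_2=F^\perp$), as then no single $Q_3$ is simultaneously orthogonal to both $Q_1,Q_2$ with the required properties. Here the comparison must be routed through a finite chain of intermediate valid projections $Q_1\sim Q_1'\sim Q_2'\sim Q_2$, each adjacent pair admitting a common auxiliary obtained by splitting $F^\perp - Q$ into two infinite-dimensional pieces; transitivity of the chain of monotonic equalities then closes the argument.

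Once $\mt$ is defined, countable additivity on an arbitrary orthogonal family $(Q_n)$ with $\sum Q_n = Q$ will follow by observing that at most one summand and at most one further sum can fall outside $\alg P$ (since orthogonality forces the exceptional finite-rank and finite-corank contributions to be mutually constrained), then decomposing each exceptional projection via the definition and applying the countable additivity of $\m$ on $\alg P$. Boundedness is immediate: if $|\m|\le M$ on $\alg P$, the definition gives $|\mt(F)|\le 2M$ for finite-rank $F$, so $|\mt(P)|\le |\m(1_H)|+2M$ for $P$ of finite corank.
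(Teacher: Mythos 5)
Your construction of $\mt$ and the well\dash definedness argument are essentially the paper's own proof, only phrased for finite\dash rank blocks rather than one\dash dimensional atoms: the paper sets $\m_e:=\m(P+e)-\m(P)$ for one\dash dimensional $e$ and auxiliary $P\in\alg P$, proves independence of $P$ via exactly your additivity identity $\m(P+e)+\m(Q)=\m(Q+e)+\m(P)$ together with a three\dash link chain $P\sim R\sim S\sim Q$ through mutually orthogonal auxiliaries (your $Q_1\sim Q_1'\sim Q_2'\sim Q_2$, needed for the same reason: a single common auxiliary may not exist), then defines $\mt(P)=\sum_i\m_{e_i}$ for $P=\sum_i\widehat{e_i}$ and handles finite corank through the complement. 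Your boundedness argument is also the paper's.

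There is, however, a genuine gap in your countable\dash additivity step. The claim that in a decomposition $Q=\sum_n Q_n$ ``at most one summand \dots can fall outside $\alg P$'' is false: orthogonality rules out two finite\dash corank summands, but there may be infinitely many nonzero finite\dash rank summands --- e.g.\ $1_H=\sum_n\widehat{e_n}$ for an orthonormal basis, where \emph{every} summand lies outside $\alg P$. Your case analysis therefore misses the general case, and the block definition $\mt(F)=\m(F+Q)-\m(Q)$ does not by itself allow you to sum infinitely many such corrections. This is precisely what the paper's atomization buys: since $\mt(P)=\sum_i\m_{e_i}$ over \emph{any} orthonormal basis of $PH$ and the value is independent of the ordering, the series converges unconditionally (hence absolutely), and countable additivity reduces to regrouping one absolutely convergent numerical series. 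To close your version you would either need to pass to one\dash dimensional atoms in the same way, or prove directly that $\mt$ of the tails $\sum_{n>N}Q_n$ tends to $0$, which is not immediate from your definition when infinitely many $Q_n$ have finite rank.
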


		\begin{proof}
	Let $\alg S$ be the family of one-dimensional projections in~$H$. Given
	$e \in \alg S$, $P\in \alg P$ such that $e \perp P$ we write
	\[
		m^P_e  =  m(P+e) - m(P)
	\]
	We claim that $m^P_e$ does not depend on $P$. In fact, for $Q\in \alg P$, $e \perp Q$
	with $P \perp Q$, and $(P+Q)^\perp$ infinitely dimensional we have
	\[
		P+Q+e \in \alg P.
	\]
	Additivity of $\m$ gives
	\[
		\m(P+e)+\m(Q)=\m(Q+e) + \m(P)
	\]
	Thus
	\begin{equation}	\label{eqn:lem:mextension:A}
		\m^P_e = \m^Q_e
	\end{equation}
	For arbitrary $P,Q \in \alg P$, with $e \perp P$, $e \perp Q$ there exist $R, S \in \alg P$
	such that
	\begin{align*}
		&R\perp (P+e),\quad	S\perp(Q+e),\quad	R\perp S,	\\
		&(P+R)^\perp\!\!\!,\ (R+S)^\perp\!\!\!,\ (S+Q)^\perp
			\quad\text{are all infinite dimensional}
	\end{align*}
	Using\meqref{}{eqn:lem:mextension:A} we have
	\begin{equation}	\label{eqn:lem:mextension:B}
		\m^P_e = \m^R_e = \m^S_e = \m^Q_e
	\end{equation}
	Now, we can define
	\[
		\m_e = \m_e^P,
		\quad\text{for any }
		P\in \alg P, e \perp P .
	\]
	We show now that for $P \in \alg P$
	\begin{equation}	\label{eqn:lem:mextension:C}
		\m(P)=\sum \m_{e_i}\quad\text{if }P=\sum e_i .
	\end{equation}
	Indeed, take mutually orthogonal $Q_i \in \alg P$, with $P+\sum Q_i \in \alg P$.
	Then
	\begin{align*}
		\m(P)	&=\m(P+\sum Q_i) - m(\sum Q_i)	\\
				&=\sum \m(e_i+Q_i) - \sum m(Q_i)	
				=\sum \m_{e_i}.
	\end{align*}
	We are ready to define $\mt$,
	\[
		\mt(P) = \sum m_{e_i}
	\]
	if $P=\sum e_i$, $P \in \proj H$.
	To show that $\mt$ is well defined consider first finitely-dimensional projection
	$P=e_1+\cdots + e_n = f_1 + \cdots + f_n$, $e_i, f_i \in \alg S$.
	Take any $Q=\sum_{j\ge 1} g_j\in \alg P$ orthogonal to $P$, $g_j \in \alg S$.
	By\meqref{}{eqn:lem:mextension:C}
	\[
		\sum \m_{e_i} + \sum \m_{g_j} = \m(P+Q) = 
		\sum \m_{f_i} + \sum \m_{g_j}.
	\]
	We conclude showing that $\mt$ is well defined by considering $P\in \proj H$ such
	that $P^\perp = e_1 + \cdots e_n$ for some $e_i \in \alg S$. For any $(e_{n+i})_{i\ge 1}$ 
	such that $\sum_{i \ge 1} e_{n+i} = P$ we have
	\[
		\mt(P) = \sum_{i \ge 1} \m_{e_{n+i}} =
		\ssum_{i \ge 1} \m_{e_{i}} - \ssum_{1 \le i \le n} \m_{e_{i}}=
		0 - \mt (P^\perp).
	\]
	Countable additivity follows easily from definition of $\mt$. If $\m$ is bounded
	we need to show boundedness of $\mt$ on finitely dimensional projections.
	This follows from
	\[
		\mt(P) =  \m(P+Q) - \m(Q)
	\]
	whenever $P$ is finitely dimensional and $Q\in \alg P$ is orthogonal to it.
	\end{proof}
	
		\begin{lem}	\label{lem:Gleason}
	Given a bounded, countably-additive function $\mt:\proj H \to \R$ there exists a trace-class operator
	$\mu=\mu^*$, such that $\tr \mu =0$ and
	\[
	\mt (P) = \tr \mu P	\quad\text{for each }P\in\proj H.
	\]
	\end{lem}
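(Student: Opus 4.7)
The plan is to reduce to Gleason's theorem (Theorem~\ref{thm:Gleason}) via a Jordan-type decomposition of $\mt$ into two non-negative $\sigma$-additive functions on $\proj H$. For each $P\in\proj H$ set
\[
	\mt^+(P) := \sup\{\mt(Q):Q\in\proj H,\ Q\le P\},\qquad \mt^-(P) := \mt^+(P) - \mt(P).
\]
Boundedness of $\mt$ forces $0\le \mt^\pm(P) \le 2K$ where $K=\sup_Q|\mt(Q)|$, and by construction $\mt=\mt^+-\mt^-$ with both $\mt^\pm$ monotone and vanishing at $0$. The first task is to show that $\mt^+$ and $\mt^-$ are each countably additive on mutually orthogonal sequences in $\proj H$.

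Once the decomposition is in hand, set $c^\pm := \mt^\pm(1_H) \ge 0$. If $c^\pm > 0$, the renormalisations $p^\pm := \mt^\pm/c^\pm$ take values in $[0,1]$ and satisfy the hypotheses of Gleason's theorem (Theorem~\ref{thm:Gleason}), yielding states $\rho^\pm$ (positive trace-class, $\tr\rho^\pm=1$) with $\mt^\pm(P)=c^\pm\tr(\rho^\pm P)$; if $c^\pm=0$, monotonicity forces $\mt^\pm\equiv 0$ and the term drops out. Setting $\mu := c^+\rho^+ - c^-\rho^-$ gives a self-adjoint trace-class operator satisfying $\mt(P)=\tr(\mu P)$ for all $P\in\proj H$, with $\tr\mu=c^+-c^-=\mt(1_H)$. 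The identity $\tr\mu=0$ follows from $\mt(1_H)=\m(1_H)=\m_B(\clopint 01)=0$ for any boolean structure $B$, by Theorem~\ref{thm:commutative}.

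The main obstacle is establishing finite additivity of $\mt^+$ on orthogonal pairs, namely $\mt^+(P_1+P_2) \le \mt^+(P_1)+\mt^+(P_2)$ for $P_1\perp P_2$. The opposite inequality is immediate upon summing near-optimal $Q_i\le P_i$. The delicate direction is genuinely non-commutative: a subprojection $Q\le P_1+P_2$ need not split as $Q_1+Q_2$ with $Q_i\le P_i$. I would handle this by approximating $Q$ via spectral projections of the compressions $P_iQP_i$, each a positive operator on $P_iH$ with spectrum in $[0,1]$: for small $\varepsilon>0$ the spectral projection of $P_iQP_i$ on $[1-\varepsilon,1]$ produces an orthogonal pair $Q_i'\le P_i$ such that $Q-(Q_1'+Q_2')$ has uniformly small $\mt$-mass, controlled via countable additivity of $\mt$ applied to a spectral partition together with the global bound $K$. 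Letting $\varepsilon\to 0$ delivers the required inequality. Countable additivity of $\mt^\pm$ then follows from countable additivity of $\mt$ by a monotone-class-type limiting argument on the suprema defining $\mt^\pm$, after which the Gleason step and the assembly of $\mu$ are routine.
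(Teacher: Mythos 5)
Your reduction collapses at the first step: the function $\mt^+(P)=\sup\{\mt(Q):Q\le P\}$ is superadditive but in general \emph{not} additive on orthogonal projections, so the Jordan decomposition you posit does not exist in this form. Concretely, take an orthonormal pair $f_1,f_2$, put $\mu_0=\widehat{f_1}-\widehat{f_2}$ and $\mt(P)=\tr \mu_0 P=\norm{Pf_1}^2-\norm{Pf_2}^2$; this $\mt$ is bounded and countably additive, hence within the hypotheses of the lemma. For $e_1=(f_1+f_2)/\sqrt2$ and $e_2=(f_1-f_2)/\sqrt2$ one has $\mt(\widehat{e_1})=\mt(\widehat{e_2})=0$, so $\mt^+(\widehat{e_1})=\mt^+(\widehat{e_2})=0$, while $\mt^+(\widehat{e_1}+\widehat{e_2})\ge\mt(\widehat{f_1})=1$. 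In general $\sup_{Q\le P}\tr\mu_0Q=\tr\big((P\mu_0P)_+\big)$, and the positive part of a compression is not additive in $P$. So the direction you call ``delicate'' is not delicate --- it is false, and no spectral approximation of the compressions $P_iQP_i$ can recover it. (A splitting of $\mt$ into two positive countably additive parts does exist, but only a posteriori, by decomposing the representing operator as $\mu_+-\mu_-$; there is no intrinsic lattice formula for it, so it cannot serve as input to Gleason's theorem.)

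The paper avoids any decomposition. With $M$ a bound for $\abs{\mt}$ on one-dimensional projections, the function $P\mapsto\mt(P)+M\dim P$ is a nonnegative additive frame function on $\proj K$ for every subspace $K$ with $3\le\dim K<\infty$; the finite-dimensional Gleason theorem gives a positive operator representing it, and subtracting $M1_K$ yields a self-adjoint $\mu_K$ with $\tr\mu_KQ=\mt(Q)$ for $Q\in\proj K$. Uniqueness forces the compatibility $\mu_K=P_K\mu_LP_K$ for $K\subset L$, so the matrix elements $\bra e\mu\ket f:=\bra e\mu_K\ket f$ are well defined, and countable additivity applied to $P=\sum_k\widehat{e_k}$ shows that $\mu$ is trace class with $\tr\mu P=\mt(P)$ for all $P\in\proj H$ (the vanishing of $\tr\mu$ coming, as you note, from $\mt(1_H)=0$ in the application). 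If you insist on a two-positive-states argument, you would first have to extend $\mt$ to a bounded linear functional on $B(H)$ (the Mackey--Gleason problem), which is at least as hard as the lemma itself; the finite-dimensional patching is the economical route.
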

	
		\begin{proof}
	For any space $K \subset H$, $3\le \dim K < \infty$, consider nonnegative additive functions
	$\proj K \ni P \mapsto \mt(P) + M\dim P$. Making use of Gleason's theorem we obtain
	an operator $\mu_K = \mu^*_K$ such that $\tr \mu_K Q = \mt(Q)$ for $Q \in \proj K$.
	The operator $\mu_K$ is uniquely defined by~$K$. In particular,
	\[
		\mu_K = P_K\mu_L P_K \quad \text {for } K \subset L \subset H,
	\]
	where $P_K$ is the orthogonal projection of $L$ onto~$K$. Thus
	\[
		\bra e \mu \ket f = \bra e \mu_K \ket f,	\quad\text{for }  K \ni e,f,
	\]
	is well defined. Then
	\[
		\mt \left(\sum \widehat {e_k}\right) = \sum \mt \big( \widehat{e_k} \big) 
			= \sum \bra {e_k} \mu \ket {e_k},
	\]
	for any orthonormal sequence $(e_k)$. This means that $\mu$ is a trace-class operator
	and $\tr \mu P = \mt (P)$ for any $P \in \proj H$.
	
	\end{proof}

The last three lemmas imply that given an information $I$ and a boolean structure $B$ there exists
a symmetric commutative information $I^B_s$ such that
\begin{equation}		\label{eqn:IB}
	I_s^B (\rho B(\bold A)) = I(B(\bold A)) - \sum_i \tr \mu B(A_i) \log \rho(B(A_i))
\end{equation}

		\begin{lem}	\label{lem:Isame}
	Given any $B, B_1$ we have
	\[
		I_s^B = I_s^{B_1}.
	\]
	\end{lem}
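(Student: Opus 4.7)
The plan is to use connectedness of the space of boolean structures (\mref{Theorem}{thm:connected}) to reduce the equality $I_s^B = I_s^{B_1}$ to the case where $B, B'$ differ only on a small Borel set, and then to handle that case by absorbing the set into a single atom of a suitable partition. First I will establish the following \emph{local step}: if $B, B' \in \B$ satisfy $B(A) = B'(A)$ whenever $A \cap A_* = \emptyset$, and if $\bld p = (p_1,\ldots,p_m)$ is a probability distribution with $p_j \ge \lambda(A_*)$ for some index $j$, then $I_s^B(\bld p) = I_s^{B'}(\bld p)$. The argument is essentially automatic: pick a partition $\bold A = (A_1,\ldots,A_m)$ of $\clopint01$ with $\lambda(A_i) = p_i$ and $A_* \subset A_j$; then $A_i \cap A_* = \emptyset$ for $i \ne j$ forces $B(A_i) = B'(A_i)$, and taking orthogonal complements gives $B(A_j) = B'(A_j)$ as well. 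Applying \meqref{equation}{eqn:IB} to $B$ and to $B'$ and subtracting, the right-hand sides coincide term by term.

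Next I will remove the hypothesis $p_j \ge \lambda(A_*)$ by a product trick. Writing $\alpha = \lambda(A_*)$, assume $\alpha \le \tfrac{1}{2m}$ and set $\bld q = (\tfrac 12, \tfrac 12)$. Then $\bld q$ has an entry $\ge \alpha$, and since $\max_i p_i \ge \tfrac 1m$, the product distribution $\bld p \otimes \bld q$ has an entry at least $\tfrac{1}{2m} \ge \alpha$. The local step therefore applies to both $\bld q$ and $\bld p \otimes \bld q$, and the defining additivity of symmetric information yields
\[
I_s^B(\bld p) = I_s^B(\bld p \otimes \bld q) - I_s^B(\bld q) = I_s^{B'}(\bld p \otimes \bld q) - I_s^{B'}(\bld q) = I_s^{B'}(\bld p).
\]

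Finally, for arbitrary $B, B_1 \in \B$ and any fixed probability distribution $\bld p$ of length $m$, I will pick $k \ge 2m$ and invoke \mref{Theorem}{thm:connected}: it provides a chain $B = B^0, \ldots, B^N = B_1$ in $\B$ in which consecutive structures differ only on a Borel set of Lebesgue measure at most $\tfrac 1k \le \tfrac{1}{2m}$. The extended local step then gives $I_s^{B^{n-1}}(\bld p) = I_s^{B^n}(\bld p)$ along the entire chain, and telescoping yields $I_s^B(\bld p) = I_s^{B_1}(\bld p)$. Since $\bld p$ is arbitrary, $I_s^B = I_s^{B_1}$.

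The main subtlety I anticipate is the product trick: the auxiliary distribution $\bld q$ must be chosen so that both $\bld q$ itself and the product $\bld p \otimes \bld q$ contain an atom large enough to absorb $A_*$, regardless of how unevenly $\bld p$ is distributed. The symmetric choice $\bld q = (\tfrac12,\tfrac12)$ handles this uniformly, at the price of enlarging $k$ in the connectedness step to shrink $\lambda(A_*)$ below $\tfrac{1}{2m}$; everything else reduces to the simple observation that once $A_* \subset A_j$, the full partition $B(\bold A)$ is invariant under the change from $B$ to $B'$, so \meqref{equation}{eqn:IB} gives the equality of symmetric parts for free.
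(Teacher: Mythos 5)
Your proof is correct and follows essentially the same route as the paper: use \mref{Theorem}{thm:connected} with $k$ large enough that each perturbation set in the chain fits inside a single atom of a partition realizing $\bld p$, note that the image partition is then unchanged, and conclude from \eqref{eqn:IB}. The product trick with $\bld q=(\tfrac12,\tfrac12)$ is harmless but redundant, since $\max_i p_i\ge\tfrac1m>\tfrac1{2m}\ge\lambda(A_*)$ already lets your local step apply directly to $\bld p$ (the paper simply assumes $p_1>0$ and takes $k>1/p_1$).
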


		\begin{proof}
	Fix $p_1,\ldots, p_n \ge 0$ with $\sum p_i=1$. We can assume that $p_1 >0$, and let $k > \tfrac 1{p_1}$.
	According to \mref {Theorem} {thm:connected} we have
	$B \sim_k B_1$. This means that there are $C_1, \ldots, C_M \in B\clopint 01$ and structures
	$B=B^0,\ldots,B^M=B_1$ 
	such that $B^{m-1} (A) = B^m(A)$ for $A\cap C_m = \emptyset$, $\lambda(C_m) \le 1/k$.
		\par
	Fix $1 \le m \le M$. Consider $\bold A = (A_1, \ldots, A_n)$ such that
	\[
		\lambda(A_i) = p_i	\quad\text{and}\quad	C_m \subset  A_1.
	\]
	then $B^{m-1} (\bold A) = B^m (\bold A)$.
	Finally,
	\begin{align*}
		I_s^{B^{m-1}}\!\big((p_i)\big)	= I\big(B^{m-1}(\bold A)\big) &-
			\sum_i \tr \mu B(A_i) \log p_i\\
					= I\big(B^{m}(\bold A)\big) &-
			\sum_i \tr \mu B(A_i) \log p_i
					= I_s^{B^m}\!\big((p_i)\big).
	\end{align*}
	\end{proof}

\mref {Theorem} {thm:main} is a direct consequence of \eqref{eqn:IB} and of
\mref {Lemma} {lem:Isame}.

	\section{Remarks and Conjectures} \label{sec:remarks}

	\subsection{Remarks on continuous quantum information}

We have shown that if information $I$ (defined on partitions $\bold P = \big( (P_1,\ldots, P_m)$:
$P_i \in \alg P \big)$ satisfies
\begin{enumerate}
	\item[($\alpha$)] is continous, in the sense of \mref {definition} {df:Icontinous},
	\item[($\beta$)] is bounded, in the sense of \mref {definition} {df:Ibounded},
\end{enumerate}
then it is of form~\eqref{eqn:thm:main}.

		\begin{rmk}
	Replacing the condition ($\beta$) with the simpler
	\begin{enumerate}
		\item[($\beta'$)] the set $\{I(P,P^\perp): P\in \alg P\}$ is bounded,
	\end{enumerate}
	would be too restrictive. \end{rmk}
	In fact, we will construct an information $I$ which satisfies ($\alpha$), ($\beta$) and does not satisfy~($\beta'$).
	Let $\rho$ be a state, which we now assume to be faithful (i.e. $\rho (A), A \ge 0$ implies $A=0$) and
	let $B$ be arbitrary boolean structure. Consider any
	\[
		\mu= - \widehat f_0 + \ssum_{i\ge 1}\tfrac 1 {2^i} \widehat f_i, \quad f_i \in B\clopint {2^{-2^{2i+2}}}{2^{-2^{2i}}},
		\quad \norm{f_i}= 1,
	\]
	and let
	\[
		I(\bold P) = \sum \mu(P_i) \log \rho(P_i)
	\]
	Since $\rho$ is faithful information $I$ satisfies condition ($\alpha$). While
	the function $P \mapsto \tr \mu P$ is bounded $I$ satisfies condition~($\beta$).
	However, for $P_n=B\clopint 0{2^{-2^{2n}}}$ we have
	\[
		\lim_{n\tendsto \infty} I(P_n,P_n^\perp) = \lim_{n\tendsto \infty}  \tr \mu P_n \tim \log \rho(P_n)=
		\lim_{n\tendsto \infty} {\tfrac 1{2^{n-1}}} \log 2^{-2^{2n}} = -\infty.
	\]

		\begin{rmk}
	The assumption ($\alpha$) is necessary. The boundedness ($\beta$) alone does not imply 
	\eqref{eqn:thm:main} of \mref{Theorem} {thm:main}. \end{rmk}
	
This is shown by the following example. Let $\xi$ be any nonnegative continous
functional on $l_\infty$, that satisfies $\xi((a_n)) = a$ whenever $\lim_{n\tendsto \infty}a_n=a$. Let $(e_n)$ be
an orthonormal system in $H$. The function $\m(P)=\rho(P)-\xi\big((\norm{Pe_n}^2)\big)$ for $P\in\alg P$
is finitely additive, however not countably additive on mutually orthogonal projections. Moreover $\m(1)=0$.
The function $I(\bold P)=\sum \m(P_i)\log \rho(P_i)$ satisfies the condition of boundedness ($\beta$) and
is not of shape~\eqref{eqn:thm:main}.

It is not obvious whether the condition $(\beta)$ is indispensible for getting~\eqref{eqn:thm:main}.
Before we pose other questions let us formulate a weaker condition of continuity.
\begin{enumerate}
	\item[($\alpha'$)] Whenever $P_1 \le P_2 \le \ldots  \in \alg P$, $P_n\tendsto P$, with $\rho(P)<1$ we have
		$I(P_n,P_n^\perp) \tendsto I(P,P^\perp)$.
\end{enumerate}

		\begin{questions}
	Is it true that for information $I$ the condition ($\alpha$) implies ($\beta$)?
	Does ($\alpha'$) imply ($\alpha$)? Do ($\alpha'$),($\beta$) imply ($\alpha$)?
	\end{questions}

	\subsection{Quantum information with no continuity assumptions}

The paper \cite{S} investigates informations on boolean structures $\bold A \to I(\bold A)\subset \R$ on
borel partitions $\bold A=(A_1,  \ldots, A_m)$ of the interval $\clopint 01$,
with no assumptions about continuity. Then we have the following
general result. (\cite {S}, Theorem~1)
%that satisfy
%	\[
%		I(\bold A \tim \bold B) = I(\bold A)  + I(\bold B)
%	\]
%when  $\bold B=(A_1,  \ldots, A_n)$,  
%$\bold A \tim \bold B = (A_i\cap B_j: 1\le i \le m, 1 \le j \le n)$,
%$\lambda(A_i\cap B_j) = \lambda(A_i)\lambda(B_j)$.

Let $(\widehat \R,+)$ be the additive group of all endomorphisms of $(\R,+)$. Given
an information $I$ on a boolean structure there exists exactly one 'endomorphism-valued measure'
$A\mapsto \m(A)(\cdot) \in \widehat \R$ and exactly one symmetric information~$I_s$
on distributions $p_1+\ldots +p_n = 1$, such that

\[
	I(\bold A) = (I_s\lambda)(\bold A) + \sum_{i=1}^m \m(A_i) \big(\log\lambda(A_i)\big).
\]
By an endomorphism-valued measure we hereby mean a function satisfying
$\m(A)(\cdot) =\m(A_1)(\cdot)  + \ldots + \m(A_m)(\cdot)$
for $A_i\cap A_j = \emptyset$,$ i\neq j$, $\bigcup A_i = A$,

The following result which is analogical to \mref{Theorem} {thm:main}
can be easily obtained.

		\begin{thm}	\label{thm:mainonQ}
	For any information $I$ on partitions $\bold P = (P_1, \ldots, P_m)$ of $1_H$, with
	$P_i\in \alg P$ there exists a mapping
	$P \mapsto \m(P)(\cdot) \in \widehat \R$  defined on projections
	$P\in \alg P$, $\rho(P) \in \Q$ (the set of rationals) and a function $I_s$ defined on distributions
	$\pmb p = (p_1,\ldots,p_m)$, $p_i \in \bold Q$ such that
	\begin{equation}	\label{eqn:thm:mainonQA}
		\begin{gathered}
			\m(P_1 + \ldots +P_n) (\cdot) = \sum \m(P_i)(\cdot)\\
			I_s(\pmb p \tim \pmb q) = I_s(\pmb p) + I_s(\pmb q)
		\end{gathered}
	\end{equation}		
	for $\rho(P_i),p_i,q_j\in\bold Q$ and
	\begin{equation}
		I(\bold P) = \ssum_{1\le i \le n} \m(P_i) \big( \log \rho(P_i) \big)
					+ (I_s\rho)(\bold P)	\label{eqn:thm:mainonQB}
	\end{equation}
	for any partition $\bold P = (P_1,\ldots, P_n)$ with $\rho(P_i)\in \Q$.\end{thm}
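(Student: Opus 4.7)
The strategy is to imitate the proof of Theorem~\ref{thm:main} from Section~\ref{sec:description}, substituting the continuity-free commutative result quoted from \cite{S} (stated in the prose preceding the theorem) for Theorem~\ref{thm:commutative}, and bypassing Lemmas~\ref{lem:misbounded}--\ref{lem:Gleason} entirely, since no Hilbert-space representation of $\m$ is asserted. The key structural ingredients that still work without continuity are Theorem~\ref{thm:connected} (whose proof in Section~\ref{sec:connectedness} nowhere uses continuity of $I$) and Lemma~\ref{lem:BthroughPexists}.

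First, for every boolean structure $B \in \B$ apply the result of \cite{S} to the information $I \circ B$ on $\bor\clopint 01$. This yields an endomorphism-valued measure $A \mapsto \m_B(A)(\cdot) \in \widehat\R$ and a symmetric information $I_s^B$ such that
\[
(I \circ B)(\bold A) = (I_s^B\lambda)(\bold A) + \sum_i \m_B(A_i)(\log \lambda(A_i)).
\]
Next, establish an endomorphism-valued rational analogue of Lemma~\ref{lem:rational}: if two such informations $I, I_1$ on $\bor\clopint 01$ agree on every partition refining $\clopint 0{1/l}$, then $\m(\clopint 0{k/l})(\log q) = \m_1(\clopint 0{k/l})(\log q)$ for every $q \in \Q^+$. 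The computation leading to \eqref{eqn:IImm} transfers verbatim; by varying the ratio $\lambda(A_2)/\lambda(A_1)$ over arbitrary $c \in \Q^+$ one obtains
\[
(\m(V) - \m(W))(\log c) = I(\T VW \bold A) - I(\bold A) = (\m_1(V) - \m_1(W))(\log c).
\]
Since $\{\log c : c \in \Q^+\}$ is a $\Q$-linear subspace of $\R$ and every endomorphism of $(\R,+)$ is automatically $\Q$-linear, the two endomorphisms agree on the whole of this subspace, which contains all values that will appear on the right-hand side of \eqref{eqn:thm:mainonQB}. No absolute-continuity argument is required.

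Then adapt Lemma~\ref{lem:mBsame}: whenever $B\clopint 0\alpha = P = B_1\clopint 0\alpha$ with $\alpha \in \Q$, the combination of Theorem~\ref{thm:connected} with the previous step gives $\m_B(\clopint 0\alpha)(\log q) = \m_{B_1}(\clopint 0\alpha)(\log q)$ for all $q \in \Q^+$. Using Lemma~\ref{lem:BthroughPexists}, define for every $P \in \alg P$ with $\rho(P) \in \Q$
\[
\m(P)(\log q) := \m_B(A)(\log q) \quad \text{whenever } B(A) = P,
\]
and extend $\Q$-linearly to all of $\R$. Additivity on rational-measure partitions follows from the additivity of each $\m_B$ together with Lemma~\ref{lem:BthroughPexists}. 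Finally, imitate Lemma~\ref{lem:Isame}: for a rational distribution $(p_1, \ldots, p_n)$ choose $k > 1/p_1$, use $B \sim_k B_1$ to interpolate by $k$-equivalent boolean structures, and at each step pick a shape-$(p_i)$ partition whose exceptional set $C_m$ sits in $A_1$, obtaining $I_s^B = I_s^{B_1}$ on $\Q$-distributions. Combining the two produces \eqref{eqn:thm:mainonQB}.

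The main obstacle is the first new step: verifying that the natural adaptation of Lemma~\ref{lem:rational} pins down $\m(V)$ at enough points of $\R$ for the formula to close. Since the values actually needed are logarithms of positive rationals, and these form a $\Q$-subspace on which any additive map is determined by its restriction to a spanning set, the $\Q$-linearity built into $\widehat\R$ plays exactly the role that absolute continuity played in the original proof.
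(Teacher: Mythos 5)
Your proposal follows essentially the same route as the paper's own (very brief) proof sketch: apply the non-continuous commutative result of \cite{S} along each boolean structure, prove an endomorphism-valued analogue of \mref{Lemma}{lem:rational}, and then repeat the arguments of \mref{Lemmas}{lem:mBsame} and~\ref{lem:Isame} using \mref{Theorem}{thm:connected}, with no need for the Gleason-type \mref{Lemmas}{lem:misbounded}--\ref{lem:Gleason}. Your added observation --- that $\Q$\dash linearity of endomorphisms of $(\R,+)$ on the subgroup $\{\log q : q\in\Q^+\}$ takes over the role played by absolute continuity of $\m$ in the continuous case, which is precisely why the statement is restricted to rational values of $\rho(P_i)$ --- correctly supplies the detail the paper leaves implicit.
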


		\begin{proof}
	An analogue of \mref {Lemma} {lem:rational} can be obtained for any (non-continous)
	informations $I$, $I_1$ and their endomorphism-valued measures $\m$, $\m_1$ on sets.
	Subsequently, \mref{Theorem} {thm:connected} can be used, just as in the proofs
	of \mref{Lemma}{lem:mBsame} and \mref{Lemma}{lem:Isame} to define the required
	endomorphism-valued measure $\m$ and the symmetric information~$I_s$.	\end{proof}

The following concjecture is much more interesting.

		\begin{con}
	In \mref {Theorem} {thm:mainonQ} the function $\m$ satisfying
	\eqref{eqn:thm:mainonQA} can  be defined for any $P \in \alg P$, $I_s$ can be definded for
	any distribution $\pmb p$, and
	\eqref{eqn:thm:mainonQB} is valid for any partition with $P_i\in \alg P$. \end{con}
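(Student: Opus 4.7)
The plan is to lift the proof of Theorem thm:mainonQ to the unrestricted setting by working throughout with endomorphism-valued measures and replacing the rational-restricted Lemma lem:rational with a version that controls $\m$ as a full element of $\widehat\R$. For each $B\in\B$, the starting point is to apply Theorem~1 of \cite{S} to the information $I\circ B$ on $\bor\clopint 01$, obtaining a countably additive endomorphism-valued measure $\m_B:\bor\clopint 01\to\widehat\R$ and a symmetric information $I_s^B$ defined on \emph{all} probability distributions (not only rational ones), satisfying
\[
I(B(\bold A)) = (I_s^B\lambda)(\bold A) + \sum_i \m_B(A_i)(\log\lambda(A_i))
\]
for every borel partition $\bold A$ of $\clopint 01$.

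The core step is an endomorphism-valued analog of Lemma lem:mBsame. A direct recomputation of \eqref{eqn:IImm} without suppressing the dependence on the argument of the endomorphism yields, for disjoint $V,W$ with $\lambda(V)=\lambda(W)$ and a partition $\bold A=(A_1,A_2,\ldots)$ with $V\subset A_1$, $W\subset A_2$,
\[
I(\T VW \bold A) - I(\bold A) = \bigl[\m_B(V)-\m_B(W)\bigr]\bigl(\log\tfrac{\lambda(A_1)}{\lambda(A_2)}\bigr).
\]
Combined with the uniqueness clause of Theorem~1 in \cite{S}, this should show: whenever $B^{n-1},B^n\in\B$ coincide on $A_n^c$ as in the definition of $\sim_k$, the measures $\m_{B^{n-1}}$ and $\m_{B^n}$ agree on $\bor(A_n^c)$ as endomorphism-valued measures, and $I_s^{B^{n-1}}=I_s^{B^n}$.

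Chaining this along the path provided by Theorem thm:connected then gives $\m_B(\clopint 0\alpha)=\m_{B_1}(\clopint 0\alpha)$ in $\widehat\R$ whenever $B\clopint 0\alpha = B_1\clopint 0\alpha = P$, and $I_s^B = I_s^{B_1}$. One may then set $\m(P):=\m_B(\clopint 0{\rho(P)})$ for any $B$ through $P$ (existence via Lemma lem:BthroughPexists) and $I_s:=I_s^B$ for any $B$. The additivity clauses \eqref{eqn:thm:mainonQA} follow by fixing, for a given finite partition $P_1+\cdots+P_n$, a single boolean structure realizing every $P_i$ as the image of $\clopint{\alpha_{i-1}}{\alpha_i}$ (Lemma lem:BthroughPexists) and invoking additivity of $\m_B$ together with the product property of $I_s^B$. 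Formula \eqref{eqn:thm:mainonQB} for an arbitrary partition $\bold P$ of $1_H$, with no restriction on $\rho(P_i)$, is then immediate from Step~1 applied to such $B$.

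The main obstacle is the \emph{irrational} case of the key uniqueness step. In the continuous setting of Lemma lem:rational, agreement of the non-symmetric measures at rational endpoints can be upgraded to global agreement by absolute continuity with respect to $\lambda$. No such closure is available here: elements of $\widehat\R$ are only $\Q$\dash linear and may be highly pathological, so equality of $\m_B$ and $\m_{B_1}$ at arguments of the form $\log q$, $q\in\Q_{>0}$, does \emph{not} entail equality as endomorphisms of $(\R,+)$. Overcoming this requires genuinely exploiting the uniqueness of the full [S]-decomposition, e.g.\ by testing the swap identity against partitions $\bold A$ with arbitrary (possibly irrational) ratios $\lambda(A_1)/\lambda(A_2)$ and test sets $V,W$ of arbitrary real $\lambda(V)=\lambda(W)$, so as to evaluate $\m_B(V)-\m_{B_1}(V)$ at enough arguments to pin down an endomorphism of $\R$ uniquely. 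This is also where the existence of the global $\m(P)$ for irrational $\rho(P)$, and of $I_s$ on irrational distributions, must ultimately be justified.
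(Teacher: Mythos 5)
The statement you are trying to prove is not a theorem in the paper at all: it is stated as an open \emph{conjecture}, immediately after \mref{Theorem}{thm:mainonQ}, and the authors offer no proof of it. Theorem~\ref{thm:mainonQ} itself is deliberately restricted to projections and distributions with \emph{rational} values of $\rho(P_i)$, $p_i$, precisely because the argument of \mref{Lemma}{lem:rational} breaks down beyond that case. Your proposal correctly reconstructs the intended strategy (run the machinery of Lemmas \ref{lem:rational}, \ref{lem:mBsame} and \ref{lem:Isame} with the endomorphism-valued measure of \cite{S} in place of the signed measure of \cite{PS}, then chain along the path given by \mref{Theorem}{thm:connected}), but at the decisive moment you write that the swap identity ``should show'' that $\m_{B^{n-1}}$ and $\m_{B^n}$ agree as endomorphism-valued measures, and you then concede that overcoming the irrational case ``requires genuinely exploiting the uniqueness'' and ``must ultimately be justified.'' That is an accurate diagnosis of the obstruction, not a resolution of it; as written, the proposal is a plan with the hard step left open.

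To see concretely why the gap is genuine: the proof of \mref{Lemma}{lem:rational} obtains $\m(V)=\m_1(V)$ by averaging the identity $\m(V)-\m(W)=\m_1(V)-\m_1(W)$ over the $l$ congruent cells $W$ of the partition of $\clopint01$ into intervals of length $\tfrac1l$ and using $\m(\clopint01)=0$. This averaging is available only because a set of rational measure $\tfrac kl$ sits inside a \emph{finite} partition of $\clopint01$ into pieces of equal measure; for irrational $\lambda(V)$ no such finite congruent partition exists. In the setting of \cite{PS} one then rescues the irrational case by absolute continuity of $\m$ with respect to $\lambda$, but in the setting of \cite{S} the values $\m(A)$ are arbitrary endomorphisms of $(\R,+)$ --- merely $\Q$\dash linear, possibly nonmeasurable --- and no continuity assumption on $I$ is available to force any regularity. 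Knowing the difference endomorphism $\m_B(V)-\m_{B_1}(V)$ at the particular arguments $\log\bigl(\lambda(A_2)/\lambda(A_1)\bigr)$ produced by admissible swaps does not determine it on all of $\R$, and the existence of a well-defined global $\m(P)$ for irrational $\rho(P)$ (independent of the boolean structure through $P$), as well as of $I_s$ on irrational distributions, remains exactly as open after your argument as before it. This is why the paper records the statement as a conjecture rather than a theorem.
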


	\subsection{Information on partitions with finitely-dimensional projections}

Let us consider information $I$ on the class of all partitions $\bold P = (P_1, \ldots,P_m)$
where $P_i \in \proj H$, that is we now allow $\dim P_i <\infty$. Again, we assume that $I(\bold P \tim \bold Q) = I(\bold P) + I(\bold Q)$ when the partions
$\bold P$, $\bold Q$ are physically independent (c.f. \mref{Section}{sec:prelim}). We say that
$I$ is \defem{continous} if
\begin{enumerate}
	\item[($\alpha$)] $I(P_n,P_n^\perp) \tendsto I(P,P^\perp)$ for any mutually commuting
	$P,P_1,P_2,\ldots \in \proj H$ such that $\rho\big(\abs{P-P_n}\big)\tendsto 0$.
\end{enumerate}
The information $I$ is \defem{bounded} if
\begin{enumerate}
	\item[($\beta$)]  for any $0 < \alpha < 1$ the set of values of
	$\{I(P,P^\perp): P\in \proj H, \rho(P)=\alpha\}$ is bounded.
\end{enumerate}

We have proved that \eqref{eqn:thm:main} is satisfied for $\bold P = (P_1,\ldots, P_m)$,
$P_i \in \alg P$. However  \eqref{eqn:thm:main} does not have to be satisfied
when some of the projections $P_i$ are finitely-dimensional.

	{\bf Example.} Fix one-dimensional projections $\widehat e \perp \widehat f$,
$\rho(\widehat e),\rho(\widehat f) > 0$. Denote by $\pi$ the class (of permutations)
of partitions
\[
	(\widehat e, \widehat f, P, P_1, \ldots, P_n) \subset \proj H; \quad \rho (P_i) = 0,
		\quad \text{for } 1 \le i \le n,
\]
and set $I(\bold P) = 1$ when $\bold P \in \pi$ and $I(\bold P)=0$ when $\bold P \not\in \pi$.

Then for $\bold P \in \pi$ and for $\bold Q$ being physically independent with $\bold P$ we
have $\bold P \tim \bold Q \in \pi$, $\bold Q \not \in \pi$. Moreover if the partitions $\bold P, \bold Q \in \pi$ then these partitions cannot be physically independent. Thus $I$ is an information,
moreover it satisfies ($\alpha$), ($\beta$). The formula \eqref{eqn:thm:main} is satisfied for
$\bold P = (P_1,\ldots,P_m)$, $P_i \in \alg P$ if and only if $I_s = 0$, $\mu = 0$. Then
\eqref{eqn:thm:main} is not satisfied for $\bold P \in \pi$.

We will give a condition,  stronger than  ($\alpha$), which makes such a
situation impossible.

		\begin{thm}
	Let $I$ be an information on partitions $\bold P = (P_1, \ldots, P_m)$, with $P_i \in \proj H$
	that satisfies the boundedness ($\beta$) and
	\begin{enumerate}
		\item[($\alpha$')] $I(\bold P^n) \tendsto I(\bold P)$ for any mutually commuting partitions
		$\bold P =(P_1,\ldots, P_m)$, $\bold P^n=(P_1^n,\ldots,P_m^n) \subset\proj H$ such that
		$\rho\big(\abs{P_i-P^n_i}\big)\tendsto 0$ as $n \tendsto \infty$.
	\end{enumerate}
	Then $I$ is of shape \eqref{eqn:thm:main} for any partitions $\bold P\subset \proj H$.
	\end{thm}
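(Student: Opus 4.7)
The plan is to leverage \mref{Theorem}{thm:main} on the sub-family of partitions with components in $\alg P$ and then propagate the resulting formula to arbitrary partitions of $\proj H$ by a density argument driven by the strengthened continuity $(\alpha')$. The restriction of $I$ to partitions $\bold P=(P_1,\ldots,P_m)\subset \alg P$ is itself an information satisfying $(\alpha)$ (the two-partition case of $(\alpha')$) and $(\beta)$, so \mref{Theorem}{thm:main} supplies a self-adjoint trace-class $\mu$ with $\tr\mu=0$ and a symmetric information $I_s$ for which \eqref{eqn:thm:main} holds throughout $\alg P$.

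Given an arbitrary partition $\bold P\subset \proj H$, I would construct approximations $\bold P^n=(P_1^n,\ldots,P_m^n)\subset \alg P$ that commute componentwise with $\bold P$ and satisfy $\rho(\abs{P_i-P_i^n})\tendsto 0$. Since $\dim H=\infty$ and $\sum P_i=1_H$, at least one component $P_{i_0}$ is infinite-dimensional; install an auxiliary boolean structure on $P_{i_0}H$ (\mref{Lemma}{lem:Bexists}) and, for each $n$, carve out mutually orthogonal infinite-dimensional sub-projections $T_n^{(i)}\le P_{i_0}$, one for each $i\ne i_0$, with $\rho(T_n^{(i)})<1/n$, while keeping $P_{i_0}-\sum_i T_n^{(i)}$ infinite-dimensional. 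Put $P_{i_0}^n:=P_{i_0}-\sum_i T_n^{(i)}$ and $P_i^n:=P_i+T_n^{(i)}$ for $i\ne i_0$. Each $P_i^n$ is then in $\alg P$, $\bold P^n$ commutes with $\bold P$, and $\rho(\abs{P_i-P_i^n})\tendsto 0$.

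Hypothesis $(\alpha')$ then forces $I(\bold P^n)\tendsto I(\bold P)$, while \eqref{eqn:thm:main} applied to each $\bold P^n$ yields
\[
 I(\bold P^n) = (I_s\rho)(\bold P^n) + \sum_i \tr\mu P_i^n \tim \log\rho(P_i^n).
\]
Trace-class continuity of $\mu$ gives $\tr\mu P_i^n\tendsto \tr\mu P_i$, and $\rho(P_i^n)\tendsto \rho(P_i)$ directly. For indices with $\rho(P_i)>0$, the $i$-th summand converges to $\tr\mu P_i\log\rho(P_i)$; continuity of $I_s$ on the open simplex---inherited from $(\alpha)$ on $I$ via the formula itself---gives $(I_s\rho)(\bold P^n)\tendsto (I_s\rho)(\bold P)$, yielding \eqref{eqn:thm:main} for $\bold P$ in the non-degenerate case.

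The main obstacle is the case $\rho(P_i)=0$ for some $i$, where $\log\rho(P_i^n)\tendsto -\infty$. One expects $\tr\mu P_i=0$ whenever $\rho(P_i)=0$, so that the offending summands vanish under the convention $0\log 0=0$. This \emph{absolute continuity of $\mu$ with respect to $\rho$} is inherited qualitatively from \mref{Theorem}{thm:commutative} (each $\m_B$ is absolutely continuous with respect to $\lambda$) and transferred to $\proj H$ through boolean structures and the extension in \mref{Lemma}{lem:mextension}. A quantitative strengthening $\abs{\tr\mu Q}\le C\rho(Q)$, i.e.\ a factorisation $\mu=\rho^{1/2}Y\rho^{1/2}$ with bounded self-adjoint $Y$, combined with the elementary fact $x\abs{\log x}\tendsto 0$ as $x\downarrow 0$, forces $\tr\mu P_i^n\tim \log\rho(P_i^n)\tendsto 0$ on the degenerate indices, completing the extension of \eqref{eqn:thm:main} to all partitions $\bold P\subset \proj H$.
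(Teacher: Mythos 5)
Your overall strategy coincides with the paper's: restrict $I$ to partitions in $\alg P$, invoke \mref{Theorem}{thm:main} there, approximate an arbitrary $\bold P\subset\proj H$ by commuting partitions $\bold P^n\subset\alg P$ obtained by shaving small infinite\dash dimensional pieces off an infinite\dash dimensional component, and let $(\alpha')$ carry $I(\bold P^n)\tendsto I(\bold P)$. The gaps are in the limit passage. First, you assert that $(I_s\rho)(\bold P^n)\tendsto(I_s\rho)(\bold P)$ by ``continuity of $I_s$ inherited from $(\alpha)$ via the formula itself.'' But $I_s$ is only known to be a symmetric information, i.e.\ additive under $\otimes$; \mref{Theorem}{thm:commutative} claims no continuity for it, and deriving even interior continuity on the simplex from $(\alpha')$ requires an argument you do not give. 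Worse, in the degenerate case $\rho(P_i)=0$ you would need continuity of $I_s$ up to the boundary of the simplex, and establishing that runs into exactly the same quantitative difficulty as your second step. Second, the ``quantitative strengthening'' $\abs{\tr\mu Q}\le C\rho(Q)$ (equivalently $\mu=\rho^{1/2}Y\rho^{1/2}$ with $Y$ bounded) on which you rest the degenerate case is false: the paper's own example in Section~4.1, namely $\mu=-\widehat{f_0}+\sum_{i\ge1}2^{-i}\widehat{f_i}$ with $f_i\in B\clopint{2^{-2^{2i+2}}}{2^{-2^{2i}}}H$, arises from an information satisfying $(\alpha)$ and $(\beta)$, yet $\tr\mu\widehat{f_i}=2^{-i}$ while $\rho(\widehat{f_i})\le2^{-2^{2i}}$, so no constant $C$ exists. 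Only the qualitative implication $\rho(Q)=0\Rightarrow\tr\mu Q=0$ is available, and that alone does not control $\tr\mu P_i^n\cdot\log\rho(P_i^n)$ when $\rho(P_i^n)$ is positive but tiny; nor does it by itself extend to finite\dash dimensional $Q$ without going back through \mref{Lemma}{lem:mextension}.

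The paper's proof avoids both problems with one device you are missing: alongside $\bold P^n\searrow\bold P$ (or $\nearrow$) it constructs an auxiliary partition $\bold Q\subset\alg P$ and approximants $\bold Q^n\subset\alg P$ satisfying $\rho(Q_i)=\rho(P_i)$ and $P_i-P_i^n=Q_i-Q_i^n$. Writing $a_n$ and $b_n$ for the discrepancies of the conjectured formula along the two approximating sequences, the identities $\rho(P_i^n)=\rho(Q_i^n)$, $\rho(P_i)=\rho(Q_i)$ make the symmetric parts cancel exactly in $a_n-b_n$, and $\mu(P_i^n)-\mu(Q_i^n)=\mu(P_i)-\mu(Q_i)$ reduces $a_n-b_n$ to $\sum_i\big(\mu(P_i)-\mu(Q_i)\big)\big[\log\rho(P_i^n)-\log\rho(P_i)\big]$, which tends to $0$ using only countable additivity and the qualitative absolute continuity on $\alg P$. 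Since $\bold Q,\bold Q^n,\bold P^n$ all lie in $\alg P$, formula \eqref{eqn:thm:main} applies to them, $(\alpha')$ gives $b_n\tendsto0$ and $I(\bold P^n)\tendsto I(\bold P)$, and the formula for $\bold P$ follows with no continuity assumption on $I_s$ and no domination of $\mu$ by $\rho$. You should either adopt this cancellation or supply genuinely new arguments for the two claims above; as written, the second is contradicted by the paper itself.
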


		\begin{proof}

	\new{For any partition of unity $\bold P =(P_1,\ldots,P_m)$, $P_i\in \proj H$, we can find partitions
	$\bold P^n = (P_i^n)$, $\bold Q = (Q_i)$, $\bold Q^n = (Q_i^n)$, 1$\le i \le m$, for $n\ge 1$ such that
	\begin{equation}\label{eqn:sA}\begin{aligned}
		P_i^n \nearrow P_i\quad&\text{or}\quad P_i^n \searrow P_i,\\
		Q_i^n \nearrow Q_i\quad&\text{or}\quad Q_i^n \searrow Q_i,
	\end{aligned}\end{equation}
	for $1 \le i \le m$, and
	\begin{align}
		&P_i^n,Q_i,Q_i^n\in \alg P,	\label{eqn:sB}\\
		&P_i-P_i^n = Q_i-Q_i^n, \quad\rho(P_i) = \rho(Q_i)	\label{eqn:sC}
	\end{align}
	for $1\le i \le m$, $n\ge 1$.}

	\new{In fact, this can be done as follows. For some $1\le j \le m$ we have $\dim P_i =\infty$. For the
	sake of simplicity let $j=1$. Then \[P_1=E + \ssum_{\substack{1\le i \le m,\\n\ge 0}} E_i^n\] for some
	infinite dimensional projections $E$, $E_i^n$, and
	\[
		E + \ssum_{2 \le i \le m} P_i = F + Q_2 + \ldots + Q_m
	\]
	for some projections $F, Q_2,\ldots, Q_m \in \alg P$ which satisfy
	$\rho(F) =\rho(E), \rho(Q_i) = \rho(P_i)$, $2 \le i \le m$, (as $\dim(E+\sum_{i \ge 2} P_i) =\infty$).}

	\new{It follows that
	\[
		Q_1 := 1_H - \ssum_{2\le i \le m} Q_i = F + \ssum_{\substack{2\le i \le m,\\n\ge 1}} E_i^n \in \alg P.
	\]
	Moreover we have $\dim P = \dim P^\perp = \infty$, and therefore $ P \in \alg P$,
	whenever $P$ is one of the projections
	\[\begin{aligned}
		P_1^n := P_1 -\ssum_{\substack{2\le j \le m, \\ k\ge n}} E_j^k,	\quad P_i^n := P_i + \ssum_{k\ge n} E_i^k,\\
		Q_1^n := Q_1 -\ssum_{\substack{2\le j \le m, \\ k\ge n}} E_j^k,	\quad Q_i^n := Q_i + \ssum_{k\ge n} E_i^k,
	\end{aligned}\]
	with $2\le i \le m$, and with $n\ge 1$. For just obtained partitions $\bold P = (P_i^n),$ $\bold Q = (Q_i)$, and
	$Q^n =(Q_i^n)$ all required conditions \eqref{eqn:sA}, \eqref{eqn:sB}, and \eqref{eqn:sC} are satisfied.}
	
	\new{Let us denote
	\[\begin{aligned}
		a_n &= (I_s\rho) \bold P^n + \ssum_{1\le i \le m} \mu(P_i^n) \log \rho(P_i^n) \\
		& -(I_s\rho) \bold  P- \ssum_{1\le i \le m} \mu(P_i) \log \rho(P_i) \\
		b_n &= (I_s\rho) \bold Q^n + \ssum_{1\le i \le m} \mu(Q_i^n) \log \rho(Q_i^n) \\
		& -(I_s\rho) \bold  Q- \ssum_{1\le i \le m} \mu(Q_i) \log \rho(Q_i).
	\end{aligned}\]
	then
	\[
		a_n-b_n = \sum\Big[\big(\mu(P_i^n)-\mu(Q_i^n)\big) \log \rho(P_i^n)
		- \big(\mu(P_i)-\mu(Q_i)\big) \log \rho(P_i)\Big], \text{ by~\eqref{eqn:sC}.}
	\]
	If for some $i$, $\rho(P_i) = 0$, then $\mu(P_i) =\mu(Q_i) = 0$ and $\mu(P_i^n) = \mu(Q_i^n)$, by~\eqref{eqn:sC},
	and we obviously assume that $0\cdot \infty = 0$. Thus $a_n - b_n$ tends to $0$, by~\eqref{eqn:sA}. On the
	other hand equations \eqref{eqn:sA}, and \eqref{eqn:sB} imply that
	$b_n = I(\bold Q^n) - I(\bold Q) \tendsto 0$, $a_n=I(\bold P^n) - (I_s \rho) \bold P - \sum \mu(P_i) \log \rho(P_i)$,
	and $I(\bold P^n) \tendsto I(\bold P)$, by~\eqref{eqn:sA}.} \end{proof}

	\subsection{A comparison of measures of information in classical case}

\new{It seems worthwhile to collect at the end some concepts of (additive) quantum information. We shall
do so in Section~4.5. First, however, we present some classical (commutative-probability) concepts of information as theories of increasingly general classes of functions~$I$.}

Throughout this section we shall assume that $\bold A \mapsto I(\bold A)$ is a function which is
\defem{additive} i.e. satisfies
$I(\bold A \tim \bold B) = I(\bold A)+I(\bold B)$ for measurable partitions $\bold A = (A_1, \ldots, A_m)$, 
$\bold B = (B_1,\ldots, B_n)$ of the interval $\clopint 01$ with $\bold A \indep \bold B$ i.e.
$\lambda(A_i \cap B_j) = \lambda(A_i)  \lambda(B_j)$.

\new{The classical results of Khinchin and Fadeev axiomatize the Shannon entropy with the use of the following minimal conditions}

		\begin{thm}[Rényi, Theorem 1, chapter IX. ]	\label{thm:Chinchin}
	Let $I$ satsify
	\begin{enumerate}[$1^\circ$]
		\item $I(\bold A) = I_s(\lambda(A_1),\ldots,\lambda(A_m))$ for
			(uniquely defined) function $I_s$ on finite probability distributions,
		\item $I_s(1/2,1/2)=1$,
		\item $I_s(p, 1-p)$ is a continuous function of $p$,
		\item $I_s(p_1,\ldots, p_m) = I_s(p_1+p_2,p_3,\ldots,p_m)
			+ (p_1 + p_2) I_s\Big(\tfrac{p_1}{p_1+p_2}, \tfrac{p_2}{p_1+p_2}\Big)$\footnotemark
	\end{enumerate}
	Then $I(\bold A) = \sum\lambda(A_i) \log \frac 1{\lambda(A_i)}$ (the Shannon entropy).
	\end{thm}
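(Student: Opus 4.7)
The plan is to follow the classical Khinchin--Fadeev route: determine $I_s$ on uniform distributions, extend to rational distributions by the grouping axiom $4^\circ$, and then pass to arbitrary distributions using the continuity furnished by $3^\circ$. First I would iterate $4^\circ$ (using the symmetry of $I_s$ inherited from the fact that $I(\bold A)$ does not see the order of the $A_i$'s) to obtain its general hierarchical form: if $(p_i^{(j)})$ refines a distribution $(P_1,\ldots,P_m)$ in the sense that $\sum_i p_i^{(j)} = P_j$, then
\[
    I_s\bigl((p_i^{(j)})\bigr) \;=\; I_s(P_1,\ldots,P_m) + \sum_{j=1}^m P_j\, I_s\!\left(\tfrac{p_1^{(j)}}{P_j},\ldots,\tfrac{p_{k_j}^{(j)}}{P_j}\right).
\]
Applying this to the uniform distribution on $mn$ points grouped into $m$ blocks of size $n$, and writing $L(n) := I_s(1/n,\ldots,1/n)$, yields the Cauchy-like relation $L(mn)=L(m)+L(n)$, together with $L(1)=0$ and, from $2^\circ$, $L(2)=1$.

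Next, I would deduce $L(n)=\log_2 n$ for all $n\ge 2$. Splitting one atom off the uniform distribution on $n+1$ points gives the recursion
\[
    L(n+1) \;=\; h\!\left(\tfrac{1}{n+1}\right) + \tfrac{n}{n+1}\,L(n), \qquad h(p):=I_s(p,1-p),
\]
and the convention $h(0)=0$ (forced by setting $p_1=0$ in $4^\circ$ together with $3^\circ$) makes $h(1/(n+1))\tendsto 0$. Combined with the multiplicative equation and enough regularity of $L$, one runs the classical sandwich: given $N\ge 2$ and large $M$, pick $k$ with $2^k\le N^M<2^{k+1}$, so $k/M\tendsto\log_2 N$; then $L(2^k)=k$ and $L(N^M)=M L(N)$ squeeze $L(N)$ to $\log_2 N$.

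For a rational distribution $(g_1/N,\ldots,g_m/N)$ with $\sum g_i=N$, the hierarchical identity applied to the uniform distribution on $N$ points partitioned into blocks of sizes $g_i$ gives $L(N)=I_s(g_1/N,\ldots,g_m/N)+\sum_i (g_i/N)L(g_i)$, whence $I_s(p_1,\ldots,p_m)=\sum_i p_i\log_2(1/p_i)$ for rational $p_i$. To pass from rationals to reals, one observes that iterating $4^\circ$ reduces continuity of $I_s$ in all coordinates to continuity of $h=I_s(\cdot,1-\cdot)$, which is exactly $3^\circ$; density of the rationals then yields the Shannon formula in general, and hence $I(\bold A)=\sum_i\lambda(A_i)\log_2(1/\lambda(A_i))$ via $1^\circ$.

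The main obstacle will be the middle step: extracting $L(n)=\log_2 n$ from the purely multiplicative $L(mn)=L(m)+L(n)$ together with only one-dimensional continuity. Without extra regularity, pathological additive arithmetic functions $\N\tendsto\R$ exist, so one must carefully exploit the recursion for $L(n+1)$ to obtain, for instance, monotonicity of $n\mapsto nL(n)$ (which follows once $h\ge 0$ is extracted from $h(1/2)=1$, continuity of $h$, and the recursion). This regularity bootstrap is exactly where assumption $3^\circ$ is used essentially.
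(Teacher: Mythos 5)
The paper does not actually prove this statement; it is quoted from R\'enyi's book (Fadeev's axiomatization, Theorem 1 of Chapter IX), so your outline can only be measured against the standard argument. Your skeleton is exactly that argument: iterate the grouping axiom to get $L(mn)=L(m)+L(n)$ for $L(n)=I_s(1/n,\dots,1/n)$, derive the recursion $L(n+1)=h(\tfrac1{n+1})+\tfrac{n}{n+1}L(n)$ with $h(0)=0$, identify $L(n)=\log_2 n$, settle rational distributions via blocks of sizes $g_i$ inside a uniform distribution on $N$ points, and pass to real $p_i$ using $3^\circ$. All of those steps are sound.

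The one place your plan would not go through as written is the middle step you yourself flag. First, nonnegativity of $h$ is not available a priori: it is a consequence of the final formula, and it does not follow from $2^\circ$, $3^\circ$ and the recursion before one knows $h$ is the binary entropy --- Fadeev's whole point was to dispense with any monotonicity or positivity hypothesis of Khinchin's. Second, even granting $h\ge 0$, monotonicity of $n\mapsto nL(n)$ is too weak for your sandwich: from $2^k\le N^M<2^{k+1}$ and $2^kL(2^k)\le N^M L(N^M)$ one only gets $k\cdot 2^k/N^M\le M\,L(N)$, hence $L(N)\ge\tfrac12\log_2 N$ in the limit, not equality; the sandwich needs monotonicity of $L$ itself. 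The standard repair is to telescope the recursion into $nL(n)=\sum_{k=2}^{n}k\,h(1/k)$, deduce $L(n)/n\tendsto 0$ by a Ces\`aro estimate (using only $h(1/k)\tendsto h(0)=0$, which $3^\circ$ gives), conclude $L(n+1)-L(n)=h(\tfrac1{n+1})-\tfrac{L(n)}{n+1}\tendsto 0$, and then invoke Erd\H{o}s's lemma that a completely additive arithmetic function $f:\N\to\R$ with $f(n+1)-f(n)\tendsto 0$ equals $c\log n$; the normalization $L(2)=1$ fixes $c=1$ in base $2$. With that substitution in place of the positivity/monotonicity bootstrap, your outline is the classical proof.
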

\footnotetext{\new{This version of the grouping axiom (c.f.~\cite{PS}) is often called
the~\defem{recursivity}. It implies the additivity of~$I$.}}

In a way, the idea of Rényi boils down to imposing less restrictive conditions on~$I$.
For any finite distribution $\pmb p =(p_1,\ldots, p_m)$ consider the
cumulative distribution function of Shannon entropy
\[
	F_{\pmb p}(x) = \sum_{\log p_i < x} p_i
\]

		\begin{thm}	\label{thm:Renyi}
	Let $I$ be additive and satisfy $1^\circ$ and~$2^\circ$ and
	\begin{enumerate}
		\item[$5^\circ$] $F_{\pmb p} \ge F_{\pmb q}$, $F_{\pmb p} \neq F_{\pmb q}$ imply
			$I_s(\pmb p) < I_s(\pmb q)$,
		\item[$6^\circ$] if the distributions $\pmb p, \pmb p^1, \pmb p^2, \pmb p^1_t, \pmb p^2_t$
			satisfy for $0 \le t \le 1$
			\begin{align*}
				I_s(\pmb p^1) &= I_s(\pmb p^2), \\
				F_{\pmb p_t^\epsilon}(x) &=
					tF_{\pmb p}(x) + (1-t)F_{\pmb p^\epsilon}(x), \quad \epsilon = 1,2,
			\end{align*}
			then $I_s(\pmb p_t^1) = I_s(\pmb p^2_t)$,
		\item[$7^\circ$] for $\epsilon > 0$, $M>0$ there exists $\delta > 0$ such that
			$\abs{I_s(\pmb p) - I_s(\pmb q)} < \epsilon$ when
			$\abs{F_{\pmb p} - F_{\pmb q}} < \delta$ for $x \in \R$, and $\pmb p, \pmb q$
			are concentrated on an interval of length~$M$.
	\end{enumerate}
	Then $I=I_\alpha$ for some $\alpha \in \R$ when
	\begin{align}
		I_1(\bold A) &= \sum\lambda(A_i) \log \frac 1{\lambda(A_i)}; \\
		I_\alpha(\bold A) &= \tfrac 1{\alpha -1} \log \sum \lambda(A_i)^\alpha
			\qquad\text{for }\alpha \neq 1.
	\end{align}
	\end{thm}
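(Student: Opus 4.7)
The plan is to recast $I_s$ as a monotone functional on the measures of Shannon information content, recognise it as a Kolmogorov--Nagumo quasi-arithmetic mean, and then identify the generating function via additivity under products.

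For a distribution $\pmb p = (p_1,\ldots,p_m)$ I associate the finite atomic probability measure $\nu_{\pmb p} = \sum_i p_i\,\delta_{-\log p_i}$ on $\R$, whose cumulative distribution function agrees with $F_{\pmb p}$ after the obvious sign convention. Condition $5^\circ$ implies that $F_{\pmb p} = F_{\pmb q}$ forces $I_s(\pmb p) = I_s(\pmb q)$, so $J(\nu_{\pmb p}) := I_s(\pmb p)$ is a well-defined strictly monotone functional on this class of measures. By $7^\circ$ it extends uniquely, by uniform continuity in the Kolmogorov distance, to all compactly supported probability measures on $\R$. Condition $6^\circ$ is then precisely the classical mixing-consistency axiom: $J(\nu^{1})=J(\nu^{2})$ implies $J(t\nu+(1-t)\nu^{1})=J(t\nu+(1-t)\nu^{2})$ for every $t\in[0,1]$ and every $\nu$.

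Strict monotonicity, continuity, and mixing-consistency together are the hypotheses of the Kolmogorov--Nagumo theorem on quasi-arithmetic means, which yields a continuous strictly monotone $\phi:\R\to\R$, unique up to affine change, such that
\[
J(\nu)=\phi^{-1}\!\left(\int\phi\, d\nu\right),
\qquad
I_s(\pmb p)=\phi^{-1}\!\left(\sum_{i} p_i\,\phi(-\log p_i)\right).
\]
Additivity of $I$ now enters: for $\bold A\indep\bold B$ with distributions $\pmb p,\pmb q$ the product has measure $\nu_{\pmb p\otimes\pmb q}=\nu_{\pmb p}*\nu_{\pmb q}$, and $I(\bold A\tim\bold B)=I(\bold A)+I(\bold B)$ translates into
\[
\phi^{-1}\!\left(\iint \phi(x+y)\,d\nu(x)\,d\mu(y)\right)
=\phi^{-1}\!\left(\int\phi\, d\nu\right)+\phi^{-1}\!\left(\int\phi\, d\mu\right).
\]
Specialising $\mu = \delta_c$ reduces this to the translation-covariance of the quasi-arithmetic mean generated by $\phi$. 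Classically, the only continuous strictly monotone $\phi$ producing a translation-covariant quasi-arithmetic mean are the affine and the exponential ones, i.e.\ (up to affinity) $\phi(x)=x$ or $\phi(x)=2^{(\alpha-1)x}$ for some $\alpha\neq 1$. The affine case gives $I_s(\pmb p)=-\sum p_i\log p_i = I_1$, and the exponential case gives $I_s(\pmb p)=\tfrac{1}{\alpha-1}\log\sum p_i^{\alpha} = I_\alpha$. The normalisation $2^\circ$ fixes the overall scale.

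The main obstacle is the invocation of Kolmogorov--Nagumo over the distinctive class of measures $\nu_{\pmb p}$: each atom's mass is tied to its position (mass $2^{-x}$ at $x$), so one cannot apply the theorem directly to this subclass. A density argument using $7^\circ$ together with small rational perturbations of the distributions should reduce the matter to the classical setting of arbitrary finitely-supported probability measures. Once that is in place, the subsequent identification of $\phi$ is a routine Pexider--Cauchy functional-equation computation.
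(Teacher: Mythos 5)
Your overall strategy coincides with the paper's: the paper also passes from $I_s$ to a functional $J$ of the information-content distribution function $F_{\pmb p}$, checks Rényi's postulates (normalisation from $2^\circ$, additivity under convolution from $F_{\pmb p\otimes\pmb q}=F_{\pmb p}*F_{\pmb q}$, monotonicity from $5^\circ$, mixing consistency from $6^\circ$), and then invokes Rényi's fundamental theorem (Theorem~\ref{thm:RenyiFundamental}) to conclude $J=J_\alpha$; where you re-derive that theorem from Kolmogorov--Nagumo plus translation covariance, the paper simply cites it from Rényi's book. So the reduction itself is fine and is the intended one.

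The gap is the one you half-acknowledge at the end, and your proposed repair does not work. The functional $J$ is a priori defined only on the cdfs of the form $F_{\pmb p}$, i.e.\ those whose jump at each point $x\le 0$ is an integer multiple of $2^{x}$ (in your sign convention, mass $k\cdot 2^{-x}$ at $x\ge0$). This class is \emph{not} dense in Kolmogorov distance among compactly supported finite distributions: for instance $\delta_c$ lies in its closure only for $c\in\{\log\tfrac1n : n\ge1\}$ (to concentrate unit mass near $c$ one needs $k$ atoms of mass $\approx 2^{c}$ with $k\cdot 2^{c}\approx 1$, forcing $2^{-c}\in\N$). Hence the claimed unique continuous extension of $J$ to all compactly supported measures via $7^\circ$ fails, and with it the step ``specialise $\mu=\delta_c$'' that you use to extract translation covariance for arbitrary shifts $c$: the only translations directly realisable inside the class are convolutions with $\delta_{\log(1/n)}$, coming from tensoring with uniform distributions. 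Small rational perturbations do not help, because the obstruction is the rigid coupling of mass to position, not a lack of rational points. Closing this requires the finer analysis of Rényi's postulates that the paper alludes to (building up enough mixtures via $6^\circ$ and enough convolutions by $\delta_{\log(1/n)}$ to pin down $\phi$ on a dense set before invoking monotonicity and $7^\circ$); as written, your argument establishes the conclusion only on the subfamily of distributions reachable by your reductions, not for all $\pmb p$.
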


Outline of the proof will be given somewhat later. The classical interpretation
of informations $I_\alpha$ is provided only for the special case of $\alpha >0$
(see \cite{R}, chapter IX, $7$) as was stressed by A. Rényi.

The paper \cite{PS} considers additive functions~$I$ under a weak assumption on continuity
(\mref{definition} {df:Icont}).
Moreover, all the assumptions $1^\circ$–$7^\circ$ are dropped.
\mref{Theorem} {thm:main}, we cited in \mref{Section}{sec:prelim}, says that
$I(\bold A) = (I_s\lambda) (\bold A) + \sum \m(A_i)\log\frac1{\lambda(A_i)}$.

In order to obtain the simplest interpretation of this formula let us confine ourselves to the special case of
\begin{equation}	\label{eqn:conditionalI}
	I(\bold A) = \sum p_E (A_i) \log \frac1{\lambda(A_i)}
\end{equation}
for $E \subset \clopint 01$ and for conditional probability $p_E(A) = \lambda(A\cap E)/\lambda(E)$.
Assume now that the outcome of the experiment $A_i$ always `carries information' of
weight $\log \lambda(A_i)$, in accordance with the basic interpretation of Shannon entropy
(\mref {Theorem} {thm:Chinchin}). Then formula \eqref{eqn:conditionalI} gives the
conditional expectation of information carried by the experiment~$\bold A$, under condition $E$.

\mref{Theorem} {thm:Renyi}, as given here, requires a bit of explanation. A.~Rényi was
seeking a description of the gain of information between two distributions. As such
he was solving a somewhat different problem. (cf Theorem IX.6.1 in \cite{R}).
\mref{Theorem} {thm:Renyi} however is a \new{relatively} simple consequence of Rényi's fundamental
theorem on a functional of cumulative distribution functions (analysis of Postulates I',III',V,VI Chapter IX.6 in \cite{R}).

		\begin{thm}	\label{thm:RenyiFundamental}
	Let $J(F) \in \R$ be a number defined for each cumulative distribution function
	of a finite distribution and let the following  conditions be satisfied
	\begin{enumerate}[i)]
		\item $J(D_1)=1$ for $D_1$ being the cumulative of~$\delta_1$,
		\item $J(F * F_1) = J(F) + J(F_1)$,
		\item $F \le F_1$, $F \not \equiv F_1$ implies $J(F) > F(F_1)$,
		\item $J(F_1) = J(F_2)$ implies $J(tF + (1-t)F_1) = J(tF +(1-t)F_2)$ for $0 \le t \le 1$,
		for arbitrary cumulative distribution functions $F,F_1,F_2$ of finite distributions. Then
		$J=J_\alpha$, $\alpha \in \R$, where
		\begin{align*}
			J_1(F) &= \int x F(dx),\\
			J_\alpha(F) &= \tfrac 1 {\alpha -1} \log \int 2^{\alpha x} F(dx)\quad\text{for }\alpha\neq1.
		\end{align*}
	\end{enumerate}	\end{thm}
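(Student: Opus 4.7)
The plan is to follow Rényi's classical strategy in three stages: evaluate $J$ on Dirac cumulatives, represent $J(F)$ as a quasi-arithmetic mean of $F$, and then use convolution additivity to identify the generator of that mean.

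\emph{Stage 1.} For the CDF $D_a$ of $\delta_a$, axiom (ii) applied to $\delta_a*\delta_b=\delta_{a+b}$ yields the Cauchy equation $J(D_{a+b})=J(D_a)+J(D_b)$. Axiom (iii) forces strict monotonicity of $a\mapsto J(D_a)$: for $a<b$ one has $D_b\le D_a$ and $D_b\not\equiv D_a$, whence $J(D_b)>J(D_a)$. A strictly increasing additive function on $\R$ is linear, and normalisation (i) fixes the slope at $1$, so $J(D_a)=a$.

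\emph{Stage 2.} The central step is to show that $J(F)=\phi^{-1}\!\bigl(\int\phi\,dF\bigr)$ for some continuous strictly increasing $\phi$, unique up to an affine transformation. Monotonicity (iii) together with Stage~1 gives, for every $F$, a unique $a(F)\in\R$ with $J(F)=J(D_{a(F)})=a(F)$, so $J$ really is a ``representative value'' of $F$. Axiom (iv) is exactly the bisymmetry/associativity condition underlying the Nagumo--Kolmogorov--de Finetti characterisation of quasi-arithmetic means: equivalent sub-distributions may be substituted inside any mixture. Adapting that classical proof, originally given for finite convex combinations of reals, to arbitrary mixtures of cumulatives -- by a density argument and by exploiting the continuity hidden in (iv) and (iii) -- produces the required generator $\phi$. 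This is the step I expect to be the main technical obstacle.

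\emph{Stage 3.} Finally I invoke (ii). Since convolution with $D_a$ acts on $F$ as translation by $a$ and $J(F*D_a)=J(F)+a$, setting $y=\phi^{-1}\!\bigl(\int\phi\,dF\bigr)$ gives, for all $F$ and all $a\in\R$,
\[
\int\phi(x+a)\,dF(x)=\phi(y+a).
\]
Testing against two-point $F$ and varying $a$ yields $\phi(x+a)=c(a)\phi(x)+d(a)$ for all $x,a\in\R$; its continuous strictly monotone solutions are, up to affine equivalence, exactly $\phi(x)=x$ and $\phi(x)=2^{\alpha x}$ for some $\alpha\neq 0$. Substituting these back into $J(F)=\phi^{-1}\!\bigl(\int\phi\,dF\bigr)$ and using the normalisation (i) yields $J=J_\alpha$ for the advertised $\alpha\in\R$.
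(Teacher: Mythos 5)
The paper itself offers no proof of this statement: it is quoted as R\'enyi's fundamental theorem with a pointer to \cite{R}, Chapter IX.6, so there is nothing in the paper to compare your argument against. Your outline is in fact the classical route R\'enyi takes -- reflexivity on Dirac masses, reduction to a quasi-arithmetic mean, identification of the generator via translativity -- and your Stages 1 and 3 are sound: Cauchy's equation plus the strict monotonicity of $a\mapsto J(D_a)$ coming from (iii) gives $J(D_a)=a$, and the equation $\phi(x+a)=c(a)\phi(x)+d(a)$ does single out the affine and exponential generators among continuous strictly monotone ones.

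The gap is exactly where you flag it, and flagging it is not the same as closing it. The Nagumo--Kolmogorov--de Finetti characterisation you invoke in Stage 2 needs continuity of the mean (for instance of $p\mapsto J\bigl(p\delta_x+(1-p)\delta_y\bigr)$), and no continuity hypothesis appears among (i)--(iv): conditions (iii) and (iv) only hand you a strictly monotone solution of a Jensen-type substitution equation, and monotone functions can jump, so ``the continuity hidden in (iv) and (iii)'' is precisely the assertion that has to be proved. In R\'enyi's own postulate system (Postulates I$'$, III$'$, V, VI of \cite{R}, IX.6, of which this statement is a transcription) an explicit mean-value/continuity postulate does the work you are attributing to a density argument; you must either restore that hypothesis or genuinely derive continuity from (ii)--(iv), and your proposal does neither. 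Two smaller points. First, ``$J(F)>F(F_1)$'' in (iii) should read $J(F)>J(F_1)$. Second, your own Stage 1 gives $J(D_a)=a$, which the displayed formula $J_\alpha(F)=\tfrac1{\alpha-1}\log\int 2^{\alpha x}F(dx)$ violates, since it yields $J_\alpha(D_a)=\tfrac{\alpha}{\alpha-1}\,a$; your Stage 3 actually produces $\tfrac1c\log\int 2^{cx}\,dF(x)$, i.e.\ R\'enyi's kernel $2^{(\alpha-1)x}$ after setting $c=\alpha-1$, and normalisation (i) is then satisfied automatically for every $c$ rather than selecting the ``advertised'' $\alpha$ -- so the closing sentence of Stage 3 is not correct as written, even though it points at the right family.
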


\subsection{The comparison of measures of information in the quantum case}

A good description of quantum measurements is given by physical indepence of
partitions $1_H$ onto mutually orthogonal projections (see \mref{Section}{sec:prelim}).

For this reason it is natural to formulate the conditions imposed on information $I(\bold P)$
on partitions $(\bold P) = (P_1,\ldots,P_m)$ of unit $1_H$ by using `cuts of $I$ to boolean structures'.
One needs for instance to assume that $I_B(\cdot) = (I\circ B) (\cdot)$ for boolean structures
$B:\bor \clopint01 \to \alg P$, $\rho(B(A)) = \lambda(A)$.

We will always assume that $I_B(\bold A \circ \bold B) = I_B(\bold A) + I_B(\bold B)$ for
partitions $\bold A \perp \bold B$ of the interval $\clopint 01$ and for any boolean structure~$B$.

A more limiting additional assumtion on functional $I_B$ by Fadeev and Rényi automatically give:

		\begin{thm}	\label{thm:vonNeumann}
	If $I_B$ satisfies also the conditions $1^\circ$, $2^\circ$, $3^\circ$, $4^\circ$, then $I=I_1$ where
	\begin{equation}	\label{eqn:thm:vonNeumann}
		I_1(\bold P) = \sum\rho(P_i) \log \frac 1{\rho(P_i)}\quad \text{for }\bold P = (P_1,\ldots,P_m), P_i \in \alg P,
	\end{equation}
	is a von Neumann's information.% ([Petz,Hiay]???)
\end{thm}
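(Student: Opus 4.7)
The plan is to reduce \mref{Theorem}{thm:vonNeumann} to the classical Khinchin--Fadeev characterisation of the Shannon entropy (\mref{Theorem}{thm:Chinchin}) by realising each quantum partition through a suitable boolean structure. Since the hypothesis is imposed on every cut $I_B$, \mref{Theorem}{thm:Chinchin} already forces
\[
  I_B(\bold A) \;=\; \sum_{i} \lambda(A_i)\log\tfrac{1}{\lambda(A_i)}
\]
for every borel partition $\bold A$ of $\clopint 01$ and every $B\in\B$. The remaining work is to push this identity back to the quantum side.

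Concretely, I would fix a partition $\bold P=(P_1,\ldots,P_m)$ with $P_i\in\alg P$ and treat the non-degenerate case first, assuming $\rho(P_i)>0$ for every $i$. Setting $\alpha_0=0$ and $\alpha_i=\rho(P_1+\cdots+P_i)$, \mref{Lemma}{lem:BthroughPexists} produces $B\in\B$ with $B(\clopint{\alpha_{i-1}}{\alpha_i})=P_i$. Writing $A_i=\clopint{\alpha_{i-1}}{\alpha_i}$, we have $\lambda(A_i)=\rho(P_i)$ and $I(\bold P)=I_B(A_1,\ldots,A_m)$, so the identity displayed above instantly yields \eqref{eqn:thm:vonNeumann}.

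The one step requiring a bit of care---and indeed the main, though mild, obstacle---is the degenerate case, when some $P_i\neq 0$ satisfy $\rho(P_i)=0$. My approach would be to merge each such $P_i$ into a fixed $P_j$ with $\rho(P_j)>0$, obtaining a partition $\bold P'$ to which the non-degenerate case applies, and then to verify $I(\bold P)=I(\bold P')$. The key ingredients for that step are \mref{Corollary}{crr:BthroughPexists}, which supplies a boolean structure visible to both $\bold P$ and $\bold P'$, together with the grouping axiom $4^\circ$ and the vanishing $I_s(1,0)=0$ forced by continuity axiom $3^\circ$ combined with the Shannon form already derived. Once this bookkeeping is done the theorem is immediate.
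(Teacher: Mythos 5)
Your proof is correct and follows exactly the route the paper intends: the paper states \mref{Theorem}{thm:vonNeumann} without proof, treating it as an immediate consequence of applying \mref{Theorem}{thm:Chinchin} to each cut $I_B$ and pulling the resulting Shannon formula back through a boolean structure realising $\bold P$ via \mref{Lemma}{lem:BthroughPexists}. Your extra care with the case $\rho(P_i)=0$ (merging such $P_i$ and using $I_s(1,0)=0$) is a legitimate filling-in of a detail the paper glosses over, and unlike the Rényi analogue this case needs no appeal to \mref{Theorem}{thm:connected} since there is no free parameter $\alpha(B)$ to pin down.
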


		\begin{thm}
	If $I_B$ satisfies also the conditions $1^\circ$, $2^\circ$, $5^\circ$, $6^\circ$, $7^\circ$ then $I=I_\alpha$ where
	$I_1$ is von Neumann's information \eqref{eqn:thm:vonNeumann}, while
	\begin{equation}
		I_\alpha(\bold P) = \tfrac 1{\alpha -1} \log \sum\rho(P_i)^\alpha
			\quad\text{for }\alpha \neq 1.
	\end{equation}	\end{thm}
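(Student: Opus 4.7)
The plan is: for each boolean structure $B\in\B$ apply \mref{Theorem}{thm:Renyi} to the classical information $I_B=I\circ B$ to obtain a Rényi parameter $\alpha_B\in\R$ with $I_B=I_{\alpha_B}$; then use the connectedness \mref{Theorem}{thm:connected} to show that $\alpha_B$ is in fact independent of~$B$; and finally read off the global formula from any single $B$ via \mref{Lemma}{lem:BthroughPexists}.

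For the first step, observe that $I_B$ is additive on partitions of $\clopint01$ because $I$ is additive on physically independent partitions of $1_H$ and $B$ sends $\indep$ to~$\indep$; by hypothesis $I_B$ also satisfies $1^\circ,2^\circ,5^\circ,6^\circ,7^\circ$. \mref{Theorem}{thm:Renyi} then yields a (unique, by distinctness of the $I_\alpha$'s) number $\alpha_B\in\R$ with $I_B=I_{\alpha_B}$.

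For the second step, fix $B,B_1\in\B$. By \mref{Theorem}{thm:connected}, $B\sim_2 B_1$, so there is a chain $B=B^0,\ldots,B^N=B_1$ in $\B$ and borel sets $A_1,\ldots,A_N\subset\clopint01$ with $\lambda(A_n)\le\tfrac12$ and $B^{n-1}(A)=B^n(A)$ whenever $A\cap A_n=\emptyset$. For any $p\in\opint01$, the inequality $\max(p,1-p)\ge\tfrac12\ge\lambda(A_n)$ lets us pick a two-element partition $(C_1,C_2)$ of $\clopint01$ with $\lambda(C_1)=p$ and $A_n$ contained entirely in one of $C_1,C_2$; the other component is disjoint from $A_n$, and the first then agrees as well by complementation, so $B^{n-1}(C_j)=B^n(C_j)$ for $j=1,2$. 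Hence
\[
	I_{\alpha_{B^{n-1}}}(p,1-p) = I\bigl(B^{n-1}(C_1),B^{n-1}(C_2)\bigr) = I\bigl(B^n(C_1),B^n(C_2)\bigr) = I_{\alpha_{B^n}}(p,1-p),
\]
and chaining through $n=1,\ldots,N$ gives $I_{\alpha_B}(p,1-p)=I_{\alpha_{B_1}}(p,1-p)$ for every $p\in\opint01$. Since the map $\alpha\mapsto\bigl(p\mapsto I_\alpha(p,1-p)\bigr)$ is injective (distinct $\alpha$'s produce analytic functions with different second derivatives at $p=\tfrac12$), this forces $\alpha_B=\alpha_{B_1}$. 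Call the common value $\alpha$.

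For the third step, given a partition $\bold P=(P_1,\ldots,P_m)$ of $1_H$ with all $\rho(P_i)>0$, \mref{Lemma}{lem:BthroughPexists} produces $B\in\B$ and disjoint intervals $A_i\subset\clopint01$ of length $\rho(P_i)$ satisfying $B(A_i)=P_i$; substituting into $I_B=I_\alpha$ yields the asserted formula (with the von Neumann form when $\alpha=1$). Projections with $\rho(P_i)=0$ contribute nothing under the convention $0^\alpha=0$, and may be absorbed first by prepending a physically independent partition that splits off the degenerate factors. The main obstacle is the constancy argument in the second step: it relies crucially on the $\sim_2$-connectedness allowing a uniform comparison on every two-element mass split $(p,1-p)$, together with the injectivity of the Rényi family in~$\alpha$.
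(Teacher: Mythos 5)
Your proof is correct and follows essentially the same route as the paper: apply \mref{Theorem}{thm:Renyi} to each $I_B$ to get a parameter $\alpha(B)$, then use the $\sim_2$-connectedness from \mref{Theorem}{thm:connected} to show $\alpha(B)$ is independent of $B$, and read off the formula from one boolean structure. Your second step is in fact a more carefully spelled-out version of the paper's very terse one-line invariance claim, and the injectivity of $\alpha\mapsto I_\alpha(p,1-p)$ that you invoke is the right (and needed) supplementary fact.
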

		\begin{proof}
	According to \mref{Theorem}{thm:Renyi}, there is a number $\alpha(B)$ with $I_B(\bold A) = I_{\alpha(B)}(\bold A)$. Then $\alpha(B)$ is determined by the values of $I_B(\bold A)$ for $\bold A  = (A_1,\ldots, A_m)$, with
	$A_i \cap \clopint 0 {\tfrac 12} = \emptyset$. By \mref{Theorem}{thm:connected} there exists
	$\alpha=\alpha(B)$ independent from~$B$. \end{proof}

	A major difficulty crops up when we take on only weak assumptions on the continuity of the function~$I_B$.
	Our \mref{Theorem}{thm:main} gives an (almost) exhaustive answer.
	
	A particular case of our formula \eqref{eqn:thm:main} is
	\begin{equation}	\label{eqn:thm:main:particular}
		I(\bold P) = \sum \frac {\rho (E P_i E)}{\rho(E)} \log \frac1{\rho (P_i)}
	\end{equation}
	for a fixed projection $E \in \alg P$, $\rho(E) > 0$. Let us suppose that the measure of information
	contained in the outcome~$P_i$ of an experiment described by $\bold P$ is given by the number
	$\log \rho(P_i)$. %It is in fitting with the basic interpretation of von Neumann's information~\eqref{eqn:thm:vonNeumann}.
	Then the quantity \eqref{eqn:thm:main:particular} can be interpreted as a conditional avarage information
	when we know that the event $E$ has occured and we avarage with respect to the state
	$P \mapsto \rho(EPE)/\rho(E)$.
	
	It should be explained that \meqref {formula} {eqn:thm:vonNeumann} gives
	the entropy of quantum measurement in von Neumann's sense, see~\cite{N}. Such a measurement
	is described by the partition~$\bold P$.
	In the simplest case, when the initial state $\rho$ is simple i.e. $\rho =\widehat e$ or $\rho(\cdot) =
	\langle \cdot e | e\rangle$, the state after measurement is given by
	$\rho_{\bold P} = \sum \rho(P_i)\widehat {e_i}$,  for $\widehat {e_i}  = P_i e / \norm{P_i e}$.
	Then $I_1(\bold P)$ is the famous von Neumann \emph{entropy of the state} $\rho_{\bold P}$, (and
	thus it is an information given by measurement~$\bold P$).
	This quantity, which was introduced by  von Neumann in \cite{N}, was widely
	investigated what can be found in \cite{OP}, \cite{P}.

\section*{References}

\end{document}